\DeclareMathOperator*{\argmin}{\arg\!\min} 
\newtheorem{theorem}{Theorem}
\newtheorem{lemma}[theorem]{Lemma}
\newtheorem{proposition}[theorem]{Proposition}
\newenvironment{example}[1][Example]{\begin{trivlist}
                                     \item[\hskip \labelsep {\bfseries #1}]}{\end{trivlist}}
\theoremstyle{definition}
\newtheorem{remark}{Remark}[section]
\newcommand{\be}{\beta}
\newcommand{\lamVec}{\boldsymbol{\lambda}}
\newcommand{\ep}{\varepsilon}
\newcommand{\gen}{\mathcal{G}}
\newcommand{\hypSq}{\mathcal{L}}
\newcommand{\mono}{\mathcal{M}}
\newcommand{\sigmaSqj}{\frac{\ep^2}{\Delta_{nj}}}
\newcommand{\B}{\mathcal{B}}
\newcommand{\IAj}{\mathbf{1}_{A_j}}
\newcommand{\IAjc}{\mathbf{1}_{A_j^c}}
\newcommand{\tZ}{\tilde Z}
\newcommand{\parm}{T^2}
\newcommand{\lamMax}{\lambda_{\text{max}}}
\newcommand{\IAjt}{\mathbf{1}_{\tilde A_j}}
\newcommand{\avgY}{\overline{Y}_n}
\newcommand{\avgK}{\overline{K}_n}
\newcommand{\avgX}{\overline{X}_n}
\newcommand{\avgD}{\overline{D}_n}
\newcommand{\avgE}{\overline{\mathcal{E}}}
\newcommand{\avgB}{\overline{\mathbf{B}}_n}
\newcommand{\thetaS}{\theta^{\text{smooth}}}
\newcommand{\thetaP}{\theta^{\text{peaked}}}
\newcommand{\hthetaRidge}{\hat\theta_{\text{ridge}}}
\begin{document}

\begin{frontmatter}

\title{Efficient Estimators for Sequential and Resolution-Limited Inverse Problems}
\runtitle{Estimators for Sequential Inverse Problems}


\author{\fnms{Darren} \snm{Homrighausen}\ead[label=e1]{dhomrigh@andrew.cmu.edu}}
\address{Department of Statistics\\Colorado State University\\
Fort Collins, CO 80523\\\printead{e1}}
\and
\author{\fnms{Christopher R.} \snm{Genovese}\ead[label=e2]{genovese@stat.cmu.edu}}
\address{Department of Statistics\\Carnegie Mellon University\\
Pittsburgh, PA 15223\\\printead{e2}}
\affiliation{Carnegie Mellon University}

\runauthor{Homrighausen and Genovese}

\begin{abstract}
A common problem in the sciences is that a signal of interest is observed only indirectly, 
through smooth functionals of the signal whose values are then obscured by noise.
In such \emph{inverse problems},
the functionals dampen or entirely eliminate some of the signal's interesting features.
This makes it difficult or even impossible to fully reconstruct
the signal, even without noise.
In this paper, we develop methods for handling 
\emph{sequences} of related inverse problems,
with the problems varying either systematically or randomly over time.
Such sequences often arise with automated data collection systems,
like the data pipelines of large astronomical instruments such as the
Large Synoptic Survey Telescope (LSST).
The LSST will observe each patch of the sky many times over its lifetime
under varying conditions.
A possible additional complication in these problems is that the observational
resolution is limited by the instrument, so that even with many repeated observations,
only an approximation of the underlying signal can be reconstructed.
We propose an efficient estimator for reconstructing a signal of interest
given a sequence of related, resolution-limited inverse problems.
We demonstrate our method's effectiveness in some representative examples and
provide theoretical support for its adoption.
\end{abstract}


\begin{keyword}
\kwd{deconvolution}
\kwd{ill-posed}
\kwd{image processing}
\kwd{signal recovery}
\end{keyword}

\end{frontmatter}

\maketitle

\section{Introduction}
\label{sec:introduction}
In many applications, data about a signal of interest can only be indirectly gathered.
For instance, astronomical images from ground-based telescopes
are observed through the blurring caused by atmospheric turbulence;
Positron Emission Tomography (PET) scanners measure photon intensities averaged over lines;
and seismologists record the surface effects of earthquakes whose waves have been filtered by the Earth.
In these examples and other such \emph{inverse problems},
the basic measurements are smooth functionals of the signal
that dampen or entirely eliminate some of the signal's interesting features.
This makes it difficult, or sometimes impossible, to fully reconstruct the signal from noisy data.

Over the years, a number of methods have been developed for the recovery of a signal
under inverse problems.  We cannot hope to provide a comprehensive list, but see
\citet{osullivan1986,wahba1990,donoho1995wvd,tenorio2001,candes2002,cavalier2002b}
and the references contained therein for an introduction and
\citet{cavalier2008} for a modern
review of the state of the field.  Also,
many disciplines have developed specific techniques for addressing
particular issues, such as Astronomy \citep{starck2002,vanDyk2006} and Tomography \citep{olafsson2005}.

However, the above cited work provides techniques and theory for situations in which 
an estimate of a signal is formed after only one observation.
In many fields, recent technological advances have made it possible to automate data-collection, 
enabling repeated observations of the signal over time.
While repeated observations can improve accuracy,
it often raises new challenges as
the inverse problems faced at different times can vary significantly.
For example, the
Large Synoptic Survey Telescope (LSST),
a multi-year, Earth-based survey of the entire sky,
will image space to an unprecedented depth and will catalog billions of astronomical objects.
The LSST will take long sequences of images at each patch of sky, about 3 degrees on a side.
In each sequence, the images will be separated in time by approximately 3--4 days.
Each image in each sequence is taken with different
blurring and distortion conditions.
Thus, the viewing process represents sequences of related but distinct inverse problems.
One scientific goal is to use these images to reconstruct the signal, which
in this case
is comprised of the underlying celestial structures,  as accurately
as possible.

Notationally, we consider the following problem.
We want to recover information about an unknown signal
$\theta \in \mathbb{R}^p$ from measurements of the form
\begin{equation}
\mathbf{Y}_i = K_i \theta + \ep \mathbf{W}_i,  \quad{\rm for} \quad i = 1,2,\ldots.
\label{eq:DWNP}
\end{equation}
Here, each $\mathbf{Y}_i$ is a measured signal, such as an audio recording or a (vectorized) image
represented as a $p \times 1$ vector.
Each \emph{forward operator} $K_i$ describes the measurement process and
the $\mathbf{W}_i$'s are independent, mean zero Gaussian $p$-vectors with variance-covariance
matrix $I_p$, the order $p$ identity.

The $K_i$ represent both
the damping of the signal present in an inverse problem and the necessary discretization
due to the resolution-limited nature of most observational devices, most commonly
through pixelization.  The $K_i$'s are a priori unknown and hence must be measured and estimated.
As any information about the $K_i$'s comes from the observational device itself,
any estimate of the $K_i$'s are resolution-limited as well.  Therefore, we represent the measurement
process $K_i$ as a $p\times p$ matrix.
This captures the idea that, in many problems, the resolution is fixed
by the instrument and does not change as more data is collected (that is,
as $n\to \infty$).

An early formal consideration of the sequential inverse problem
is found in the literature on developing loss-less analogue-to-digital 
conversion techniques.  The recovery of the orginal, analogue
signal is an inverse problem
as there is not a unique analogue signal corresponding to each digital signal. This
result is formalized in the quantity referred to as the Nyquist rate, or frequency
\citep[Chapter 3]{mallet2009}.
If the signal is instead sampled multiple times at different, carefully chosen 
sampling rates, \citet{berenstein1990} 
and \citet{casey1994} find conditions under which
the original signal can be reconstructed in a loss-less way. 
Note that, as opposed to our paper, these approaches deal with
only the case where $\epsilon = 0$ and the $K_i$, which correspond to
the sampling rate, can be chosen
by the experimenter.

Subsequently, the sequential inverse problem is considered in a 
series of articles, beginning with  \citet{piana1996}, in which two methods are introduced.
The first corresponds to Tikonov-Phillips (TP) regularization (known in statistics
as ridge regression) 
adapted to the sequential problem.  The second is an iterative method based on
Landwieber iterations (LI).  See \citet{bertero1998} 
for an overview of the Landwieber iterations method in inverse problems.
Though the above methods have been successfully implemented in the past,
most notably in the software package AIRY \citep{correia2002}, 
it has two shortcomings: the methods correspond to restrictive choices among all possible
estimators and they
offer no automated method for choosing the introduced tuning parameters.

\begin{remark}
The Tikonov-Phillips and Landwieber Iteration 
methods can  readily be
derived by the formalism developed in this paper 
(see equations \eqref{eq:LIestimator} and \eqref{eq:TPestimator}).   Therefore, 
we get for free a principled method for setting the tuning parameters, as
well as a suite of new estimators.
\end{remark}

The  goal of this paper is to develop and investigate a statistically efficient
estimator of $\theta$ from the sequence of resolution-limited inverse problems
introduced in equation \eqref{eq:DWNP}.
We require that any estimator must satisfy the following:
(i)~it leaves no user-defined tuning parameters
and 
(ii)~the estimator $\hat\theta_n$ based on an $n$-sequence can be efficiently
updated to produce the estimator $\hat\theta_{n+1}$ after observing $\mathbf{Y}_{n+1}$.
Both requirements are particularly important in applications like the
LSST, where it is inconvenient (or impossible) to access the entire past data stream with each
new observation and hence the data must be processed in near real time.

In Section \ref{sec:avNotEnough} we discuss two reasonable approaches to this problem
based on collapsing the sequence of operators $(K_i)$ into one summary operator,
in one case averaging the operators and in the other concatenating them.
We show that both approaches do not satisfy conditions (i) and (ii).

\begin{example}[Satellite Imaging:]
To fix ideas, we introduce a typical instance where the observations form a 
a sequence of resolution limited inverse problems.  During
satellite imaging operations, a location on Earth is imaged many times over the life span of the
satellite.  The quality of the recorded observations can be low 
and variable due to changing atmospheric and/or weather conditions.  
See the left column of Figure \ref{fig:satelliteExample} for a representative
panel of four such images taken of the White House and surrounding buildings.  
Note that the amount of blurring in each image $i$, corresponding to the forward operators $K_i$,
can be very different.  However, the pixelization induced by the observational
device is fixed over the sequence of images.

Our proposed estimator $\hat\theta_n$ 
takes these images and sequentially creates a new estimate of the unknown signal $\theta$
after each observation $Y_i$ (right column of Figure \ref{fig:satelliteExample}).  Each row of
Figure \ref{fig:satelliteExample} is a new observation and $\hat\theta_n$ after being updated 
with that observation.
Notice that the recovery is quite good, even after only a few images as input.
We emphasize that there are no choices
to be made by the data analyst: all tuning parameters are chosen in an automatic, data-dependent way.
\begin{figure}
  \centering
  \subfloat{\includegraphics*[width=2.5in,height=1.5in,trim=80 130 80 90,clip]{./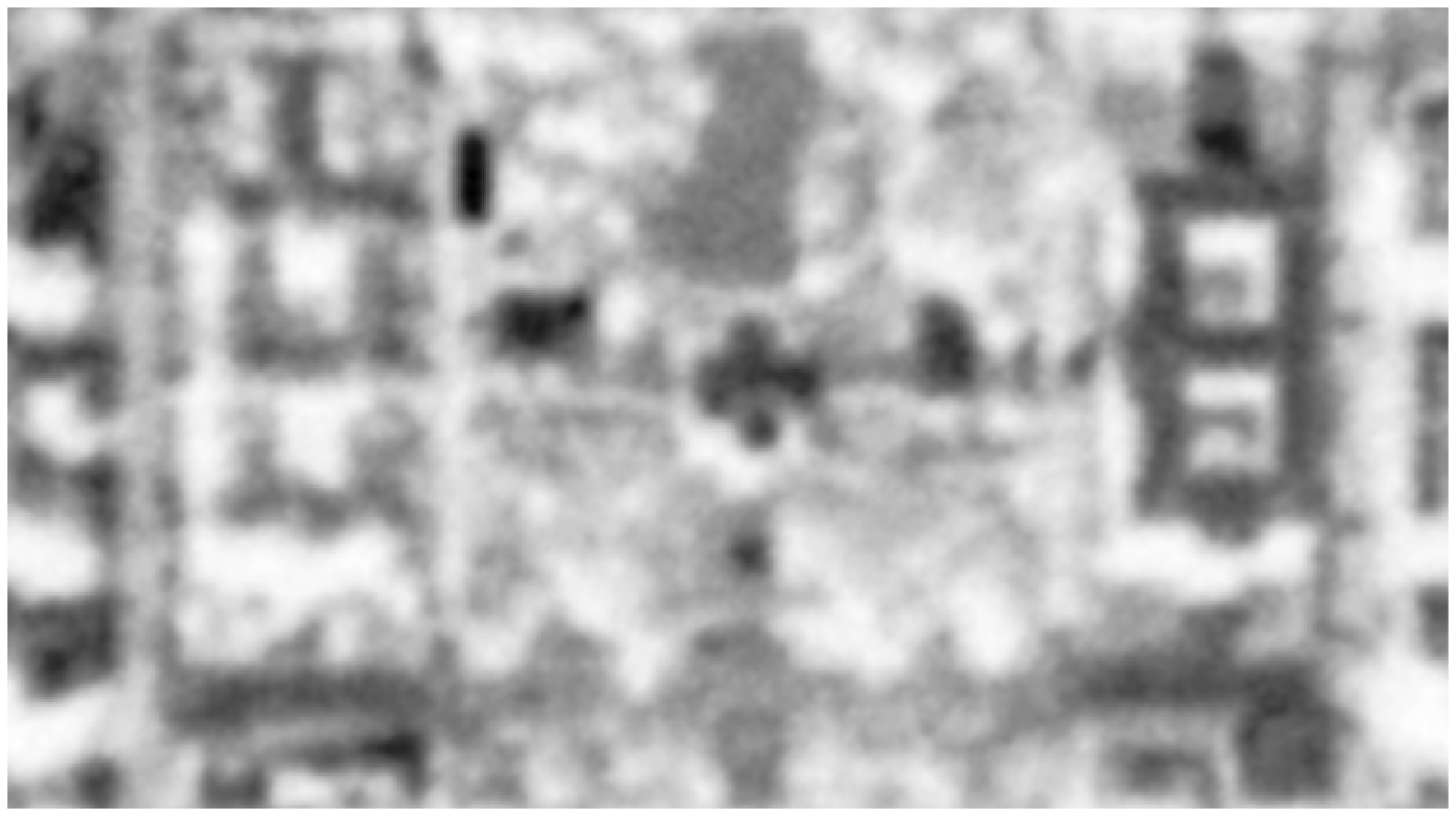} }
  \subfloat{\includegraphics*[width=2.5in,height=1.5in,trim=80 130 80 90,clip]{./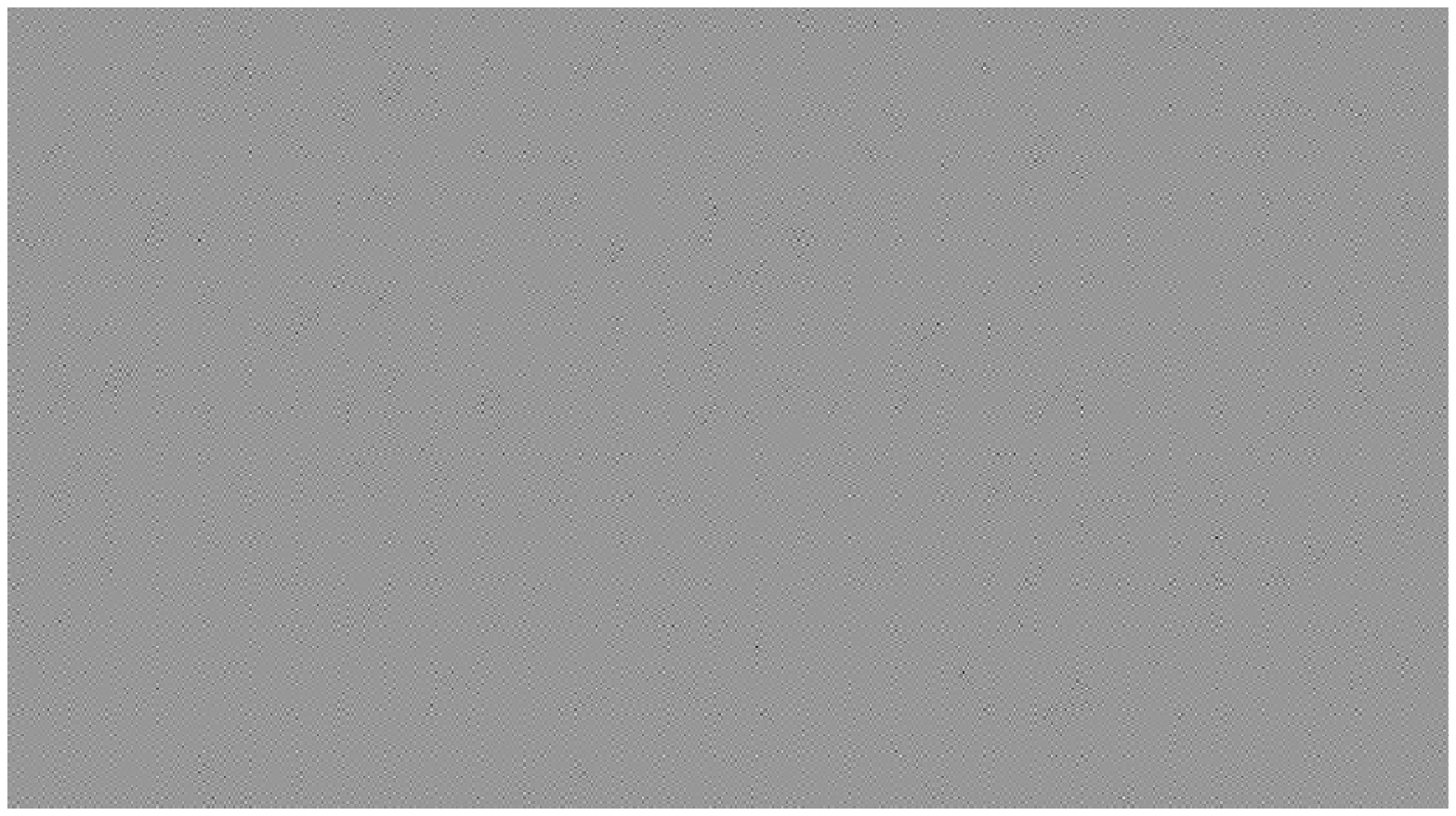} } \\
  \subfloat{\includegraphics*[width=2.5in,height=1.5in,trim=80 130 80 90,clip]{./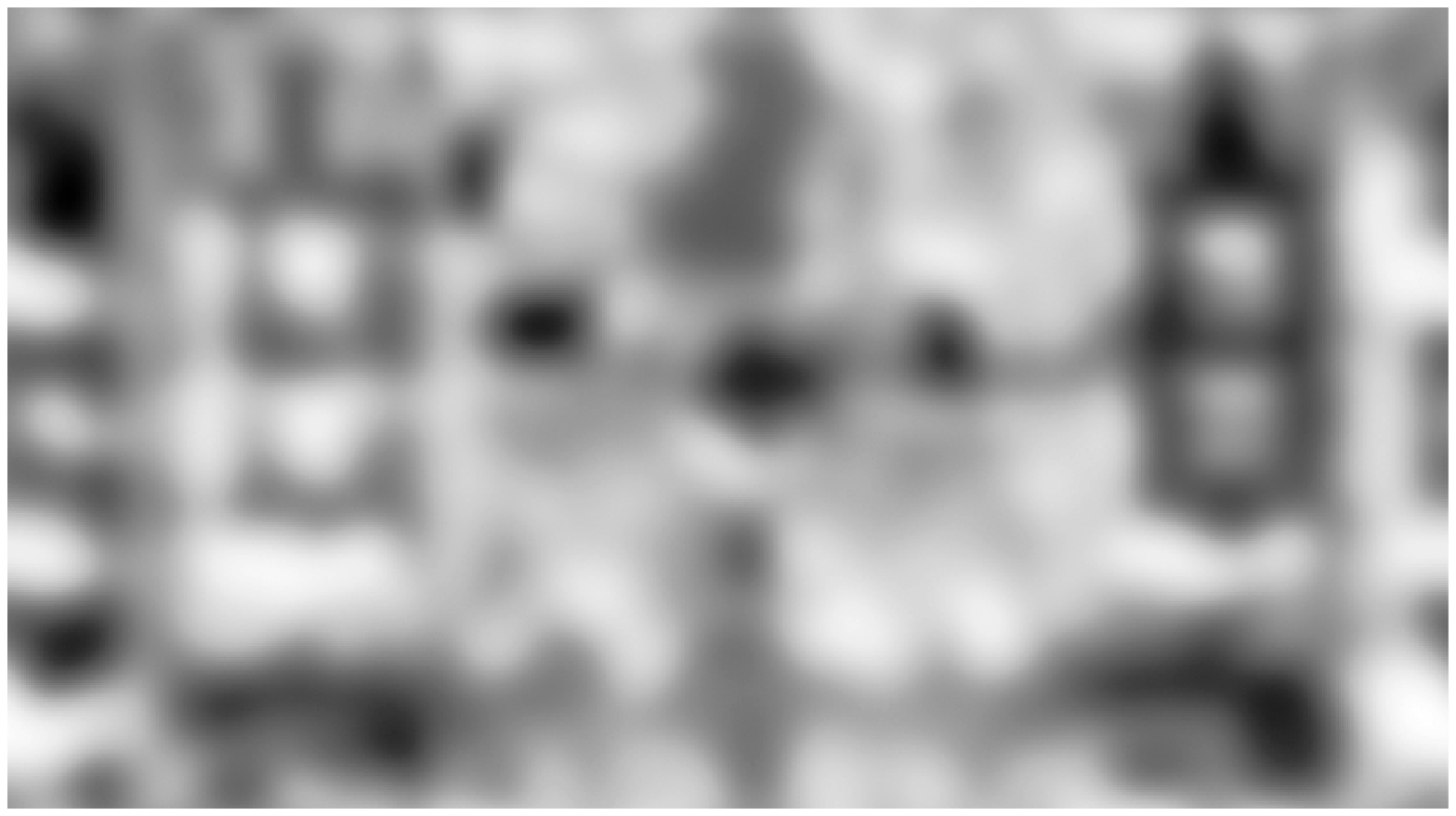} }
  \subfloat{\includegraphics*[width=2.5in,height=1.5in,trim=80 130 80 90,clip]{./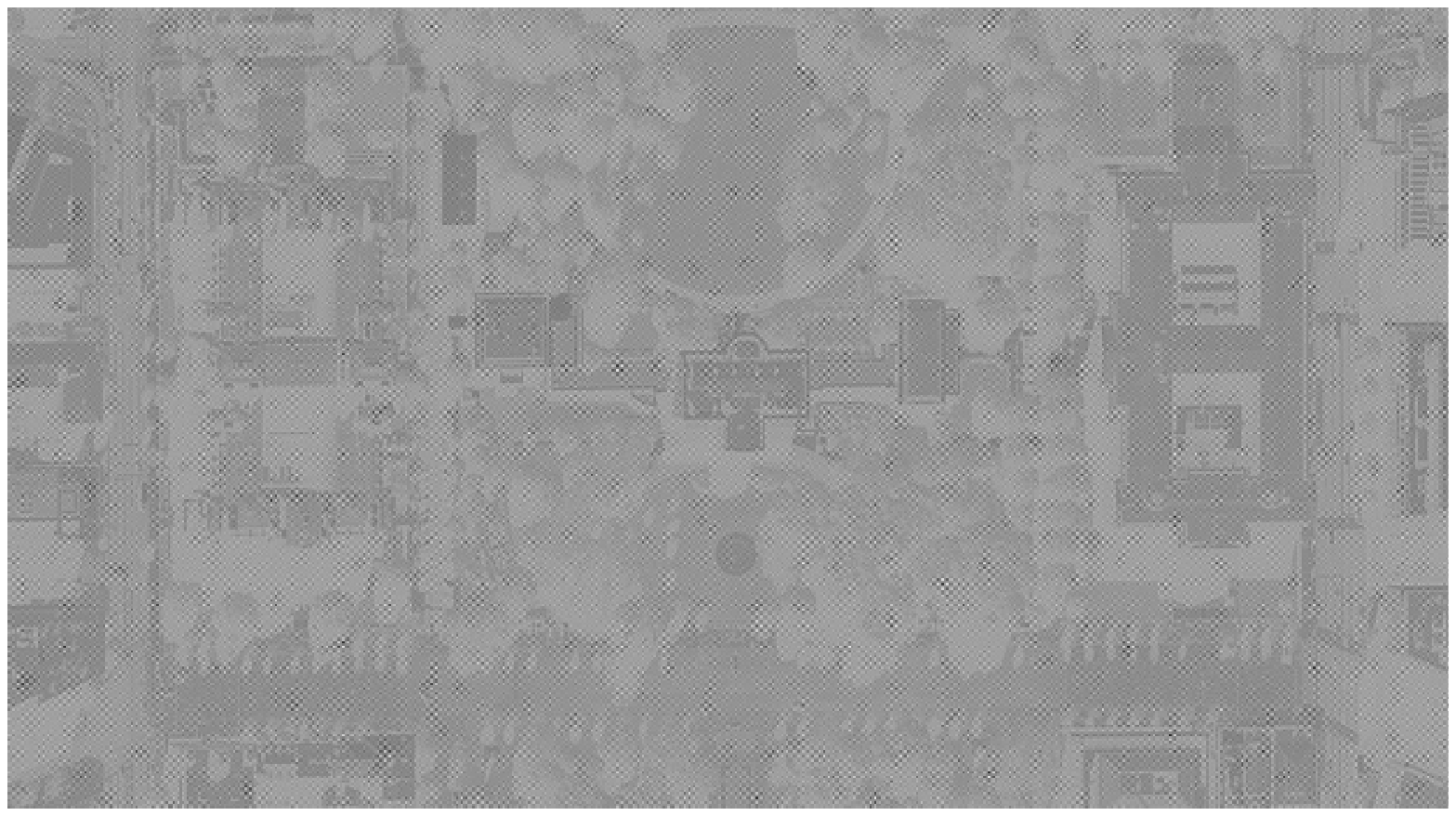} } \\
  \subfloat{\includegraphics*[width=2.5in,height=1.5in,trim=80 130 80 90,clip]{./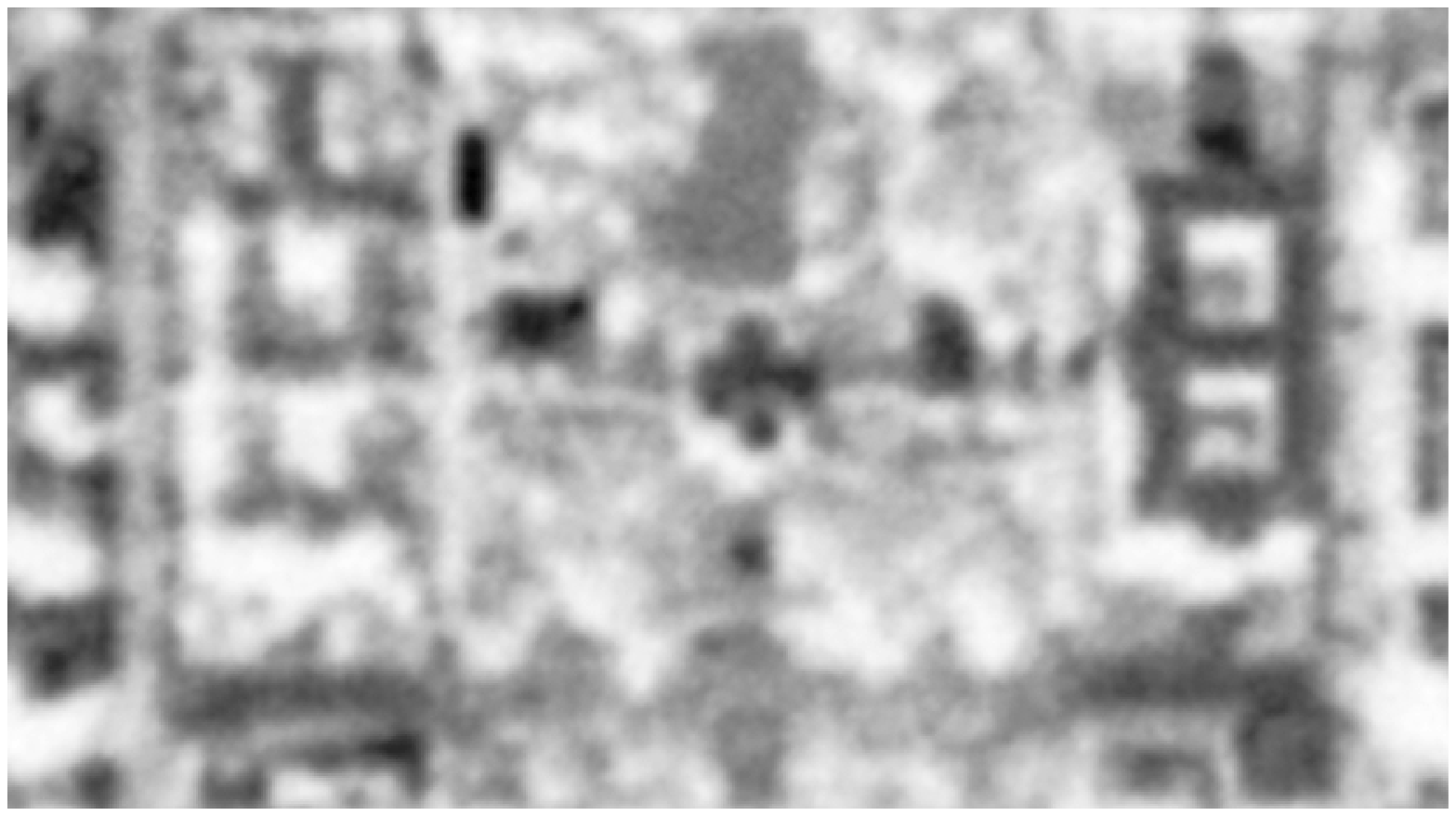}}
  \subfloat{\includegraphics*[width=2.5in,height=1.5in,trim=80 130 80 90,clip]{./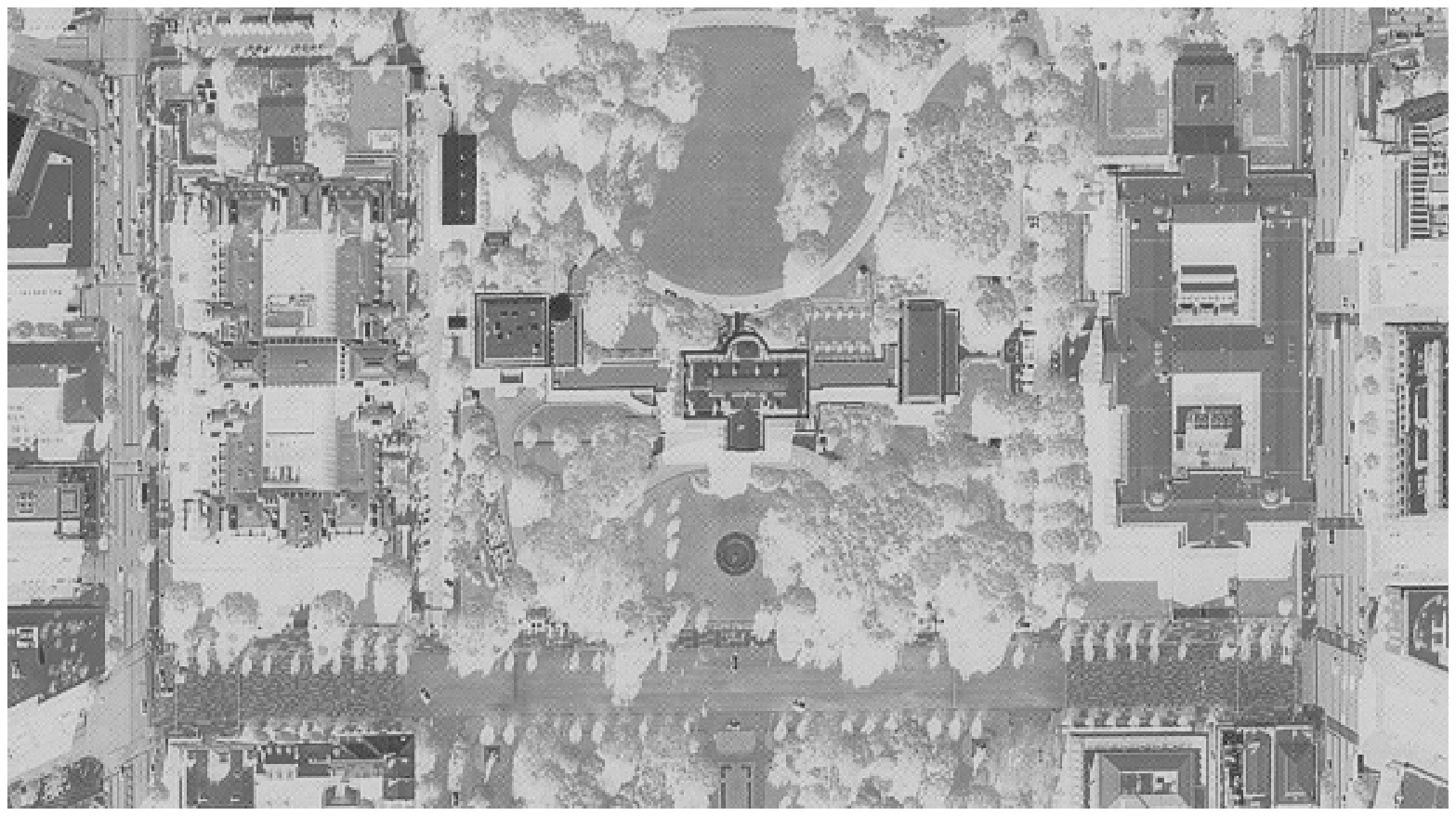}} \\
  \subfloat[Observations]{\includegraphics*[width=2.5in,height=1.5in,trim=80 130 80 90,clip]{./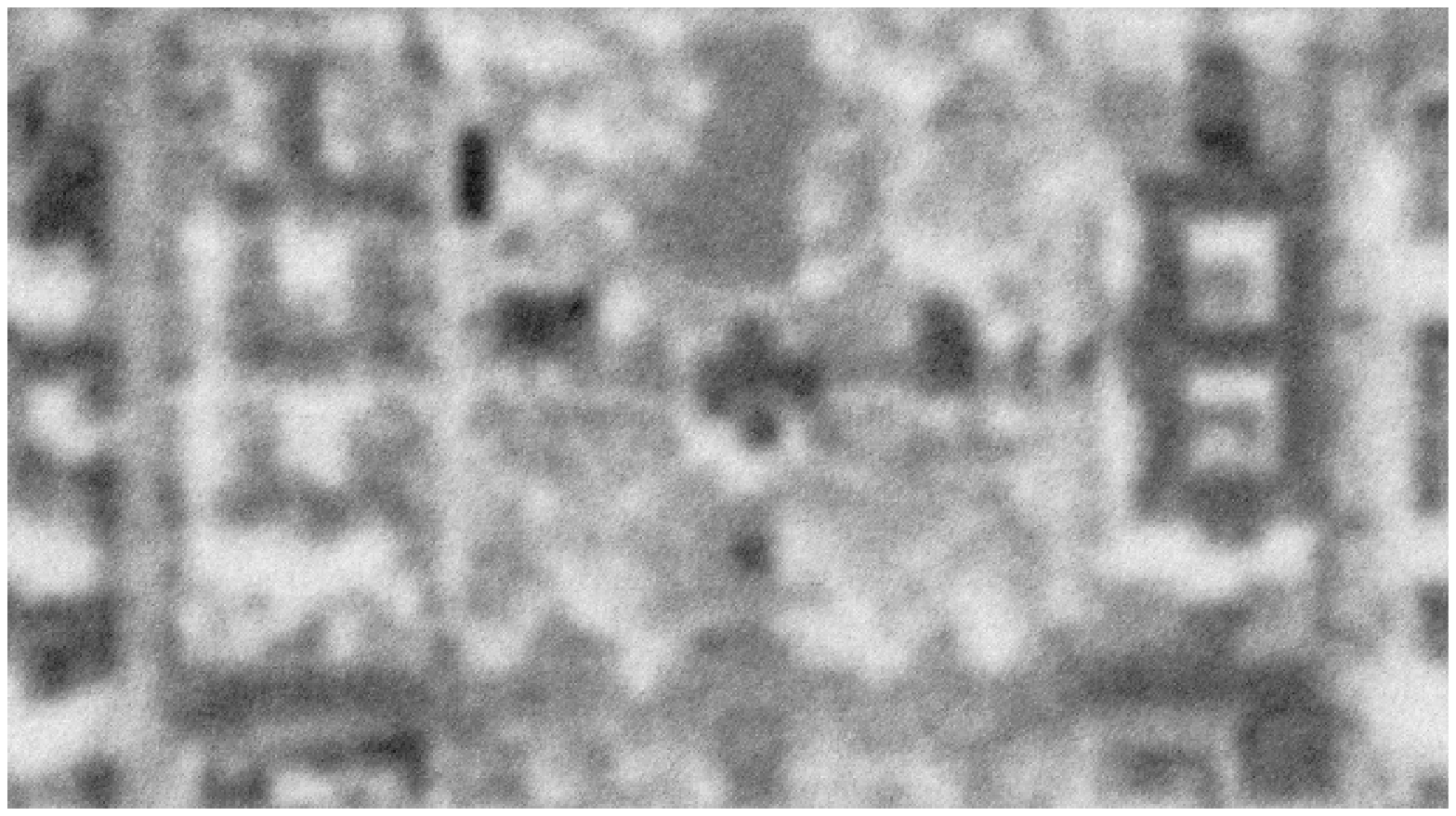} }
  \subfloat[Estimates]{\includegraphics*[width=2.5in,height=1.5in,trim=80 130 80 90,clip]{./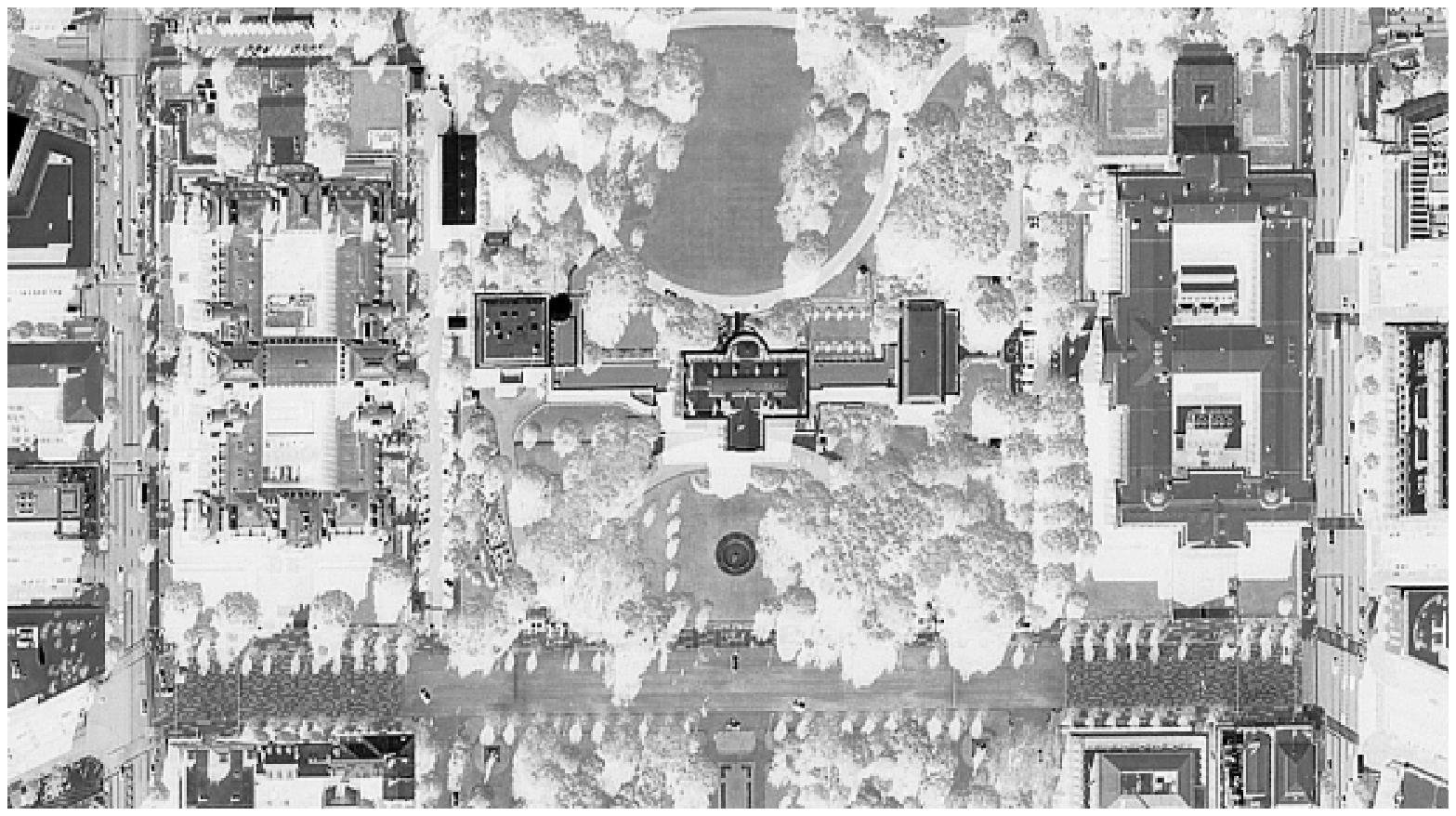} }
\caption{Example of images of the White House from a satellite and associated recovery of
the unknown signal $\theta$ using our proposed estimator.  In the left column (Observations)
the different amounts of blurring are due to varying atmospheric conditions and 
correspond to the forward operators $K_i$ in equation \eqref{eq:DWNP}. In the right column (Estimates)
we report the output of our estimator using the data in the left column.  Each row corresponds to making 
another observation $Y_i$ and updating our estimator with this new data. We emphasize that there are no choices
to be made by the data analyst; all tuning parameters are chosen in an automatic, data-dependent way.}
\label{fig:satelliteExample}
\end{figure}

\end{example}

This paper is organized as follows.  In Section \ref{sec:methodology}
we give a careful overview of our method and provide justification for 
the assumptions made, with greater exposition 
occurring in Appendix \ref{sec:sameSpectrum}.  In Theorem
\ref{thm:mainResult1} and Theorem \ref{thm:mainResult2}, we give supporting
theory for our estimator that both shows uniform consistency over
the parameter space and an asymptotic oracle inequality.  These results
show both that our estimator will get the correct answer eventually, no matter
the signal in our parameter space, and that our estimator makes essentially as efficient use
of the data as if we knew the signal $\theta$.
In Section \ref{sec:supportingExamples}, we provide an
example of our framework in action on simulated data.

{\it Notation.}  For $A \in \mathbb{C}^{p \times p}$
and $a \in \mathbb{C}$, define $A^*$ to be the Hermitian adjoint of $A$.  Correspondingly, define 
$|a|^2 = a^*a$ and $|A|^2 = A^*A$ to be the squared complex modulus of a scalar and matrix, respectively.
Likewise, for any vector $x \in \mathbb{C}^p$, $||x||^2 = x^*x$.
If $A A^* = I_p = A^*A$, then we say that $A$ is unitary.  
We utilize bold faced font for vectors: $\mathbf{b}_n \in \mathbb{C}^p$ with its $j^{th}$ entry
notated $b_{nj}$ and the subscript $n$ indicates dependence on the sample size. Similarly,
$A_{nj}$ is the $j^{th}$ element of the main diagonal of the matrix $A_n$.
We abuse notation slightly by using $\lamVec$ as both a vector in $\mathbb{C}^p$ and as a
function from $\mathbb{C}^p$ to $\mathbb{C}^p$ given by component-wise multiplication.

\section{Methodology and main results}
\label{sec:methodology}
We begin this section by making
the following assumptions, building on the notation introduced in
equation (\ref{eq:DWNP}):

\begin{itemize}
\item[(A1)] The noise parameter $\ep > 0$ is known.
\item[(A2)] The $(K_i)_{i=1}^n$ are known smoothing matrices.
\item[(A3)] There exists a unitary matrix
            $\Psi \in \mathbb{C}^{p \times p}$ 
            and diagonal matrices $D_i$
            such that $K_i = \Psi D_i \Psi^*$ for all $i = 1, \ldots, n,\ldots$.
\item[(A4)] There exists an $N<\infty$ such that for all $j$ there exists an $1 \leq i_* \leq N$ 
            such that $|D_{i_*j}| > 0$.
\item[(A5)] Define $\Delta_n := \sum_{i=1}^n |D_{ij}|^2$. Then the $(D_i)$ are such that
            \begin{equation*}
            \lim_{n\rightarrow \infty} \frac{\max_j \Delta_{nj}}{\min_j \Delta_{nj}} < \infty.
            \end{equation*}
\end{itemize}

Assumptions (A1) and (A2) are very standard in the statistical inverse problem literature.
We discuss a strategy for estimating $\ep$ in Section \ref{sec:estimatingVariance}.
Assumption (A4) is also commonly made and it ensures that, at some point,
the entire signal $\theta$ is identified and loosely corresponds to the intersection
of the null spaces of the $(K_i)_{i=1}^n$ eventually only containing the zero vector.
Assumption (A5) merely prevents a pathological case where the $K_i$ are becoming more
ill-conditioned without bound as $n\rightarrow\infty$.  
Assumption (A3) is crucial to our method
and while the reason for it will become clear, the following theorem
provides a general family of matrices that satisfy it:

\begin{theorem}
\label{thm:sameEvecsMain}
If the $(K_i)_{i=1}^n$ all correspond to the convolution operation,
then there exists a unitary matrix $\Psi$ and
a sequence of diagonal matrices $(D_i)_{i=1}^n$, all of which could have complex entries,
such that (A3) holds. If
$\theta$ is a one (two)-dimensional signal, then the $K_i$ are (block) circulant and
the entries of the matrix
$\Psi$ are the discrete one (two)-dimensional Fourier basis and
the entries of $D_i$ are the corresponding
discrete one (two)-dimensional Fourier coefficients.
\end{theorem}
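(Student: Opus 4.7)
The plan is to reduce the claim to the classical fact that (block) circulant matrices are simultaneously diagonalized by the (2D) discrete Fourier basis, and that they are the matrix representation of discrete circular convolution. Since each $K_i$ is $p \times p$ and is required to represent a convolution, the only compatible discrete convolution is the circular one, which is exactly what makes $K_i$ a circulant (1D) or block-circulant-with-circulant-blocks (2D) matrix.

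First I would fix the 1D case. Let $k_i \in \mathbb{C}^p$ be the kernel of the $i$-th convolution, so that $(K_i\theta)_m = \sum_{\ell=0}^{p-1} k_{i,(m-\ell)\bmod p}\,\theta_\ell$. This shows that $K_i$ has entries $(K_i)_{m\ell}=k_{i,(m-\ell)\bmod p}$, which is the definition of a circulant matrix. I would then write $K_i = \sum_{r=0}^{p-1} k_{i,r} S^r$, where $S$ is the cyclic shift permutation, and reduce simultaneous diagonalization to the diagonalization of $S$. A direct computation shows $S\psi_j = \f^{-j}\psi_j$, where $\psi_j$ is the $j$-th column of the DFT matrix $\Psi \in \mathbb{C}^{p\times p}$ with $\Psi_{m j} = p^{-1/2} e^{2\pi i m j/p}$ and $\f = e^{2\pi i/p}$. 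Hence each $\psi_j$ is an eigenvector of every $K_i$ with eigenvalue $D_{ij} = \sum_r k_{i,r}\f^{-jr}$, which is precisely the $j$-th DFT coefficient of the kernel $k_i$. Unitarity of $\Psi$ is standard, giving $K_i = \Psi D_i \Psi^*$ with $\Psi$ independent of $i$, which verifies (A3).

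For the 2D case, I would vectorize a signal of shape $p_1 \times p_2$ (with $p=p_1 p_2$) in the usual column-stacking order. Two-dimensional circular convolution by a kernel $k_i$ then acts on the vectorized signal via a block circulant matrix whose blocks are themselves circulant, i.e., $K_i = \sum_{r_1,r_2} k_{i,r_1 r_2}\,(S_1^{r_1}\otimes S_2^{r_2})$, where $S_1, S_2$ are the cyclic shifts of sizes $p_1$ and $p_2$. Taking $\Psi = \Psi_1 \otimes \Psi_2$ (a Kronecker product of the 1D DFT matrices), which is unitary because it is a tensor product of unitaries, and using the eigenstructure of $S_1,S_2$ from the 1D case together with $(A\otimes B)(u\otimes v) = (Au)\otimes(Bv)$, each column of $\Psi$ is an eigenvector of every $K_i$. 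The corresponding diagonal entries of $D_i$ are the 2D DFT coefficients of $k_i$, as claimed.

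The only step that requires real care is matching the discrete convention to the circulant/block-circulant structure: one must pin down the boundary handling (circular, forced by $K_i$ being $p\times p$) and the vectorization order used in the 2D case, so that the Kronecker factorization $\Psi_1 \otimes \Psi_2$ matches the block structure of $K_i$. Everything else is a direct eigenvalue calculation on the cyclic shift and a standard tensor-product argument, and assumption (A3) follows with $\Psi$ the (tensor product of) DFT matrices and $D_i$ the (2D) DFT of $k_i$.
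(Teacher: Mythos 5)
Your proof is correct, but it takes a genuinely different route from the paper's. The paper proves this result abstractly: it first establishes (Lemma in Appendix A) that any commuting family of normal matrices is simultaneously unitarily diagonalizable, using the Schur unitary triangularization theorem together with the fact that a triangular normal matrix is diagonal, and then simply cites the fact that all circulant matrices commute. You instead argue concretely: you represent each $K_i$ as a polynomial $\sum_r k_{i,r} S^r$ in the cyclic shift $S$, diagonalize $S$ explicitly by the DFT columns, and read off the eigenvalues as the DFT coefficients of the kernel, with the 2D case handled by Kronecker products of shifts and of 1D DFT matrices. The paper's approach buys generality --- the same lemma covers any commuting family of normal forward operators, which the appendix uses to motivate (A3) beyond convolution --- but it only establishes existence of a common unitary $\Psi$ and leaves the identification of $\Psi$ with the Fourier basis and of $D_i$ with the Fourier coefficients of the kernel (the second sentence of the theorem) essentially implicit. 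Your computation establishes that identification directly, and your attention to the circular-boundary convention and to the vectorization order in 2D is exactly the care needed to make that identification rigorous; the trade-off is that your argument is specific to convolution and would not extend to other commuting families without redoing the work.
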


Hence, we see that (A3) is more general than the convolutional 
assumption made in \cite{piana1996} and many other works concerning statistical inverse
problems.  
See Appendix \ref{sec:sameSpectrum} for a proof of Theorem \ref{thm:sameEvecsMain} 
and an investigation
into more general families of matrices that satisfy assumption (A3).

\subsection{Overview and main results}
An overview of our procedure is as follows.
The parameter $\theta$ and each observation $\mathbf{Y}_i$ 
us rotated by $\Psi^*$.
The rotated $\mathbf{Y}_i$'s are combined together to form
a sufficient statistic $\mathbf{B}_n$.
The estimators we consider
are of the form $\hat \theta = \Psi\lamVec(\mathbf{B}_n) := \Psi(\lambda_j B_{nj})_{j=1}^p$.
Define this set of estimators to be
\begin{equation}
\mathcal{E} = 
\{ \hat \theta = \Psi \lamVec (\mathbf{B}_n) : \lamVec \in \mathbb{C}^p\}.
\label{eq:mathcalE}
\end{equation}
We choose from the estimators in $\mathcal{E}$ using a combination of minimizing an empirical estimator of the risk
and some additional regularization parameters.  
Define our estimator to be $\hat \theta_n = \Psi \hat\lamVec(B_n)$, where
\begin{equation}
\hat \lambda_j =
\left(1 - \frac{\Omega_n^2\ep^2}{\Delta_{nj}|B_{nj}|^2} \right)_{+}.
\end{equation}
The form of this estimator is derived in the text containing and preceding 
equation \eqref{eq:mainEstimator}. We set 
$\Omega_n^2 := (p-2)\left(1 + \frac{\max_j \Delta_{nj}}{\min_j \Delta_{nj}}\right)$.
Note this choice of $\Omega_n^2$ is motivated by \citet{brown2011} in which it 
is shown that ensemble minimaxality in the heteroscedastic case holds for 
the soft thresholded James-Stein type estimator.

We  define our loss function to be the $l^2$ norm 
with associated risk
\begin{equation}
R(\hat \theta,\theta) := \mathbb{E}||\hat \theta - \theta ||^2
\label{eq:thetaRisk}
\end{equation}
and set 
$\Theta := \{\theta: ||\theta||_2^2 \leq \parm\}$ 
for any $0 < \parm < \infty$. Then
\begin{theorem} 
Under assumptions (A1) - (A5),
\begin{equation}
\limsup_{n \rightarrow \infty} \, \sup_{\theta \in \Theta} \, \gamma_n^{-1}
R\left(\hat \theta_n,\theta \right) < C < \infty
\end{equation} 
where 
\[
\gamma_n = \max_j \frac{\ep^2}{\Delta_{nj}}.
\]
\label{thm:mainResult1}
\end{theorem}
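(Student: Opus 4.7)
The plan is to diagonalize the problem using assumption (A3), reduce to a heteroscedastic Gaussian sequence model, and bound the coordinate-wise risk by a two-way case split on the soft-threshold event. Under (A3), $\Psi$ is unitary, so $\|\hat\theta_n-\theta\|^2 = \|\hat\lamVec(\mathbf{B}_n) - \beVec\|^2$ where $\beVec := \Psi^*\theta$. Pooling the rotated observations $\Psi^*\mathbf{Y}_i$ with weights $D_{ij}^*$ yields a sufficient statistic with $\mathbb{E}B_{nj} = \beta_j$ and $\mathbb{E}|B_{nj}-\beta_j|^2 = \ep^2/\Delta_{nj}$ independently across $j$. Write $\sigma_j^2 := \ep^2/\Delta_{nj}$, so $\gamma_n = \max_j \sigma_j^2$. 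By (A5), the ratio $\max_j\sigma_j^2/\min_j\sigma_j^2$ stays bounded in $n$, and consequently $\Omega_n^2$ stays bounded by a constant depending only on $p$ and the limiting ratio.

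Next I would bound the per-coordinate risk $r_j := \mathbb{E}|\hat\lambda_j B_{nj}-\beta_j|^2$ by splitting on the event $E_j := \{|B_{nj}|^2 > \Omega_n^2\sigma_j^2\}$. On $E_j$ we have $\hat\lambda_j B_{nj} - \beta_j = (B_{nj}-\beta_j) - (\Omega_n^2\sigma_j^2/|B_{nj}|^2)B_{nj}$, and the second summand has squared modulus at most $\Omega_n^2\sigma_j^2$ on this event; the inequality $|a-b|^2 \le 2|a|^2+2|b|^2$ then bounds the $E_j$-contribution by $2\sigma_j^2 + 2\Omega_n^2\sigma_j^2$ after taking expectations. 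On $E_j^c$ the estimator vanishes, giving a contribution of exactly $|\beta_j|^2\mathbb{P}(E_j^c)$.

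To handle $|\beta_j|^2\mathbb{P}(E_j^c)$ I would split again according to whether $|\beta_j|^2 \le 2\Omega_n^2\sigma_j^2$ or not. In the first subcase the contribution is trivially at most $2\Omega_n^2\sigma_j^2$. In the second subcase, $E_j^c$ forces $|B_{nj}-\beta_j| \ge (1-2^{-1/2})|\beta_j|$, so Chebyshev's inequality yields $\mathbb{P}(E_j^c) \le c\sigma_j^2/|\beta_j|^2$, and thus $|\beta_j|^2\mathbb{P}(E_j^c) \le c\sigma_j^2$. Combining the pieces gives $r_j \le C(1+\Omega_n^2)\sigma_j^2 \le C'\gamma_n$ with $C'$ independent of $\theta$ and of $n$ (for $n$ large). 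Summing over $j=1,\dots,p$ produces $R(\hat\theta_n,\theta) \le pC'\gamma_n$ uniformly in $\theta\in\Theta$, which is the desired bound.

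The main obstacle is the event $E_j^c$ when $|\beta_j|^2$ is moderate relative to $\Omega_n^2\sigma_j^2$: the estimator zeroes out a genuinely nonzero signal, and a naive bound $|\beta_j|^2 \le \parm$ does not give the $\gamma_n$ rate because $\parm$ does not shrink with $n$. The Chebyshev step above converts $\mathbb{P}(E_j^c)$ into a factor $\sigma_j^2/|\beta_j|^2$ that exactly cancels $|\beta_j|^2$ and restores the rate, while (A5) is precisely what prevents $\Omega_n^2$ from inflating the $E_j$-contribution.
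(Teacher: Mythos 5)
Your proof is correct and reaches the stated bound, but the decisive step differs from the paper's. Both arguments share the same skeleton: pass to the sequence model via $\Psi^*$, split the $j$-th loss on the thresholding event, bound the on-event term in expectation by a (noise plus threshold) squared quantity of order $\sigma_j^2(1+\Omega_n^2)$, and handle the zeroed-out term $|\be_j|^2\mathbb{P}(E_j^c)$ by a further case split on whether $|\be_j|^2$ exceeds twice the threshold. Where you diverge is in the large-$|\be_j|$ subcase: the paper diagonalizes the real $2\times 2$ covariance of the complex noise, invokes Lemma \ref{lemma:useMaxVar} (which it states without formal proof) to replace the elliptical Gaussian by a spherical one at the maximal variance, and then bounds $\sup_t (1-\Phi(t))(t+1)^2$ by appeal to a plot. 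You instead observe that on $E_j^c$ one has $|B_{nj}-\be_j|\ge (1-2^{-1/2})|\be_j|$ and apply Chebyshev, so that $\mathbb{P}(E_j^c)\le c\,\sigma_j^2/|\be_j|^2$ cancels the $|\be_j|^2$ factor exactly. Your route is more elementary, uses only the second moment $\mathbb{E}|B_{nj}-\be_j|^2=\ep^2/\Delta_{nj}$ rather than Gaussianity, and sidesteps both the unproved lemma and the figure-based bound; the Gaussian-tail route would only matter if one wanted sharper (e.g., exponentially small) control of $\mathbb{P}(E_j^c)$, which the theorem does not require. One small caveat: your parenthetical claim that the coordinates are independent across $j$ is neither needed (the risk is additive over coordinates regardless of dependence) nor guaranteed, since the relation matrix $\Psi^*\overline{\Psi}$ of the rotated noise need not be diagonal; you should drop that remark.
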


If $D_{ij} \equiv D_j$ for some $D_j \in \mathbb{C}$, then
$\gamma_n \asymp 1/n$; that is the parametric rate.  However, the forward operators $(K_i)$
in effect ensure that each observation doesn't decrease the risk equally.  The quantity $\Delta_{nj}$
relates to how much information is present in the first $n$ observations about the $j^{th}$ component
of $\Psi^*\theta$.  

Additionally, we can compare our estimator to the $\mathcal{E}$-oracle $\theta_*$

\begin{theorem}
Suppose assumptions (A1) - (A5) and let
\[
R(\theta_*,\theta) := \min_{\hat \theta \in \mathcal{E}} R(\hat \theta,\theta)
\] 
be the risk of the $\mathcal{E}$-oracle.  Then
\begin{equation}
R\left(\hat \theta_n,\theta \right) \leq R(\theta_*,\theta)(1 + O(1)),
\end{equation} 
where the term $O(1)$ does not depend on $\theta$.
\label{thm:mainResult2}
\end{theorem}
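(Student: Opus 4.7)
The plan is to use the unitarity of $\Psi$ to reduce the problem to a heteroscedastic Gaussian sequence problem on the rotated (``spectral'') scale, and then invoke the oracle inequality for soft-thresholded James--Stein shrinkage established by \citet{brown2011}.

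First, since $\Psi$ is unitary and every $\hat\theta \in \mathcal{E}$ has the form $\Psi\lamVec(\mathbf{B}_n)$, writing $\beVec = \Psi^*\theta$ and diagonalizing via (A3) gives the coordinate-wise decomposition
\[
R(\hat\theta, \theta) = \sum_{j=1}^{p} \mathbb{E}|\lambda_j B_{nj} - \beta_j|^2.
\]
By the construction of $\mathbf{B}_n$ preceding (\ref{eq:mainEstimator}), each $B_{nj}$ is a (complex) Gaussian with mean $\beta_j$ and variance $\sigmaSqj$. The $\mathcal{E}$-oracle decouples across $j$ into the coordinate-wise bias--variance tradeoff $|\lambda_j|^2 \sigmaSqj + |1-\lambda_j|^2 |\beta_j|^2$, giving the ideal linear shrinkage $\lambda_j^* = |\beta_j|^2/(|\beta_j|^2 + \sigmaSqj)$ with oracle risk
\[
R(\theta_*,\theta) = \sum_{j=1}^{p} \frac{\sigmaSqj \, |\beta_j|^2}{|\beta_j|^2 + \sigmaSqj}.
\]

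Second, I would recognize $\hat\lambda_j = (1 - \Omega_n^2 \sigmaSqj / |B_{nj}|^2)_{+}$ as the positive-part, soft-thresholded James--Stein shrinker applied to the heteroscedastic sequence $(B_{nj})$, with $\Omega_n^2 = (p-2)(1 + \max_j \Delta_{nj}/\min_j \Delta_{nj})$ playing the role of the heteroscedastic correction of \citet{brown2011}. Applying their ensemble oracle inequality in the present diagonalized setting yields
\[
R(\hat\theta_n, \theta) \leq C_n \sum_{j=1}^{p} \frac{\sigmaSqj\,|\beta_j|^2}{|\beta_j|^2 + \sigmaSqj} = C_n\, R(\theta_*,\theta),
\]
for a constant $C_n$ depending only on $p$ and on the ratio $\max_j \Delta_{nj}/\min_j \Delta_{nj}$, and in particular independent of $\theta$. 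Assumption (A5) then yields $\sup_n C_n < \infty$, which is exactly the claimed $1+O(1)$ factor.

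The main obstacle is verifying that the oracle inequality of \citet{brown2011}, stated for real Gaussian sequences, transfers to the complex-valued setting with the specific multiplier $\Omega_n^2$ above. The complex-to-real passage is essentially cosmetic, since a complex Gaussian splits into two independent real Gaussians; the more delicate step is re-executing Stein's unbiased risk estimate on the diagonalized sequence and tracking the multiplicative factor picked up by the non-uniform variances $\sigmaSqj$. This is precisely where the correction $1 + \max_j \Delta_{nj}/\min_j \Delta_{nj}$ enters. Once that calculation is in place, the remaining steps are the bookkeeping carried out in step one.
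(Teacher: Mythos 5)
Your first step is correct and matches the paper: unitarity of $\Psi$ reduces everything to the heteroscedastic sequence model $B_{nj} = \be_j + \ep\Delta_{nj}^{-1/2}Z_j$, and the $\mathcal{E}$-oracle decouples coordinatewise into $R(\theta_*,\theta) = \sum_j \sigmaSqj|\be_j|^2/(|\be_j|^2 + \sigmaSqj)$. The problem is your second step, which is where the entire content of the theorem lives and which you replace with a citation that does not deliver what you need. The result of \citet{brown2011} that this paper invokes is \emph{ensemble minimaxity} of the soft-thresholded James--Stein estimator in the heteroscedastic case --- a minimax statement over parameter balls --- not a pointwise oracle inequality of the form $R(\hat\theta_n,\theta) \leq C\,R(\theta_*,\theta)$ against the full coordinatewise linear class $\mathcal{E}$ (a far stronger oracle than the single-shrinkage-factor comparator in classical James--Stein analyses). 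The paper itself only uses \citet{brown2011} to motivate the choice of $\Omega_n^2$; it does not, and cannot, borrow the oracle inequality from there. You acknowledge as much in your closing paragraph, where you describe the transfer and the ``re-executing Stein's unbiased risk estimate'' step as an obstacle still to be overcome --- but that step \emph{is} the proof.

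What the paper actually does, and what your proposal is missing, is a direct pathwise bound: expand $|\hat\be_{nj} - \be_j|^2$ exactly on the event $A_j = \{|B_{nj}|^2 > s_{nj}^2\}$ and its complement (with $s_{nj}^2 = \Omega_n^2\ep^2/\Delta_{nj}$), obtaining
\begin{equation*}
|\hat\be_{nj} - \be_j|^2 \leq |Z_{nj}|^2 \;+\; \left(\frac{s_{nj}^2|\be_j|^2}{s_{nj}^2 + \Omega_n^2|\be_j|^2}\right)\IAj\left(\frac{s_{nj}^2 + \Omega_n^2|\be_j|^2}{|\be_j + Z_{nj}|^2}\right) \;+\; \IAjc|\be_j|^2 .
\end{equation*}
The first and third terms are handled as in Theorem \ref{thm:mainResult1} and are of the same order as the oracle risk; the middle term factors as (oracle risk term) $\times\, G_{nj}$, and the heart of the argument is showing $\sup_{\be}\mathbb{E}G_{nj} = O(1)$ via a two-case analysis on whether $|||\tilde\be_j|||^2 \leq s_{nj}^2$, a rotation diagonalizing the covariance of the complex noise, Lemma \ref{lemma:useMaxVar}, and a tail-integration bound on $\mathbb{P}(G_{nj}>\tau)$. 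None of this bookkeeping is ``cosmetic''; without it (or a genuine substitute), the claimed inequality is asserted rather than proved.
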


An interesting extension of this model is to the random operator setting.  That is,
what is the impact of having $K_i$ being drawn from some distribution?  We 
answer this question in an interesting case.

\subsection{Random Eigenvalues}
Suppose that the $(K_i)$ are random operators such that 
$K_i = \Psi D_i \Psi^*$ for all $i= 1,2,\ldots$
and $\text{diag}(D_i) \stackrel{i.i.d}{\sim} \mathcal{D}$, where $\mathcal{D}$ 
is any $p$-variate complex distribution
that doesn't have too much mass near zero.
Specifically,


\begin{itemize}
\item[(B4)] The distribution $\mathcal{D}$ is such that 
            there exists an $a$ where for $0 \leq \tau \leq a$
            \[
            \mathbb{P}_{\mathcal{D}}\left(|D_{1j}|^2 < \tau\right) = (\tau)^\rho.
            \]
\end{itemize}

This is a stochastic extension of assumption (A4) as it allows the random 
eigenvalues to be arbitrarily close to zero in magnitude but
with the probability of them being small going to zero at an appropriate rate.
Lastly, let $(W_i)$ and $(D_i)$ be mutually independent.

\begin{theorem}
Suppose assumption (B4) holds with some $\rho > 1$. Then
\begin{equation}
\lim_{n\rightarrow\infty} \sup_{\theta \in \Theta} 
\mathbb{E}_{(D_i),(Y_{ij})} \left|\left|\hat \theta_n - \theta\right|\right|^2 = 0
\end{equation}
where $\mathbb{E}_{(D_i),(Y_{ij})}$ corresponds to integration with respect
to the joint distribution of $(D_i)$ and $(Y_{ij})$.
\label{thm:randomEvalsUniformlyConsistent}
\end{theorem}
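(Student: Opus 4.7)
The plan is to leverage Theorem \ref{thm:mainResult1} by conditioning on the random eigenvalues $(D_i)$ and then handling the outer randomness by a law-of-large-numbers plus dominated-convergence argument. By the tower property,
\begin{equation*}
\mathbb{E}_{(D_i),(Y_{ij})} \|\hat\theta_n - \theta\|^2
= \mathbb{E}_{(D_i)}\bigl[\mathbb{E}_{(Y_{ij})\mid (D_i)}\|\hat\theta_n - \theta\|^2\bigr],
\end{equation*}
and the inner expectation is exactly the fixed-design risk treated earlier: whenever the realized $(D_i)$ satisfy (A4) and (A5), Theorem \ref{thm:mainResult1} yields a conditional bound essentially of the form $C\,\Omega_n^2\,\gamma_n(D)$, uniformly over $\theta\in\Theta$. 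I would then split the outer expectation using the good event
\[
G_n = \Bigl\{\tfrac{\max_j \Delta_{nj}}{\min_j \Delta_{nj}} \leq M\Bigr\}
\]
for a fixed $M$ strictly larger than $\max_j \mu_j/\min_j \mu_j$, where $\mu_j := \mathbb{E}|D_{1j}|^2$. Since $\Delta_{nj}/n\to\mu_j$ almost surely componentwise, $\mathbb{P}(G_n)\to 1$ and on $G_n$ the random constant $\Omega_n^2$ in the estimator is deterministically bounded by some $\Omega_M$.

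On $G_n$, the conditional risk is then at most $C_M\,\max_j \ep^2/\Delta_{nj}$, uniformly in $\theta$, and I would pass this to zero in outer expectation via dominated convergence. The integrable majorant is supplied by assumption (B4) with $\rho>1$: a union bound gives $\mathbb{P}(\min_j |D_{1j}|^2 < \tau) \leq p\tau^\rho$, whence $\mathbb{E}[1/\min_j|D_{1j}|^2]<\infty$; since $\max_j 1/\Delta_{nj}\leq 1/\min_j|D_{1j}|^2$ and $\max_j 1/\Delta_{nj}\to 0$ almost surely by the LLN, dominated convergence yields $\mathbb{E}[\gamma_n\,\mathbf{1}_{G_n}]\to 0$.

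For the complementary event, I would exploit $\hat\lambda_j\in[0,1]$ together with the unitarity of $\Psi$ to get the crude bound $\|\hat\theta_n\|^2\leq\|\mathbf{B}_n\|^2$, which after a standard bias--variance split and using $\|\theta\|^2\leq T^2$ gives a conditional risk of at most $2T^2 + 2\sum_j \ep^2/\Delta_{nj}$. On $G_n^c$ the first piece contributes $2T^2\,\mathbb{P}(G_n^c)\to 0$, and the second $2\ep^2\sum_j \mathbb{E}[\Delta_{nj}^{-1}\mathbf{1}_{G_n^c}]$ vanishes by the same dominated-convergence trick with the integrable majorant $|D_{1j}|^{-2}$ and $\mathbf{1}_{G_n^c}\to 0$ almost surely. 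The main technical obstacle is that Theorem \ref{thm:mainResult1} is phrased as a $\limsup$: to combine it cleanly with the outer dominated-convergence step I would need to extract from its proof a non-asymptotic conditional bound with explicit $\Omega_n^2$-dependence, or else argue that the ``eventually'' in the $\limsup$ can be taken uniformly over the sample paths lying in $G_n$ and over $\theta\in\Theta$. The role of $\rho>1$ is then transparent: it is exactly the threshold at which $|D_{1j}|^{-2}$ becomes integrable, providing the dominating function the whole argument rests on, so the strategy would genuinely fail for $\rho\leq 1$.
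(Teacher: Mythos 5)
Your strategy is sound and rests on the same two pillars as the paper's argument: condition on the realized $(D_i)$ to reduce to the fixed-design analysis, and use (B4) with $\rho>1$ to produce an integrable, $n$-uniform majorant via $\Delta_{nj}^{-1}\le |D_{1j}|^{-2}$ so that the limit can be passed inside the outer expectation. The packaging differs, though. The paper does not invoke the conclusion of Theorem \ref{thm:mainResult1} at all; it reuses the \emph{intermediate} pointwise bound from that proof (equation \eqref{eq:mainInequalityUniform}), which gives directly a per-coordinate envelope $g_j \lesssim (\Omega_n^2+1)^2\,\ep^2/\Delta_{nj} + T^2$, and then verifies uniform integrability of each $g_j$ by the tail computation $x\cdot x^{-\rho}+\int_x^\infty y^{-\rho}\,dy\to 0$. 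This sidesteps the very obstacle you correctly flag --- that Theorem \ref{thm:mainResult1} is a $\limsup$ statement whose ``eventually'' could depend on the sample path --- because the intermediate bound is non-asymptotic modulo the boundedness of $\Omega_n$. Your alternative of introducing the good event $G_n=\{\max_j\Delta_{nj}/\min_j\Delta_{nj}\le M\}$ (on which $\Omega_n^2$ is deterministically bounded), applying dominated convergence there, and falling back on the crude bound $\|\hat\theta_n-\theta\|^2\le 2\|\mathbf{B}_n\|^2+2\|\theta\|^2$ off $G_n$, is a legitimate and arguably more transparent route: it makes explicit where the almost-sure validity of (A4)--(A5) enters (via the SLLN giving $\mathbb{P}(G_n)\to 1$), a point the paper leaves implicit. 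The cost is that you must either extract a non-asymptotic version of Theorem \ref{thm:mainResult1} or argue uniformity of its ``eventually'' over paths in $G_n$, which is real additional work; the paper's choice to work with the raw inequality rather than the theorem statement is precisely what makes that step unnecessary. Both arguments identify $\rho>1$ as exactly the integrability threshold for $|D_{1j}|^{-2}$, which is the correct diagnosis.
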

\subsection{Rotations, estimators, and tuning parameter selection}
Returning to equation (\ref{eq:DWNP}), for $i = 1, 2, \ldots$ we 
define $\mathbf{X}_i := \Psi^*\mathbf{Y}_i$, $\be := \Psi^*\theta$, 
and $\mathbf{Z}_i := \Psi^*\mathbf{W}_i$.
Then it follows that
\begin{equation}
\mathbf{X}_i = D_i \be + \ep \mathbf{Z}_i.
\label{eq:DWNPbeta}
\end{equation}
Note that in this case $\mathbf{Z}_i \stackrel{i.i.d}{\sim}
CN(0,I_p,\Psi\Psi^{\top})$\footnote{A complex normal has an extra parameter compared
with a real normal.  For a zero mean
complex normal random variable $\mathbf{Z}$, this is denoted
$CN(0,\mathbb{E}ZZ^{*},\mathbb{E}ZZ^{\top}$).}.
It is also convenient to look at equation \eqref{eq:DWNPbeta} component-wise,
\begin{equation}
X_{ij} = D_{ij} \be_j + \ep Z_{ij}
\label{eq:DWNP2comp}
\end{equation}
for $j = 1, \ldots, p$. Note that for these multiplications
to be defined, we have to think about $\mathbb{R}^p$ being embedded in
$\mathbb{C}^p$ by having imaginary part equal to zero.  We follow
this convention without comment in what follows.

\begin{remark}
Note that the $(\mathbf{Z}_i)$ are degenerate complex Gaussian vectors
in the following sense:
if we think of a $p$ dimensional complex Gaussian as a $2p$ dimensional real Gaussian with
some covariance matrix, then the Gaussian actually has values in a $p$ dimensional 
subspace of $\mathbb{R}^{2p}$. Thus the random variables
don't have a density with respect to Lebesgue measure on the full space $\mathbb{C}^p$,
among other complications.
\end{remark}
\begin{remark}
Commonly, the sequence space formulations found in equation \eqref{eq:DWNPbeta} and
equation \eqref{eq:DWNP2comp} are accomplished by a real, orthogonal matrix instead of a
complex, unitary one.  Allowing for the sequence $(K_i)_{i=1}^n$
to share the same eigenvectors necessitates permitting $\Psi$ to be complex.
This makes equation \eqref{eq:DWNP2comp} more complicated  
than the conventional normal means problem in at least two ways.
First, as stated above, the random variables are complex.  Second, and more
importantly, the model is heteroscedastic.  This leads to a much more involved theory 
than in the homoscedastic case, such as in \citet{brown1975}, 
and is still the topic of contemporary research
\citep{brown2011}.
\end{remark}

Lastly, define
\begin{equation}
B_{nj} := \frac{\sum_{i=1}^n D_{ij}^*X_{ij}}{\sum_{i=1}^n |D_{ij}|^2} = 
\be_j + \ep\Delta_{nj}^{-1/2}Z_j
\label{eq:Bnj}
\end{equation}
where $\Delta_{nj} := \sum_{i=1}^n |D_{ij}|^2$.

This quantity is particularly important, as evidenced by the following theorem
\begin{theorem}
Under the model introduced in equation \eqref{eq:DWNP} and (A1) - (A4), 
the random vector $\mathbf{B}_n := (B_{nj})_{j=1}^p$
is sufficient for $\be$ in equation (\ref{eq:DWNPbeta}).
\label{thm:sufficient}
\end{theorem}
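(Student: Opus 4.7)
The plan is to work in the original real-valued coordinates where the density is non-degenerate, derive a sufficient statistic there via the Fisher--Neyman factorization theorem, and then translate it through the unitary rotation $\Psi$ to identify $\mathbf{B}_n$.

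First I would write down the joint log-likelihood. Since the $\mathbf{Y}_i$ are independent real Gaussian vectors with mean $K_i\theta$ and covariance $\ep^2 I_p$ under assumptions (A1)--(A2), we have
\begin{equation*}
\log p(\mathbf{Y}_1,\ldots,\mathbf{Y}_n \mid \theta)
\;=\; \mathrm{const} \;-\; \frac{1}{2\ep^2}\sum_{i=1}^n \left[\,\|\mathbf{Y}_i\|^2 \;-\; 2\,\theta^\top K_i^\top \mathbf{Y}_i \;+\; \theta^\top K_i^\top K_i\,\theta\,\right].
\end{equation*}
Since $\sum_i K_i^\top K_i$ is a known constant (depending only on the $K_i$, which are known by (A2)), the only data-dependent term interacting with $\theta$ is $\theta^\top T_n$, where $T_n := \sum_{i=1}^n K_i^\top \mathbf{Y}_i$. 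By the Fisher--Neyman factorization theorem, $T_n$ is sufficient for $\theta$.

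Next I would invoke assumption (A3) to rewrite $T_n$ in terms of $\mathbf{B}_n$. Because each $K_i$ is a real matrix one has $K_i^\top = K_i^*$, and (A3) gives $K_i^* = \Psi D_i^* \Psi^*$, so
\begin{equation*}
T_n \;=\; \Psi \sum_{i=1}^n D_i^* \,\Psi^* \mathbf{Y}_i \;=\; \Psi \sum_{i=1}^n D_i^* \mathbf{X}_i \;=\; \Psi\,\mathrm{diag}(\Delta_{n1},\ldots,\Delta_{np})\,\mathbf{B}_n ,
\end{equation*}
by the definition of $\mathbf{B}_n$ in equation \eqref{eq:Bnj}. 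Assumption (A4) guarantees that for all $n \geq N$ every $\Delta_{nj} > 0$, so the diagonal matrix is invertible. Combined with the unitarity of $\Psi$, the map $\mathbf{B}_n \mapsto T_n$ is a known, measurable bijection, and hence $\mathbf{B}_n$ is itself sufficient for $\theta$ (equivalently for $\be = \Psi^*\theta$) for $n \geq N$.

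The step requiring the most care is the translation between the real-valued observation space where the Gaussian density is well defined and the complex sequence-space formulation \eqref{eq:DWNPbeta}, where the noise is a degenerate complex Gaussian and no Lebesgue density exists. I avoid this issue by deriving sufficiency upstream of the rotation: factorization is applied to the $\mathbf{Y}_i$ densities, and only afterwards do I pass through the bijective unitary change of variables $\Psi$. Aside from this bookkeeping the proof is essentially a completion-of-the-square argument, and the role of (A3) is precisely to ensure that the natural sufficient statistic $T_n$ becomes diagonal in the $\Psi$-basis and therefore expressible coordinate-wise as the weighted averages $B_{nj}$.
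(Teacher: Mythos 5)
Your proof is correct, and it is substantially more complete than what the paper offers. The paper's entire justification is the single sentence that the map $\Psi^*$ is measure preserving, which only establishes that the data $(\mathbf{X}_i)$ in the rotated model \eqref{eq:DWNPbeta} carry the same information as the original $(\mathbf{Y}_i)$; it does not by itself explain why the further reduction to the weighted averages $\mathbf{B}_n$ loses nothing. Your factorization argument supplies exactly that missing content: completing the square in the real, non-degenerate coordinates identifies $T_n = \sum_i K_i^\top \mathbf{Y}_i$ as sufficient, and (A3) diagonalizes $T_n$ in the $\Psi$-basis so that $T_n = \Psi\,\mathrm{diag}(\Delta_{n1},\ldots,\Delta_{np})\,\mathbf{B}_n$. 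Your explicit appeal to (A4) to guarantee $\Delta_{nj} > 0$ for $n \geq N$ (so that the diagonal matrix is invertible and $\mathbf{B}_n$ is even well defined) is also a point the paper glosses over, and your choice to work upstream of the rotation neatly sidesteps the degeneracy of the complex Gaussian noted in the paper's own remark. The only cosmetic caveat is that the conclusion holds for $n \geq N$, which matches the implicit content of the theorem since $B_{nj}$ is undefined when $\Delta_{nj} = 0$.
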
 
This claim can be seen by noting that the map $\Phi^*$ is measure preserving.

As $\Psi$ is also unitary, we can define an equivalent risk
to the one defined in equation \eqref{eq:thetaRisk} in terms of $\be$
\begin{equation}
R(\hat \theta,\theta) := \mathbb{E}||\hat \theta - \theta ||^2
= \mathbb{E}||\Psi^*(\hat \theta - \theta) ||^2 =
\mathbb{E}||\hat \be - \be ||^2 
 =: R(\hat \be,\be).
\label{eq:risk}
\end{equation}

Any
risk computations made under the data, which is $(X_i)_{i=1}^n$ in our
notation, are equivalent to those
made under a sufficient statistic \citep[Theorem 7.1]{bahadur1954}.  
By Theorem \ref{thm:sufficient}, $\mathbf{B}_n$ is
sufficient for $\beta$ and hence for all measurable functions of the data
that are not functions of $\mathbf{B}_n$, there exists an estimator with equal risk
that is a function of
$\mathbf{B}_n$.  In fact, the expectations in equation \eqref{eq:risk} are equivalent 
under $(X_i)_{i=1}^n$ and $\mathbf{B}_n$.
Therefore, for each $n$, we can treat $\mathbf{B}_n$ as the data
and formulate estimators based upon it.  

To develop an automatic procedure for signal estimation in 
sequential inverse problems we begin by regularizing an 
unbiased estimator of $\be$ through the use of  
a smoothing parameter vector.  We choose this smoothing parameter
by minimizing an estimate of the risk.  
This type of procedure, known generally as unbiased risk estimation,
has been revisited regularly in many fields for solving various
problems related to denoising \citep{stein1981,donoho1995}.
However, as inverse problems generally result in unstable estimators of both
the parameter $\be$ and the risk $R$,
we compensate by including additional regularization.

The specifics of our approach are related to the procedure 
found in \citet{beran2000}.  
However, the goal in \citet{beran2000}, unlike our paper, is the estimation of the 
regression function in an assumed linear model instead
of the coefficients themselves. 
That is, referring to the notation in equation \eqref{eq:DWNP},
the estimation of $K_i\theta$ instead of the estimation of $\theta$.
This is an important distinction as both estimating $\theta$ is intrinsically harder
than estimating $K_i\theta$ and $\theta$ is the object of actual interest.  The practical implications
of these differences is that only minimizing an unbiased estimate of risk,
as is the procedure in \citet{beran2000}, provides
insufficient regularization. As well, the theoretical justification that appears in \citet{beran2000},
is essentially entirely asymptotic in $p$.
This is a regime we do not consider
relevant for the problem at hand.

To begin to formulate an estimator of $\be$, and therefore $\theta$, we state the
following:
\begin{proposition}
Define $\hat \psi_j := (|B_{nj}|^2 - \ep^2/\Delta_{nj})/|B_{nj}|^2$.
Then the random function
\begin{equation}
\hat R_n(\lamVec) := \sum_{j=1}^p (\lambda_j - \hat\psi_j)^2|B_{nj}|^2
\label{eq:riskEstimateFinal}
\end{equation}
provides, up to a constant independent of $\lamVec$, an unbiased estimate of $R(\lamVec)$.
Additionally,
\begin{equation}
\min_{\lambda \in \mathbb{C}^p} R(\lamVec) = \min_{\lambda \in \hypSq} R(\lamVec) 
\end{equation}
where $\hypSq = [0,1]^p$ is the $p$ dimensional hypersquare.
\label{prop:hatRn}
\end{proposition}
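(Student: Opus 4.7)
The plan is to reduce the problem to a coordinate-wise computation on the sufficient statistic $\mathbf{B}_n$, exploiting the fact that $B_{nj} = \beta_j + \varepsilon \Delta_{nj}^{-1/2} Z_j$ has mean $\beta_j$ and satisfies $\mathbb{E}|B_{nj}|^2 = |\beta_j|^2 + \varepsilon^2/\Delta_{nj}$. The estimator under consideration is $\hat{\beta}_j = \lambda_j B_{nj}$, so by the risk identity in \eqref{eq:risk} the risk decomposes as $R(\lamVec) = \sum_{j=1}^p R_j(\lambda_j)$. First I would expand each coordinate term as
\begin{equation*}
R_j(\lambda_j) \;=\; \mathbb{E}|\lambda_j B_{nj} - \beta_j|^2 \;=\; |\lambda_j|^2 \,\bigl(|\beta_j|^2 + \tfrac{\varepsilon^2}{\Delta_{nj}}\bigr) \,-\, 2\,\mathrm{Re}(\lambda_j)\,|\beta_j|^2 \,+\, |\beta_j|^2,
\end{equation*}
using that $\beta_j$ is real-embedded and $Z_j$ is mean-zero with $\mathbb{E}|Z_j|^2 = 1$ (so $\hat\psi_j$ is real, and $(\lambda_j - \hat\psi_j)^2$ is interpreted as the squared modulus).

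Next I would construct the unbiased estimator. The observation $\mathbb{E}|B_{nj}|^2 = |\beta_j|^2 + \varepsilon^2/\Delta_{nj}$ gives the ``plug-in'' identities that $|B_{nj}|^2$ is unbiased for $|\beta_j|^2 + \varepsilon^2/\Delta_{nj}$ and that $|B_{nj}|^2 - \varepsilon^2/\Delta_{nj} = \hat\psi_j |B_{nj}|^2$ is unbiased for $|\beta_j|^2$. Expanding
\begin{equation*}
|\lambda_j - \hat\psi_j|^2 \,|B_{nj}|^2 \;=\; |\lambda_j|^2 |B_{nj}|^2 \,-\, 2\,\mathrm{Re}(\lambda_j)\,\hat\psi_j |B_{nj}|^2 \,+\, \hat\psi_j^2 |B_{nj}|^2
\end{equation*}
and taking expectations term by term, the first two pieces reproduce exactly the $\lambda$-dependent part of $R_j(\lambda_j)$, while the remaining pieces (namely $\mathbb{E}[\hat\psi_j^2 |B_{nj}|^2]$ and the $|\beta_j|^2$ from $R_j$) are independent of $\lamVec$. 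Summing over $j$ and collecting the $\lambda$-free remainder into a single constant $c$ yields $\mathbb{E}\hat R_n(\lamVec) = R(\lamVec) + c$, which is the first claim.

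For the second claim I would minimize coordinate-wise. Writing $\lambda_j = u + iv$ with $u,v \in \mathbb{R}$, the expression for $R_j(\lambda_j)$ becomes
\begin{equation*}
R_j(\lambda_j) \;=\; (u^2 + v^2)\,\bigl(|\beta_j|^2 + \tfrac{\varepsilon^2}{\Delta_{nj}}\bigr) \;-\; 2u\,|\beta_j|^2 \;+\; |\beta_j|^2,
\end{equation*}
whose unique stationary point is $v = 0$ and $u = |\beta_j|^2/(|\beta_j|^2 + \varepsilon^2/\Delta_{nj}) \in [0,1]$. Hence the unconstrained complex minimizer already lies in $\hypSq = [0,1]^p$, which gives the equality of minima.

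The only real obstacle is bookkeeping in the complex, heteroscedastic model: making sure that $(\lambda_j - \hat\psi_j)^2$ is read as $|\lambda_j - \hat\psi_j|^2$, that the cross term $\mathrm{Re}(\lambda_j \hat\psi_j^*)$ reduces to $\mathrm{Re}(\lambda_j)\,\hat\psi_j$ because $\hat\psi_j$ is real, and that $\mathbb{E}[B_{nj}] = \beta_j$ uses the real embedding of $\theta$ into $\mathbb{C}^p$. Once these conventions are fixed the calculation is mechanical, and everything reduces to the two identities $\mathbb{E}|B_{nj}|^2 = |\beta_j|^2 + \varepsilon^2/\Delta_{nj}$ and $\mathbb{E}\, B_{nj} = \beta_j$.
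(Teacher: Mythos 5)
Your proposal is correct and follows essentially the same route as the paper: both rest on the identity $\mathbb{E}|B_{nj}|^2 = |\be_j|^2 + \ep^2/\Delta_{nj}$ to get unbiasedness up to a $\lamVec$-free constant, and both show the unconstrained minimizer is $\psi_j = |\be_j|^2/(|\be_j|^2 + \ep^2/\Delta_{nj}) \in [0,1]$ coordinate-wise. Your explicit treatment of complex $\lambda_j$ via $\mathrm{Re}(\lambda_j)$ and $\lambda_j = u + iv$ is a small gain in rigor over the paper's real-valued bookkeeping, but the substance is identical.
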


The first part of the proposition provides an unbiased estimate of the risk
while the second
part implies that we gain no improvement in risk by 
allowing $\lamVec$ to have values outside of $\hypSq$.

Using $\hat R_n$ from \eqref{eq:riskEstimateFinal}, 
define for any $\gen \subseteq \hypSq$ 
\begin{equation}
\hat{\lamVec}^{\gen} := \argmin_{\gen} \hat{R}_n(\lamVec)
\end{equation}
which produces an estimator of $\be$ via
\begin{equation}
\hat{\be}^{\gen} := \hat{\lamVec}^{\gen}(\mathbf{B}_n).
\label{eq:filterEstimator}
\end{equation}
Lastly, we recover an estimate of $\theta$ by
forming $\hat{\theta}^{\gen} := \Psi \hat{\be}^{\gen}$.

As any choice of $\gen$ results in an estimator 
$\hat{\be}^{\gen}$ via the above machinery, there are in principle many possible
choices.  We focus on $\gen = \hypSq$, which by inspection of equation \eqref{eq:riskEstimateFinal},
results in 
\begin{equation}
\hat{\lamVec}^{\hypSq} = \left(1 - \frac{\ep^2}{\Delta_{nj}|B_{nj}|^2} \right)_{+}
\end{equation}
where as usual $(\cdot)_+ = \max(\cdot,0)$ is the soft thresholding function. 
Other choices can and should be explored in further research
into estimation in sequential inverse problems such as
$\mono := \{ \lamVec \in \hypSq: \lambda_1 \geq \lambda_2 \geq \ldots \geq \lambda_p\}$,
which induces a monotonicity constraint on the estimated coefficients,
or block methods of piecewise constant
weights \citep{cavalier2002a}. 

Additionally, the aforementioned Tikonov-Phillips regularization
and Landwieber iterations methods correspond to 
specific subsets of $\hypSq$. 
The Tikonov-Phillips estimator takes the form
\begin{equation}
\hat{\be_j}^{\gamma} := \sum_{i=1}^n \frac{D_{ij}^* X_{ij}}{|D_{ij}|^2 + \gamma}
\label{eq:TPestimator}
\end{equation}
which can be rewritten as an element of $\mathcal{E}$ by defining
\begin{equation}
\lambda_j^{\gamma} := \frac{\Delta_{nj}}{\Delta_{nj} + \gamma}
\end{equation}
with associated estimator 
$\hat{\be}^{\gamma} = \lamVec^{\gamma}(\mathbf{B_n})$.

The Landwieber iterations estimator is by nature iterative.
However, it has an equivalent formation in the form of the
following linear smoother
\begin{equation}
\lambda_j^{(\gamma,\tau)} = (1- [1-\tau\Delta_{nj}]^{\gamma})
\label{eq:LIestimator}
\end{equation}
where $\gamma$ corresponds to the number of iterations and $\tau$ is a relaxation parameter.
The associated estimator is $\hat{\be}^{(\gamma,\tau)} = \lamVec^{(\gamma,\tau)}(\mathbf{B}_n)$.
Hence, this procedure generalizes the results 
in \citet{piana1996}
by providing a principled tuning parameter selection method.

A problem arises if we choose smoothing parameters in this fashion in inverse problems:
insufficient regularization.  This is due to $\hat R_n$ being
an unstable estimate of $R$ for the same reason as $\mathbf{B}_n$ is an unstable estimator of $\be$.

Instead of regularizing the risk estimator, we modify the weights directly
to provide additional
regularization.  However, we record our belief that regularizing $\hat R_n$ by limiting 
how ill-conditioned the risk estimator can become and then minimizing
this biased estimator of the risk should provide a suite of interesting
estimators via the above machinery. Define
\begin{equation}
\hat{\lamVec} = \left(1 - \frac{\Omega_n^2\ep^2}{\Delta_{nj}|B_{nj}|^2} \right)_{+}
\label{eq:mainEstimator}
\end{equation}
where the parameter $\Omega_n^2$ is specified before Theorem \ref{thm:mainResult1}.  
Lastly, define our estimator of $\theta$
to be 
\begin{equation}
\hat \theta_n := \Psi \hat{\lamVec}(\mathbf{B}_n).
\end{equation}

\section{Computational concerns, variance estimation, and alternate methods}
\subsection{Computations}
The specifics of the computation of an estimator $\hat{theta}^{\gen}$
depend on the subset $\gen$.  However, $\hat R_n$ is a convex objective
function.  Hence, if $\gen$ is a convex subset of $\mathbb{R}^p$, then the solution
can be found both efficiently and uniquely.
Of the estimators mentioned above, all except $\mono$ have a closed form solution
and therefore trivial computation.  The minimization of $\hat R_n$ over $\mono$
can be accomplished by a well known algorithm called
Pooled Adjacent  Violators (PAV) \citep{robertson1988} that transforms
the least squares solution $\hat\psi$ into the monotone solution 
by taking weighted averages of adjacent elements of $\hat\psi$ 
that violate the monotonicity constraint.

Additionally, in the case of convolution, the vector $\Delta_n$ and the
random variables $(X_i)$ can be computed via the Fast Fourier Transform, which 
implies $O(p \log p)$ computations and is of course the archetypal instance of an efficient
algorithm.  However, for more general matrices $K_i$, the eigenvectors must be computed
via a conventional eigenvector solver, which necessarily has computational complexity
$O(p^3)$.  This could become prohibitive for large scale problems.  There do exist modern
approximation methods for eigenvalues and eigenvectors that could be used instead, such 
as in \citet{halko2009}.  However, we do not explore this idea further in this paper.

An additional feature is that for the computation at step $n$, it is not necessary to
keep the entire history $(Y_i)_{i=1}^n$ and $(K_i)_{i=1}^n$.  Both
$\mathbf{B_n}$ and $\Delta_n$ can be computed from aggregate information.  Hence, we
can produce an estimate of $\theta$ given only access to 
a few summary statistics which get updated after each new observation.

\subsection{Estimating the variance parameter}
\label{sec:estimatingVariance}
Estimating the variance parameter can be accomplished in a consistent way by
setting aside a subsequence $\mathcal{N}$ of $\mathbb{N}$ and
computing the estimator
\begin{equation}
\hat \ep_{\text{con}}^2 := \frac{1}{pn'} \sum_{i \in \mathcal{N}} \sum_{j=1}^p 
\left(Y_{ij}^2 - \overline{Y}_j^2\right).
\end{equation}
Here, we have computed the estimator after the first $n'$ entries in $\mathcal{N}$.

Alternatively, we can take advantage of the observational process to acquire
a good estimate of $\epsilon$.  As we make observations, occasionally some will be
of exceptionally poor quality.  This observation will be less helpful for recovery
in general and provide almost no information about the higher order components of the vector $\be$.

Suppose now that $\mathcal{N}$ is the set of all indices $i$ such that $Y_i$ 
is a low quality observation; that is there exists a $p'$ 
such that for $j = p', \ldots, p$, the $|D_{ij}|^2$ are small. In general, $p'$ could depend on $i$, but
we do not consider this complication here. Form the following statistic
\begin{equation}
  \hat \ep_i^2 := \frac{1}{p - p'}\sum_{q = p'}^p |X_{iq}|^2.
\end{equation}

Then $\mathbb{E}\hat \ep_i^2 = \ep^2 + \frac{1}{p - p'}\sum_{q = p'}^p |D_{iq}|^2|\be_j|^2$
and we report $1/n' \sum_{i \in \mathcal{N}} \hat \ep_i^2$ as our estimator of $\epsilon^2$.
This is in general a biased estimator of the variance.  Nevertheless, it is still
useful.  First, it is conservative owing to its positive bias. 
Perhaps more importantly, this estimator provides an interesting situation where
the lowest quality parts of the lowest quality observations provide the best performance.

\subsection{Averaging is not enough}
\label{sec:avNotEnough}
In equation \eqref{eq:DWNP}, conventional statistical practice would
suggest averaging the observations $(Y_i)$ directly.  However,
we show here that this leads to suboptimal results.  Specifically, averaging
gives thes following model
\begin{equation}
\avgY = \avgK \theta + \frac{\ep}{\sqrt{n}} W
\label{eq:averageModel}
\end{equation} 
where, under assumption (A3), $\avgK := 1/n \sum_{i=1}^n K_i = \Psi \avgD \Psi^*$, \\
$\avgD := 1/n \sum_{i=1}^n D_i$,
$\avgY := 1/n \sum_{i=1}^n Y_i$, and $W \sim N(0,I_p)$.
This can also be equivalently expressed as
\begin{equation}
\avgB = |\avgD|^{-2} \avgD^*\avgX =   \beta + \frac{\ep}{\sqrt{n}}|\avgD|^{-2} \avgD^* \Psi^* W.
\label{eq:averageModel1}
\end{equation} 
Here, $\avgX = \Psi^* \avgY$.  We define the corresponding set of linear estimators to be 
$\avgE := \{ \hat \theta = \Psi \lamVec (\avgB) : \lamVec \in \mathbb{C}^p\}$. 

Note that we can write equation \eqref{eq:averageModel} without any assumptions about
the eigenvectors of the forward operators.  However, 
under assumption (A3), the following theorem supports forming estimators based 
on equation \eqref{eq:Bnj} instead of
equation \eqref{eq:averageModel}.
\begin{theorem}
Suppose for a fixed $\theta$, 
\[
R_1 = \inf_{\hat \theta \in \mathcal{E}} \mathbb{E}||\hat \theta - \theta||_2^2 \quad \text{ and } \quad 
R_2 = \inf_{\hat \theta \in \avgE} \mathbb{E}||\hat \theta - \theta||_2^2,
\]
where the expectations in $R_1$ and $R_2$ are under $\mathbf{B}_n$ and $\avgB$, respectively.
Then
\[
R_1 <^* R_2
\]
where `$<^*$' means `strictly less than except when $D_i \equiv D$ for all $i$ and some $D$.'
That is, the oracle linear risk based on equation \eqref{eq:Bnj} is strictly less than the oracle linear risk based
on equation \eqref{eq:averageModel}.
\label{thm:spectraThenAverage}
\end{theorem}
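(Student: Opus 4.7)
The plan is to reduce both oracle problems to a sum of scalar shrinkage problems, one per coordinate, and then compare the two effective per-coordinate noise variances by Cauchy--Schwarz. Because $\Psi$ is unitary, for any $\hat\theta = \Psi\lamVec(\mathbf{B}_n)\in\mathcal{E}$ the risk decomposes as
\[
\mathbb{E}\|\hat\theta - \theta\|^2 = \sum_{j=1}^p \Bigl(|\lambda_j-1|^2|\be_j|^2 + |\lambda_j|^2 \sigma_j^2\Bigr),
\]
where $\sigma_j^2 := \ep^2/\Delta_{nj}$ is the second moment of the noise term in \eqref{eq:Bnj}. A parallel computation based on \eqref{eq:averageModel1} gives the analogous expression for $\hat\theta = \Psi\lamVec(\avgB)\in\avgE$, with $\sigma_j^2$ replaced by $\bar\sigma_j^2 := \ep^2/(n|\overline{D}_{nj}|^2)$ (read as $+\infty$ when $\overline{D}_{nj}=0$).

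Next I would solve the scalar problem: for real $c\ge 0$ and $s>0$, minimizing $|\lambda-1|^2 c + |\lambda|^2 s$ over $\lambda\in\mathbb{C}$ yields the value $cs/(c+s)$, attained at real $\lambda=c/(c+s)$ because the imaginary part of $\lambda$ contributes only positively to both terms. Summing coordinatewise,
\[
R_1 = \sum_{j=1}^p \frac{|\be_j|^2 \sigma_j^2}{|\be_j|^2 + \sigma_j^2}, \qquad R_2 = \sum_{j=1}^p \frac{|\be_j|^2 \bar\sigma_j^2}{|\be_j|^2 + \bar\sigma_j^2}.
\]
For any fixed $c>0$ the map $s\mapsto cs/(c+s)$ is strictly increasing on $(0,\infty]$, so $R_1\le R_2$ will follow once I establish the coordinatewise bound $\sigma_j^2\le \bar\sigma_j^2$, i.e., $n|\overline{D}_{nj}|^2 \le \Delta_{nj}$. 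This last inequality is Cauchy--Schwarz applied to $(1,\ldots,1)$ and $(D_{1j},\ldots,D_{nj})$ in $\mathbb{C}^n$:
\[
n\,|\overline{D}_{nj}|^2 \;=\; \frac{1}{n}\Bigl|\sum_{i=1}^n D_{ij}\Bigr|^2 \;\le\; \sum_{i=1}^n |D_{ij}|^2 \;=\; \Delta_{nj},
\]
with equality iff $(D_{ij})_{i=1}^n$ is constant in $i$. Equality $R_1 = R_2$ therefore requires this constancy at every $j$ with $|\be_j|>0$, and is in particular forced when $D_i\equiv D$ (giving the `$<^*$' conclusion); conversely, if all $D_i$ are equal then $\Delta_{nj}=n|\overline{D}_{nj}|^2$ at every $j$ and the two risks genuinely agree.

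I expect no deep obstacle. The only delicacies are the complex scalar calculation in the second step (harmless, since only second moments of the noise enter the MSE and the $\Psi\Psi^{\top}$ pseudovariance is irrelevant) and the degenerate coordinates where $\overline{D}_{nj}=0$, which are handled by noting that no linear function of $\overline{B}_{nj}$ beats the zero estimator of $\be_j$ at such $j$, while $\mathbf{B}_n$ still carries signal through any $i$ with $D_{ij}\neq 0$ by assumption (A4). With these points in place the argument is essentially Cauchy--Schwarz on top of the familiar bias--variance tradeoff for scalar shrinkage.
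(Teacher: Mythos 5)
Your proposal is correct and follows essentially the same route as the paper: both reduce the two oracle risks to coordinatewise expressions of the form $|\be_j|^2 s/(|\be_j|^2+s)$, note that this is increasing in the effective noise level $s$, and then compare $\Delta_{nj}$ with $n|\overline{D}_{nj}|^2$. The only difference is cosmetic --- you invoke Cauchy--Schwarz directly where the paper expands $|\sum_i D_{ij}|^2$ and bounds the cross terms via the arithmetic--geometric mean inequality (the same inequality in disguise) --- and your treatment of the equality case and of coordinates with $\overline{D}_{nj}=0$ is somewhat more careful than the paper's.
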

\begin{remark}
Note that the classic Tikonov-Phillips estimator based on the $\avgY$ is 
of the form
\[
\hthetaRidge = (\avgK^{\top}\avgK + \tau I)^{-1}\avgK^{\top} \avgY.
\]
This is equivalent to
\begin{equation}
\hthetaRidge = \Psi (|\avgD|^2 + \tau I)^{-1} |\avgD|^2 |\avgD|^{-2} \avgD^* \avgX = 
\Psi (|\avgD|^2 + \tau I)^{-1} |\avgD|^2 \avgB,
\label{eq:ridgeEstimator}
\end{equation}
and hence the Tikonov-Phillips estimator is in $\avgE$, among many others.
\end{remark}

An alternative approach relies on forming $\mathcal{K}_n := [K_1^{\top}, \ldots, K_n^{\top}]^{\top}$,
$\mathcal{Y}_n := [Y_1^{\top}, \ldots, Y_n^{\top}]^{\top}$, and $\mathcal{W}_n \sim N(0,I_{np})$.
Then it follows that
\begin{equation}
\mathcal{Y}_n = \mathcal{K}_n \theta + \ep \mathcal{W}_n.
\end{equation}
However, estimators based on this approach, such as spline type estimators, rely on accessing the entire
history of observations $(Y_i)$ and forward operators $(K_i)$.  This is computationally infeasible as this
means both keeping and repeatedly accessing the entire sequence of observations.  Hence, this approach doesn't
satisfy our requirement of an estimate at time $n$ being efficiently updatable to
a new estimate after recording $Y_{n+1}$.


\section{Supporting simulations}
\label{sec:supportingExamples}
\subsection{Description}
In this section we present visual results of using our estimator $\hat\theta_n$
to reconstruct various signals, 
given access only to smoothed and noisy, but repeated,  observations
of that signal.  
In both cases, we compare our estimator, $\hat \theta_n$ to $\hat \theta_{\text{ridge}}$ from 
equation \eqref{eq:ridgeEstimator}, with the smoothing parameter $\tau$ chosen by minimizing generalized
cross validation (GCV).  For a quantitative comparison, we use the normalized relative risk ($RR$)
given by
\begin{equation}
RR(\hat\theta,\theta) = \sqrt{\frac{R(\hat \theta,\theta)}{||\theta||^2}}.
\end{equation}
We estimate $RR$ by averaging 100 runs of our simulations.

For each of the signals introduced below, we set $p=256$
and fix the noise parameter $\epsilon$ to be such that
the signal-to-noise $:= ||\theta||_1/(p\epsilon) = 1$.
For these examples, we admit $K_i$  
that are an equally weighted mixture of three Gaussians,
normalized to have $l_1$ mass equal to 1,
with means $\mu_1 = -0.75$, $\mu_2 = 0.00$, and $\mu_3 = 0.50$, along with
standard deviations $\sigma_{iq} = 0.5 + E_{qi}$,
where $E_{qi} \stackrel{i.i.d.}{\sim} \text{exponential}(1)$ and $q = 1,2,3$.
Note that this implies that the $K_i$ are not symmetric.  Also, note that Gaussian-like smoothing represents one of the
worst cases as it exponentially down-weights the $\be_j$ for large $j$.

\begin{figure}
  \centering
  \subfloat{\includegraphics*[width=2.5in,trim=0 35 0 35,clip]{./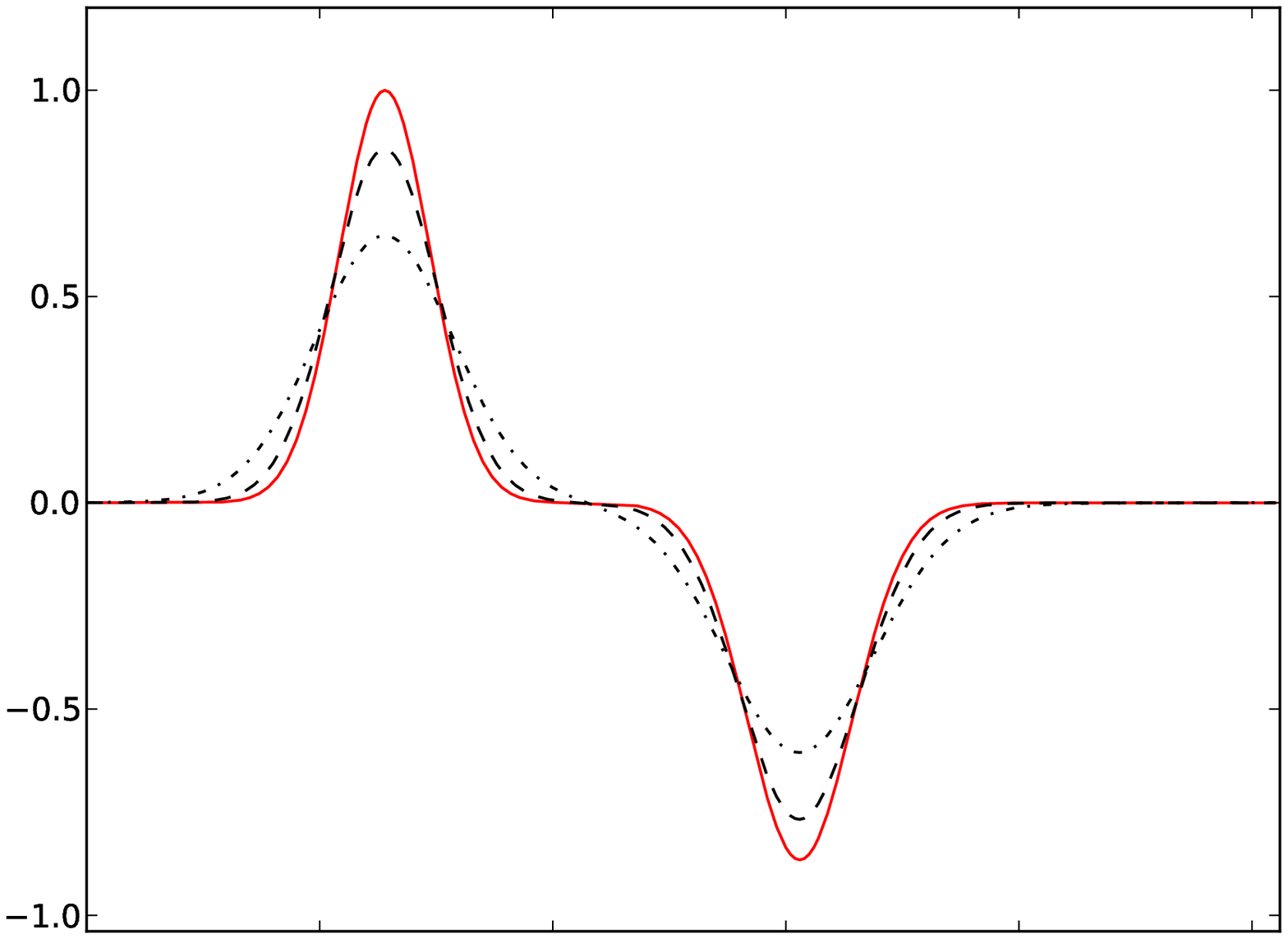}}
  \subfloat{\includegraphics*[width=2.5in,trim=0 35 0 35,clip]{./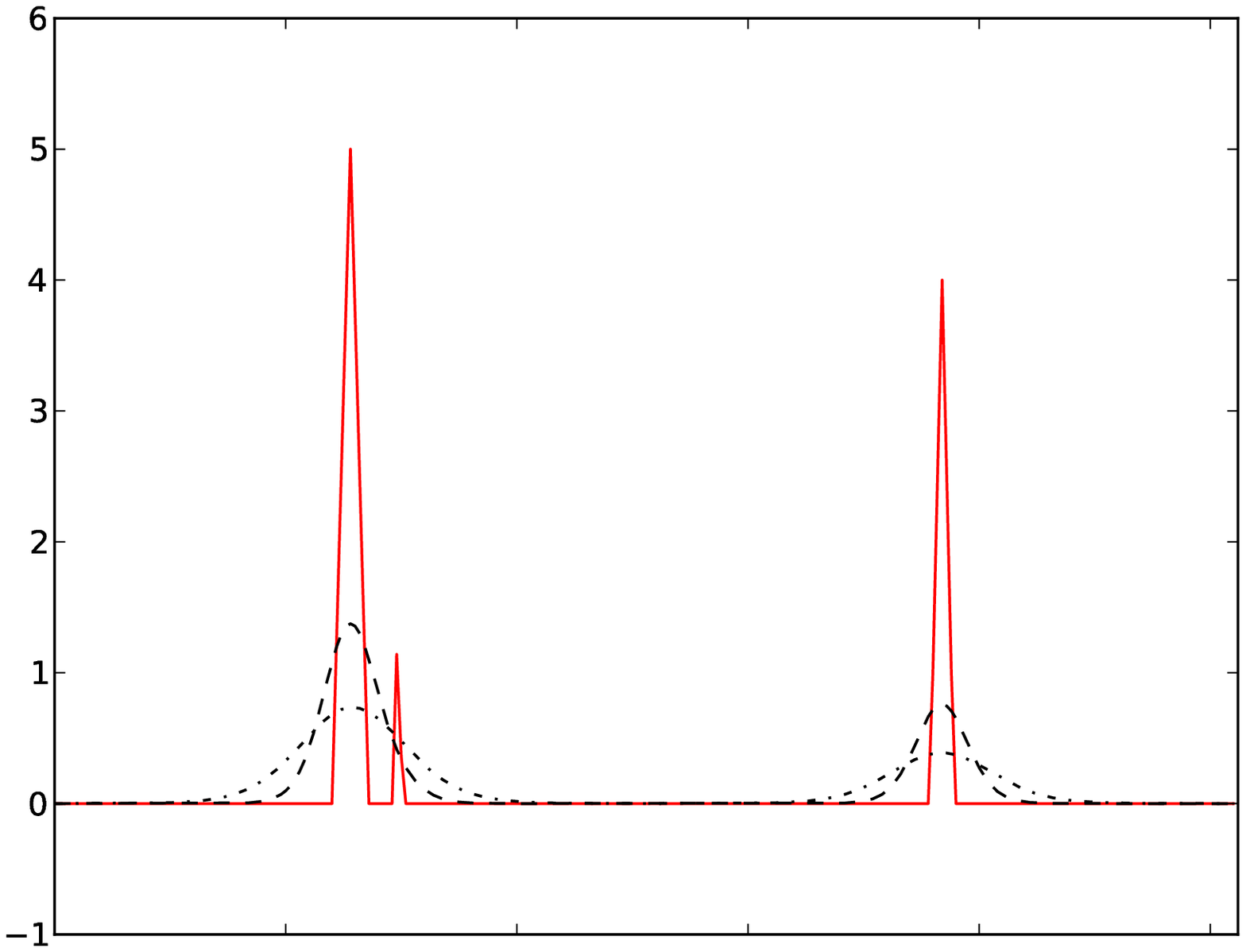}} \\       
  \subfloat{\includegraphics*[width=2.5in,trim=0 35 0 35,clip]{./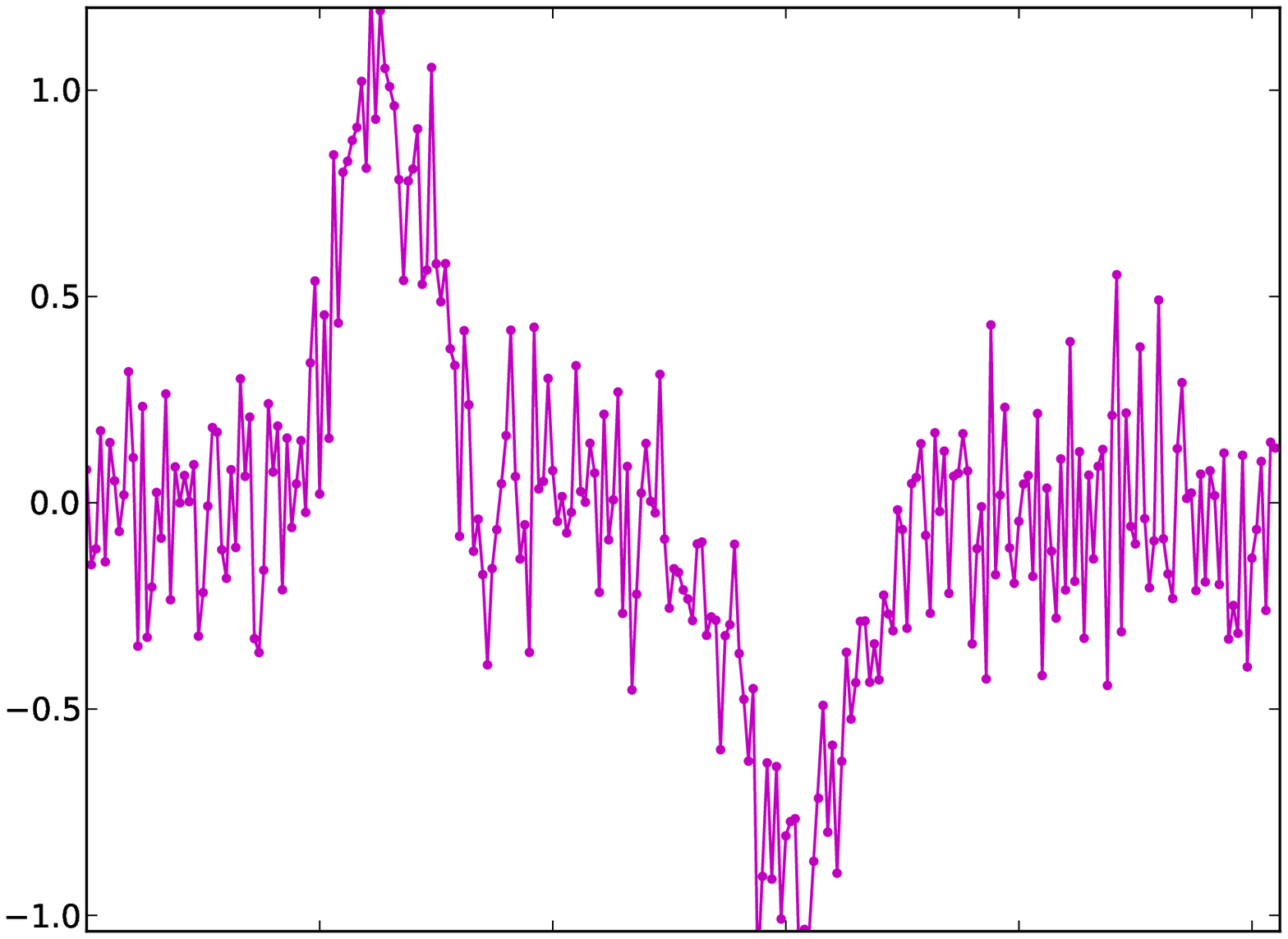}}
  \subfloat{\includegraphics*[width=2.5in,trim=0 35 0 35,clip]{./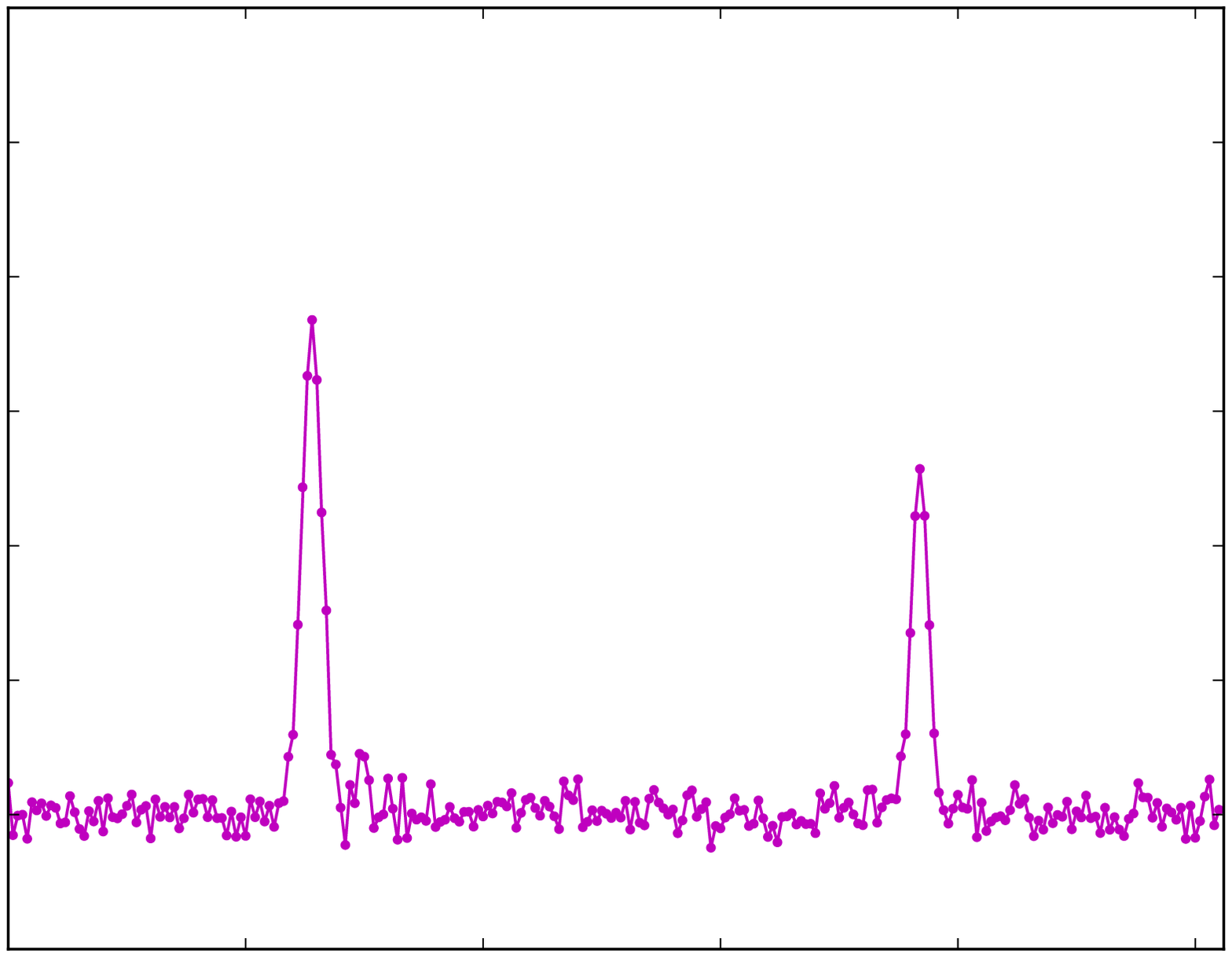}} \\
\caption{The left column corresponds to $\thetaS$ 
and the right column corresponds to $\thetaP$.  The top row
is a plot of the signal itself, along with the signal after the minimum (dashed line) and
maximum (dashed and dotted line) amount of smoothing.  The bottom row is an example of
the recorded data after corruption by smoothing and noise.   Notice that in $\thetaP$,
the smaller peak is completely obscured.}
\label{fig:1dsignals}
\end{figure}

We consider two signals for estimation, which we
refer to as $\thetaS$ and $\thetaP$ (Figure \ref{fig:1dsignals}).
The first signal, $\thetaS$,
is the sum of two Gaussians functions that are filtered by a 
Gaussian-tapered filter.  This filter is additionally  
enforced to be zero above the $p/2$ frequency.
Hence, $\thetaS$ is very smooth and compactly supported in the
frequency domain. This example is instructive as a smooth function 
should be well represented by the eigenvectors $\Psi$ of the smoothing operators $K_i$.  
Also, a compact representation in frequency domain will reveal the effectiveness of the
soft-thresholding in zeroing out the appropriate $B_{nj}$, ie: those that 
correspond to the $\be_j$ that are zero.
See the left column of Figure \ref{fig:1dsignals} for a plot of $\thetaS$
(top) along with a typical example of a noisy, smoothed version that comprises the 
recorded data (bottom).

Additionally, we consider the opposite situation by defining a signal
$\thetaP$ that is the sum of three sharp, non-smooth, peaks.  This signal
is difficult to represent with the eigenvectors of smoothing matrices but is
common in signal processing as it corresponds to both spectra from biochemical
analysis and nuclear magnetic resonance imaging (nMRI).  Note that the smallest
peak is completely obscured by the smoothing and noise.
See the right column of Figure \ref{fig:1dsignals} for a plot of $\thetaP$
(top) along with an example of a noisy, smoothed version (bottom).

\subsection{Results}
In estimating either signal, $\thetaS$ or $\thetaP$,
the estimator $\hat\theta_n$ converges rapidly to the truth.  See Table \ref{tab:RR}
for the $RR$ of $\hat\theta_n$ and $\hthetaRidge$ used on both signals.  
In each case, for $n = 50$, the $RR$ are approximately the
same, with $\hthetaRidge$ having a slight edge.  Every sample size
thereafter shows substantial advantage of $\hat\theta_n$ over $\hthetaRidge$,
culminating with a factor of two improvement in $RR$ after $n= 300$ observations.

For estimating $\thetaS$, both estimators have substantial oscillations
for low sample sizes.  However, due to $\hat\theta_n$ having
a soft-thresholding effect, some of the entries in our estimator of $\be$
are zeroed out. In contrast, $\hthetaRidge$ only shrinks
the coefficients and hence still has substantial fluctuations after $n=300$ observations.
See Figure \ref{fig:1dsmooth} for graphical results.

\begin{figure}
  \centering
  \subfloat{\includegraphics*[width=2.5in,trim=0 30 0 0,clip]{./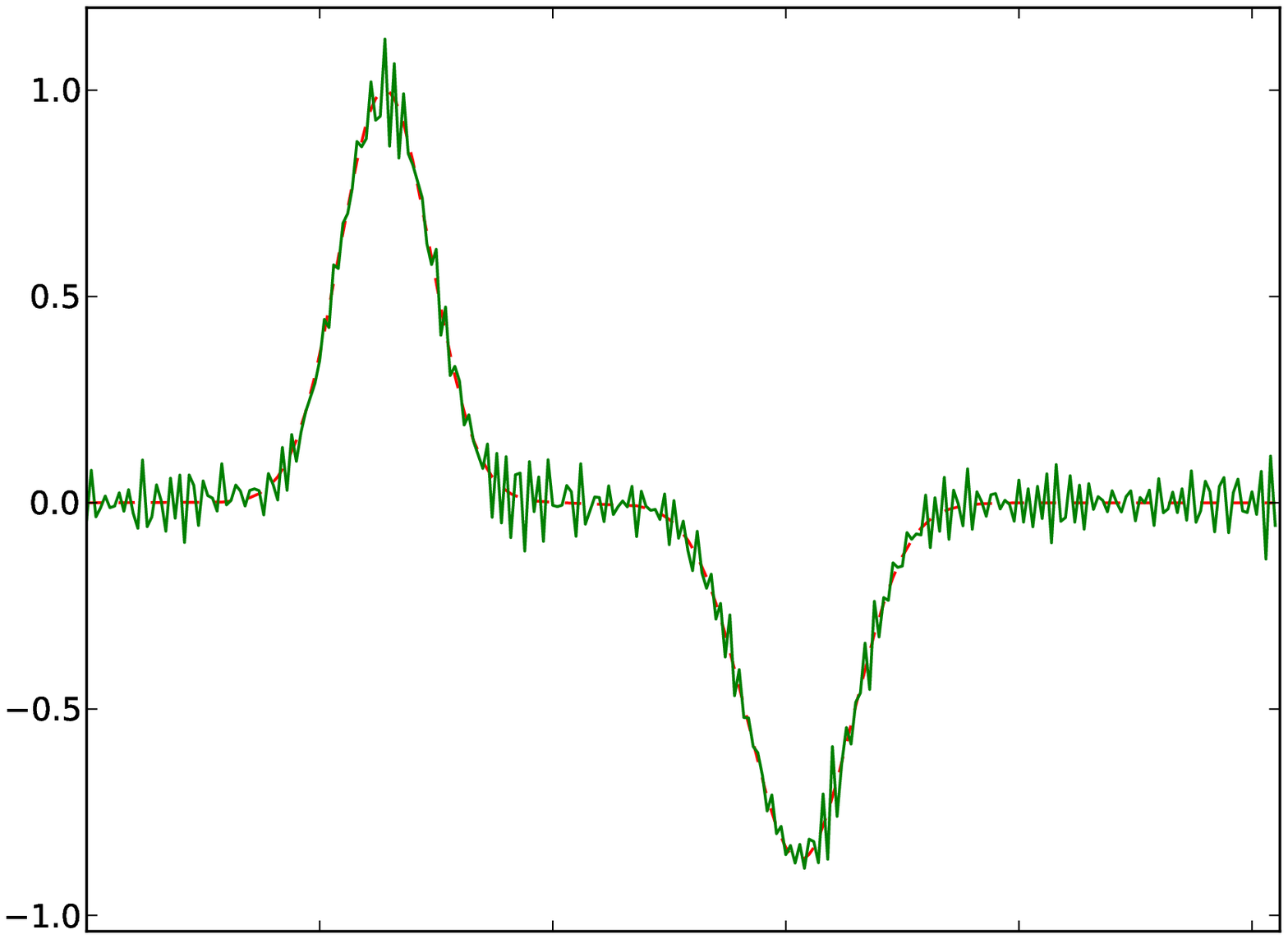}}
  \subfloat{\includegraphics*[width=2.5in,trim=0 30 0 0,clip]{./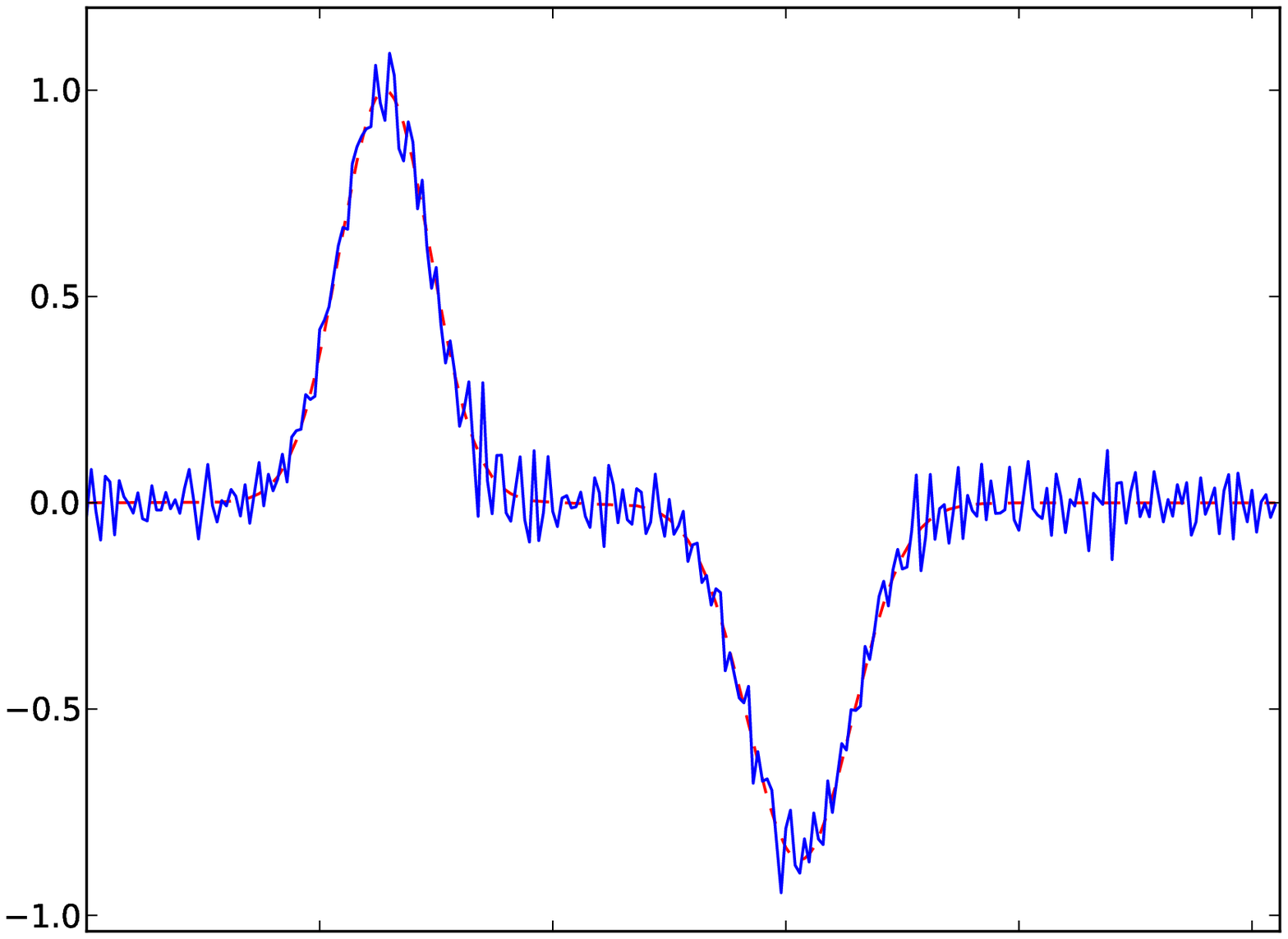}} \\       
  \subfloat{\includegraphics*[width=2.5in,trim=0 30 0 0,clip]{./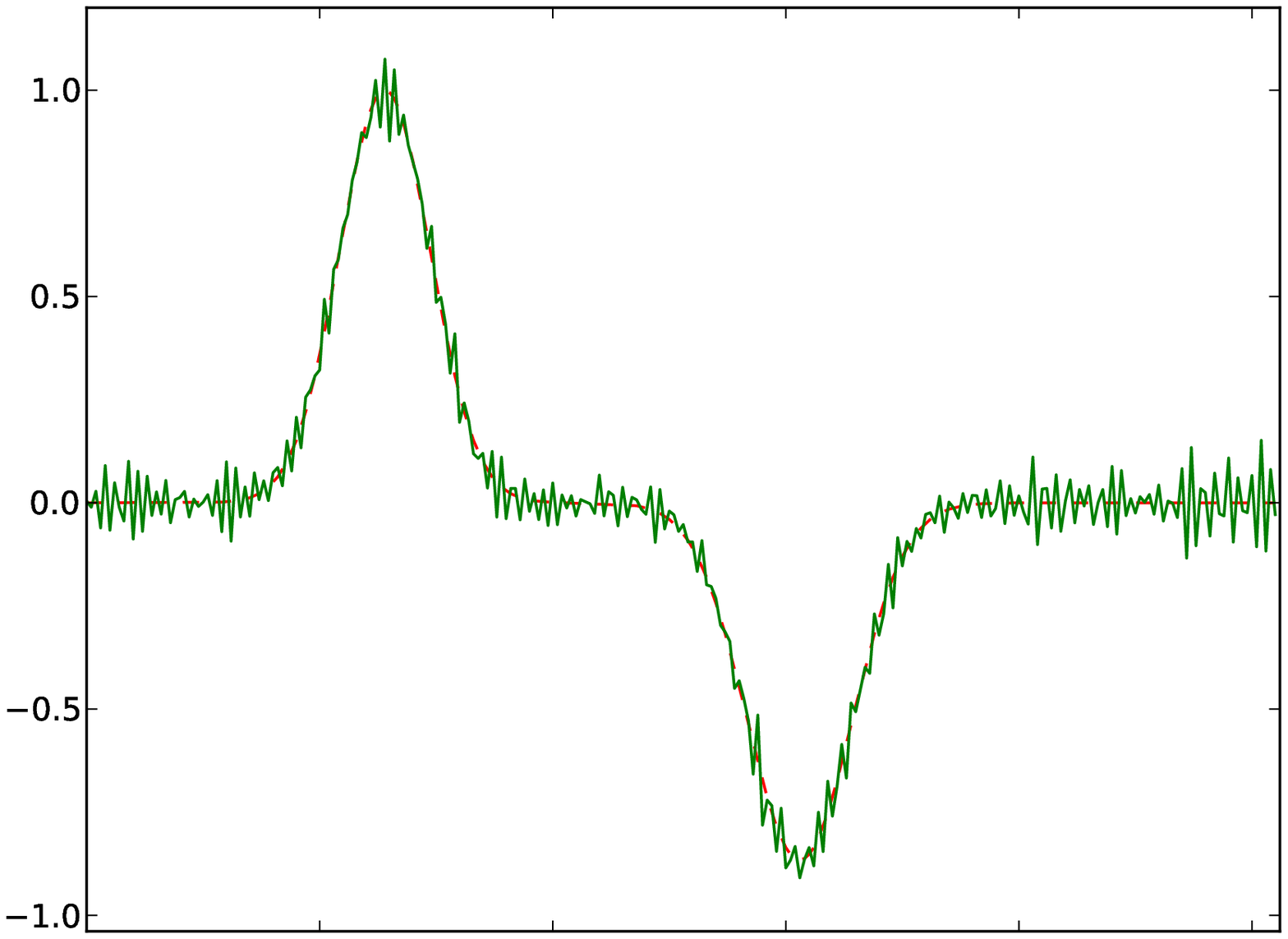}}
  \subfloat{\includegraphics*[width=2.5in,trim=0 30 0 0,clip]{./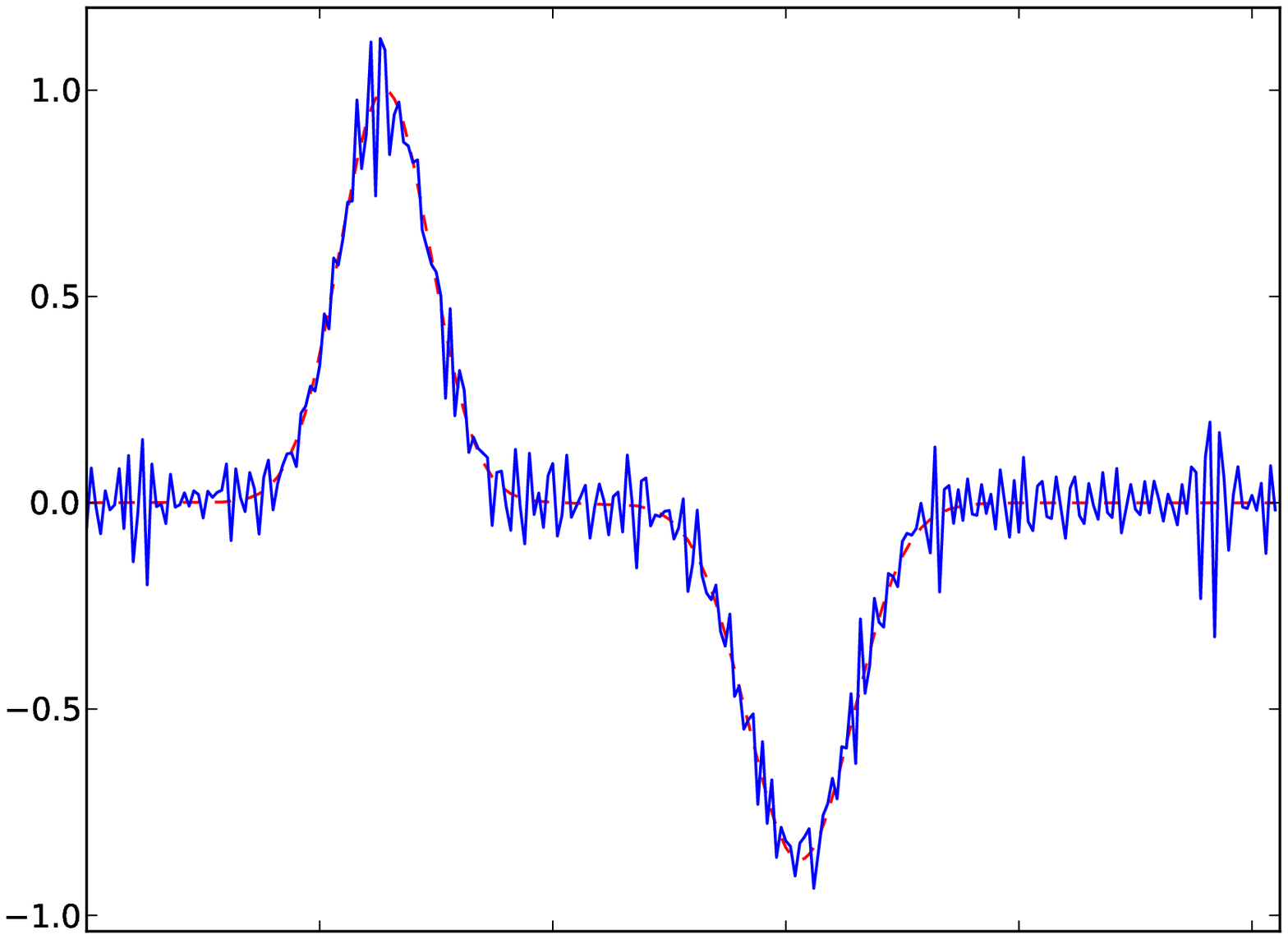}} \\       
  \subfloat{\includegraphics*[width=2.5in,trim=0 30 0 0,clip]{./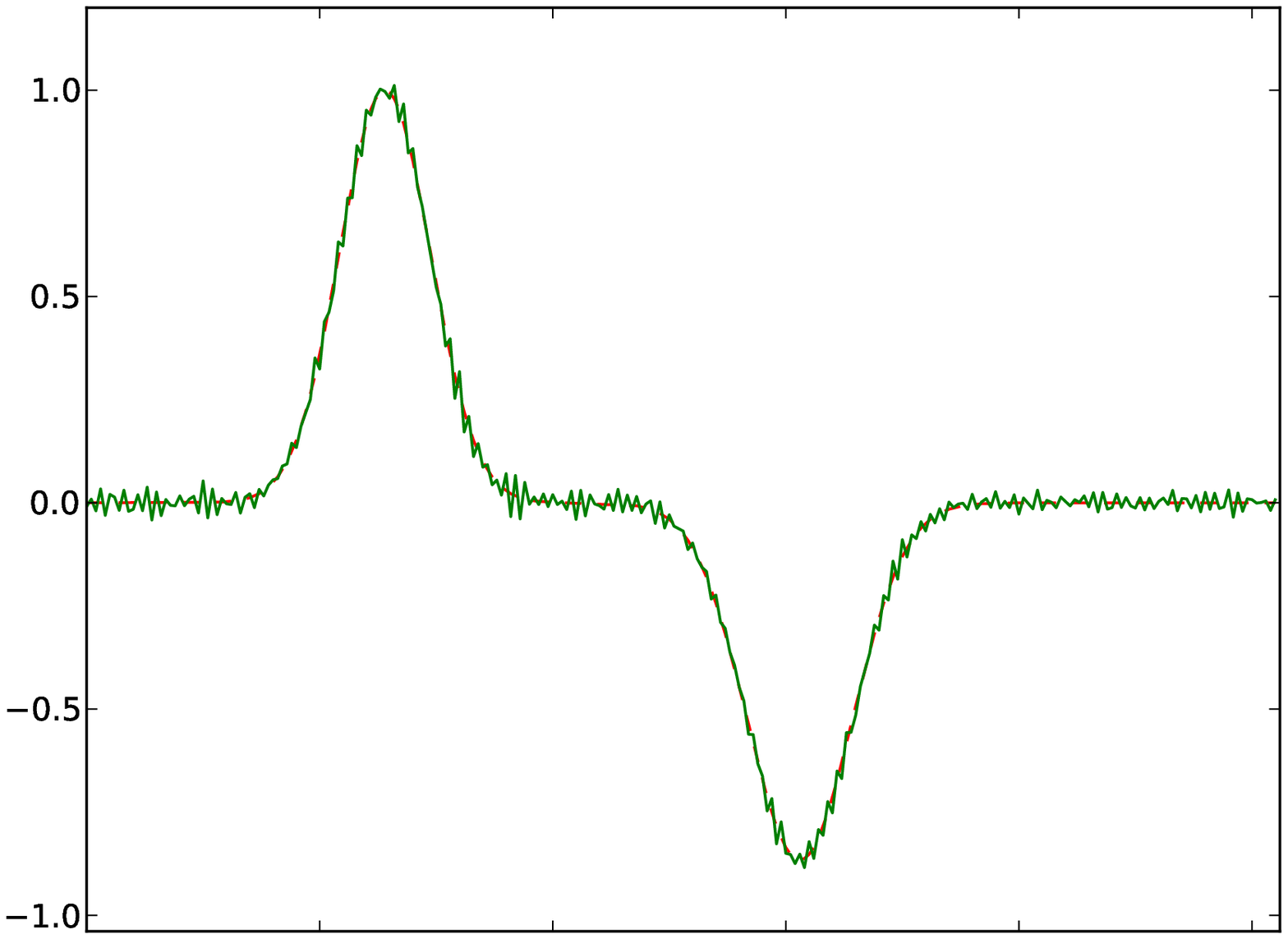}}
  \subfloat{\includegraphics*[width=2.5in,trim=0 30 0 0,clip]{./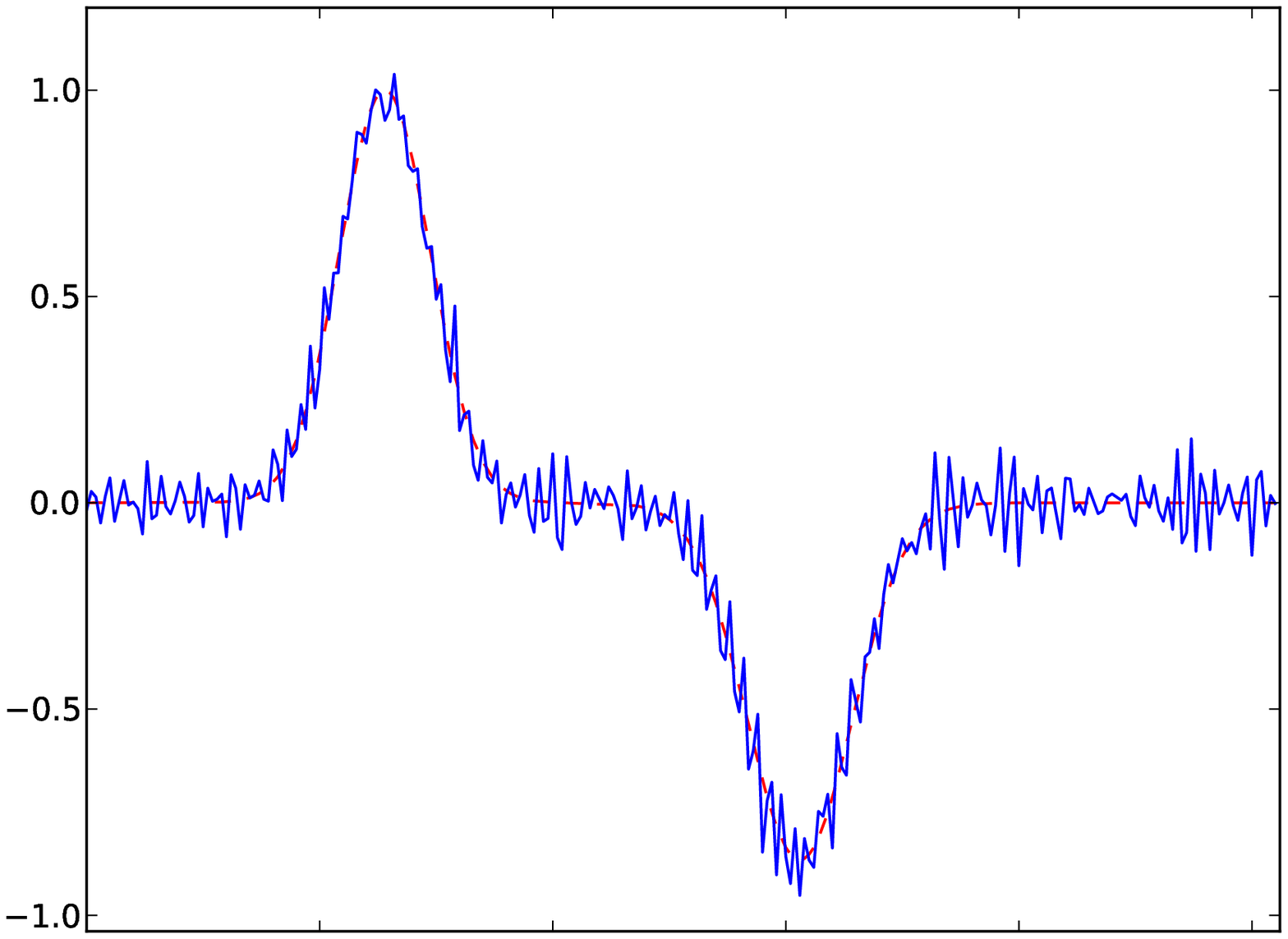}} \\       
  \subfloat[$\hat\theta_n$]{\includegraphics*[width=2.5in,trim=0 30 0 0,clip]{./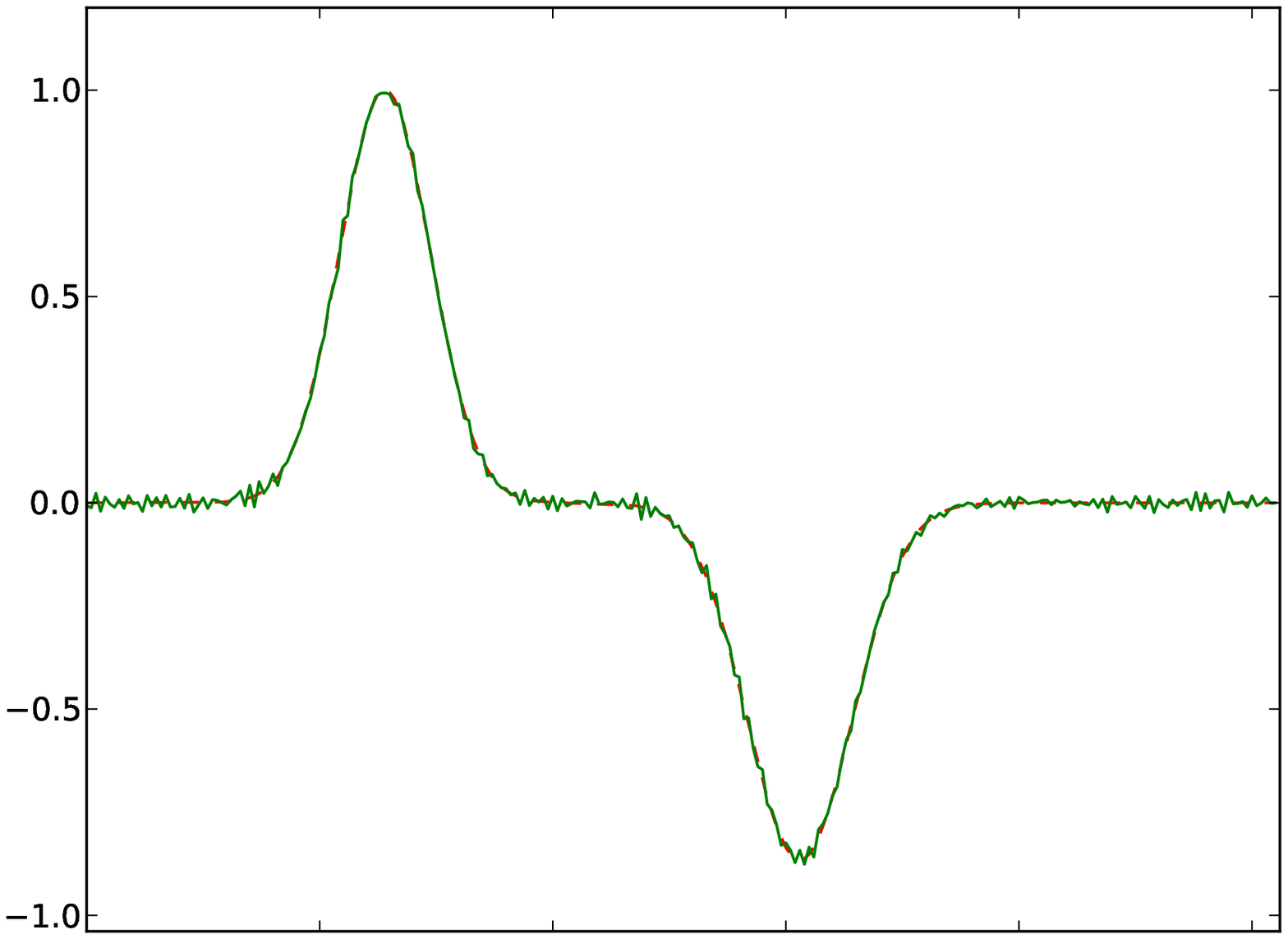}}
  \subfloat[$\hthetaRidge$]{\includegraphics*[width=2.5in,trim=0 30 0 0,clip]{./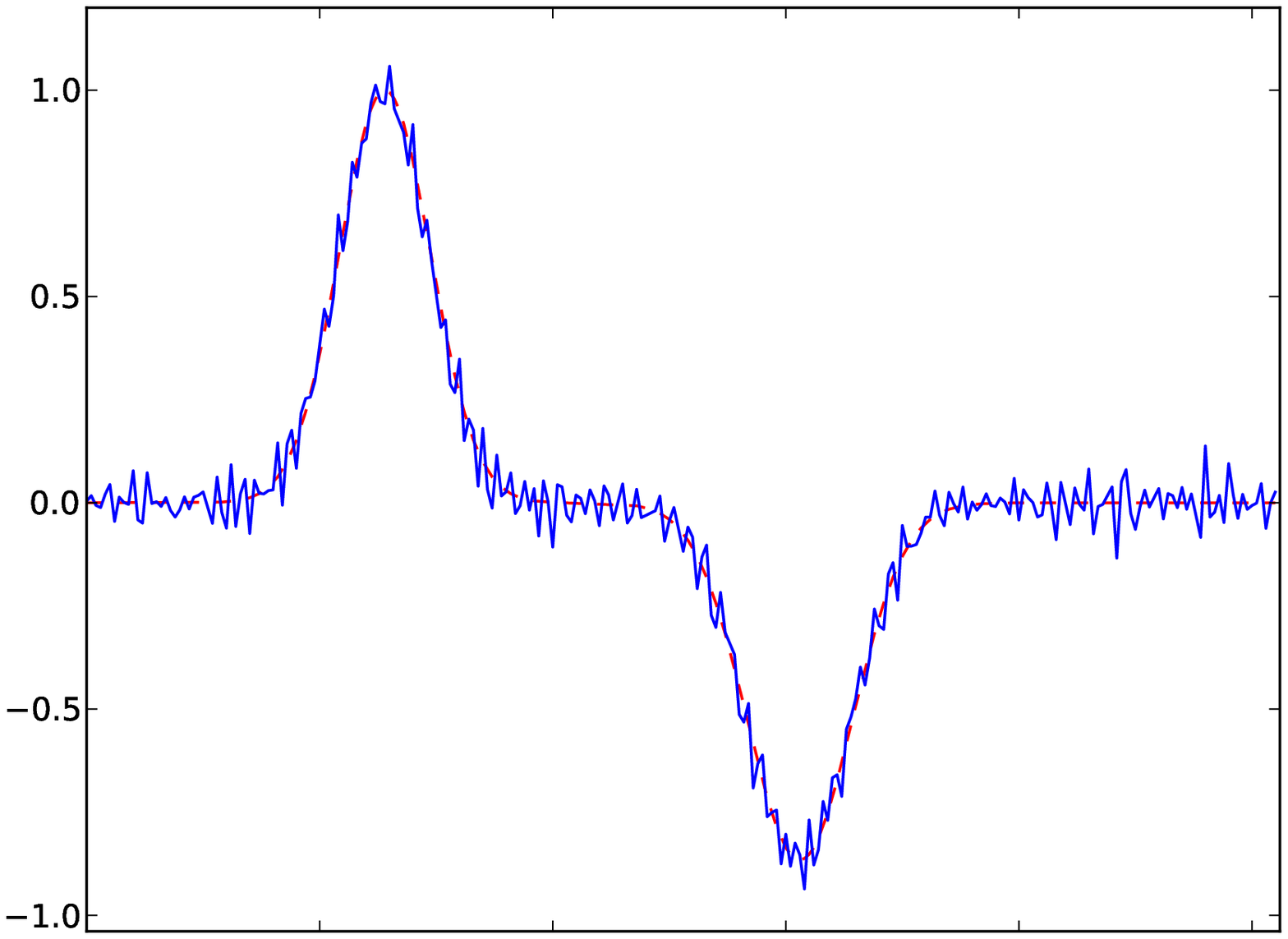}} \\
  \caption{Estimation of $\thetaS$ by $\hat\theta_n$ (left column) and $\hthetaRidge$ 
(right column).  The sample sizes range from
top to bottom, $n = 50, 100, 200, 300$.  Our estimator, $\hat\theta_n$,
quickly converges to $\thetaS$. However, $\hthetaRidge$, which doesn't zero out any
coefficients, still has substantial fluctuations after $n=300$ observations.  See Table \ref{tab:RR} 
for $RR$ results for this simulation.}
\label{fig:1dsmooth}
\end{figure}

For the signal 
$\theta^{\mathrm{peaked}}$, $\hat\theta_n$
estimates the true height of the peaks accurately and quickly.  In particular, the
secondary small peak is definitively identified with the correct shape and height 
for $n = 50$ observations, while for $\hthetaRidge$, the secondary peak is much less clear.
There are still some remaining oscillations at $n = 300$, resulting from 
unavoidable consequence of using the eigenvector basis.  This is a well-known
phenomenon in Fourier analysis known as the `Gibbs effect.'  Even with 
this obstacle, $\hat\theta_n$ converges quickly to $\theta^{\mathrm{peaked}}$.
See Figure \ref{fig:1dpeaked} for graphical results.

\begin{figure}          
  \centering            
  \subfloat{\includegraphics*[width=2.5in,trim=0 30 0 0,clip]{./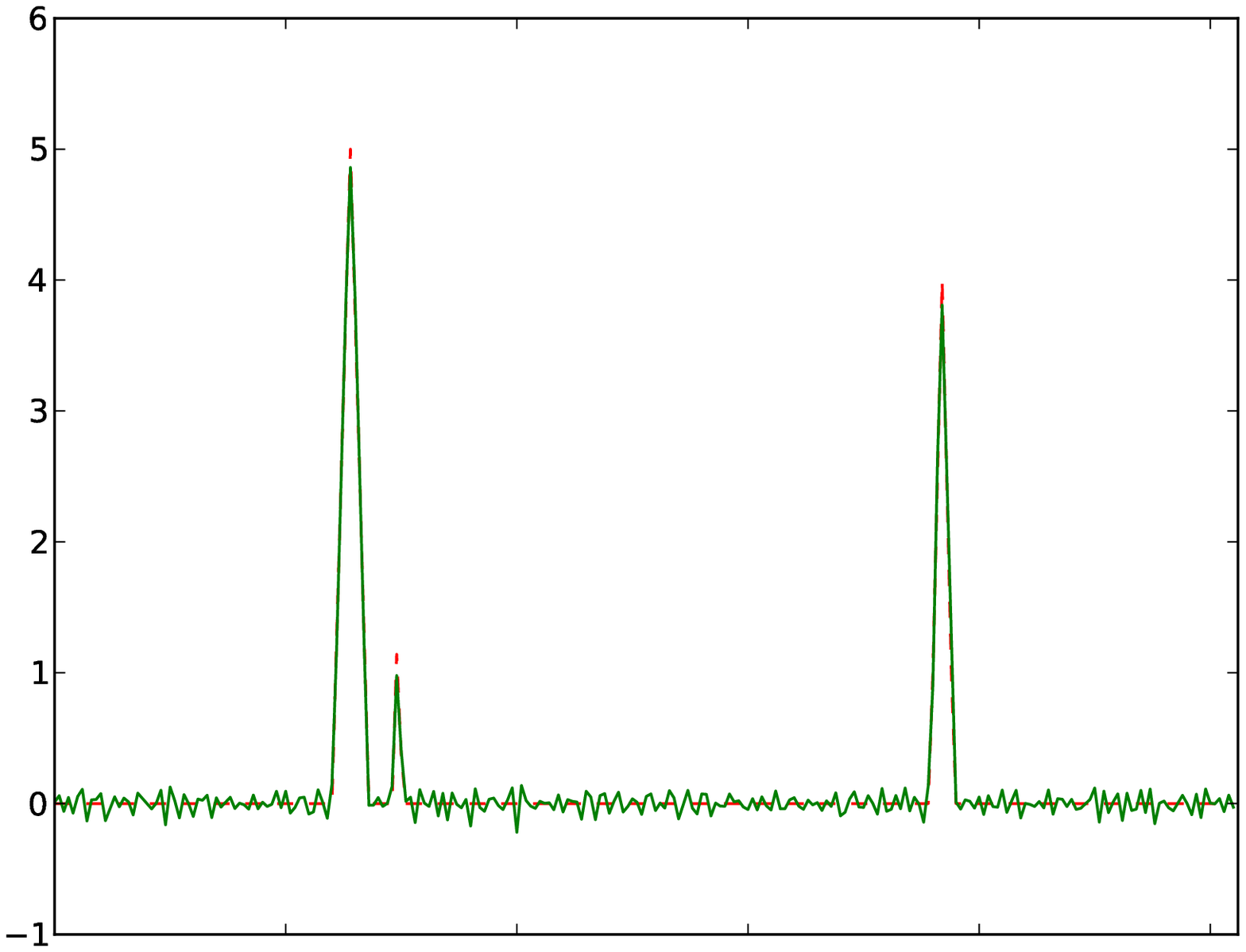}}
  \subfloat{\includegraphics*[width=2.5in,trim=0 30 0 0,clip]{./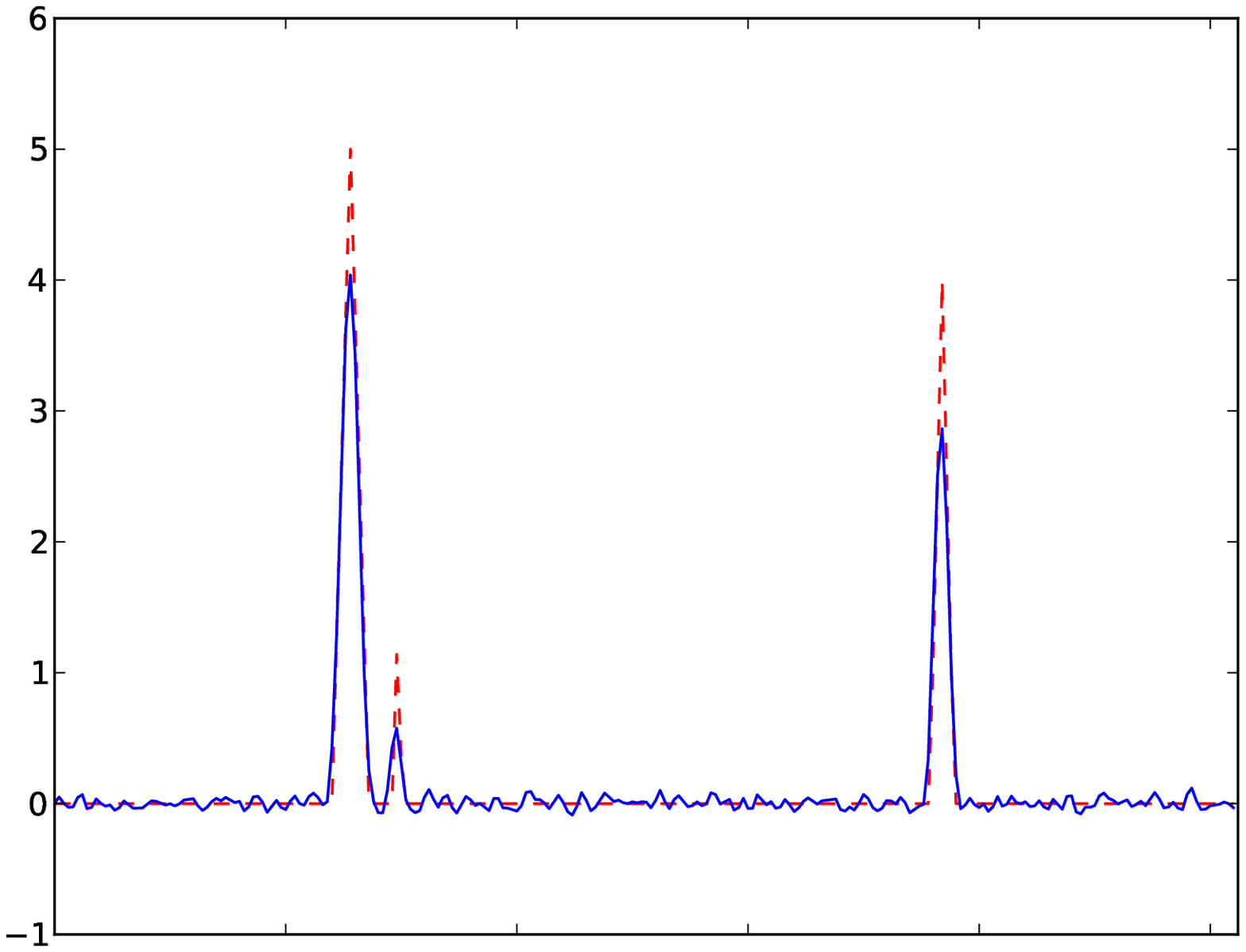}} \\       
  \subfloat{\includegraphics*[width=2.5in,trim=0 30 0 0,clip]{./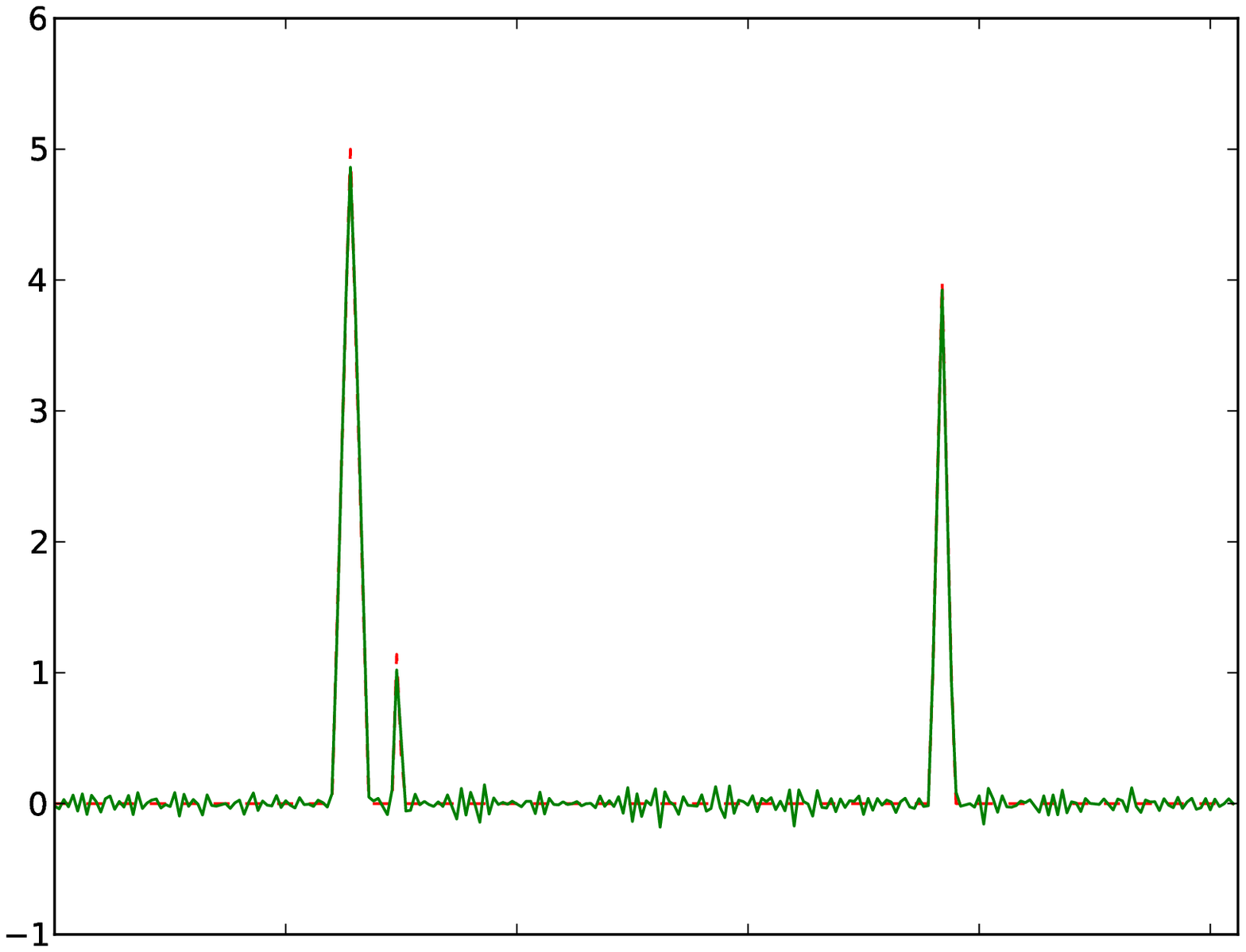}}
  \subfloat{\includegraphics*[width=2.5in,trim=0 30 0 0,clip]{./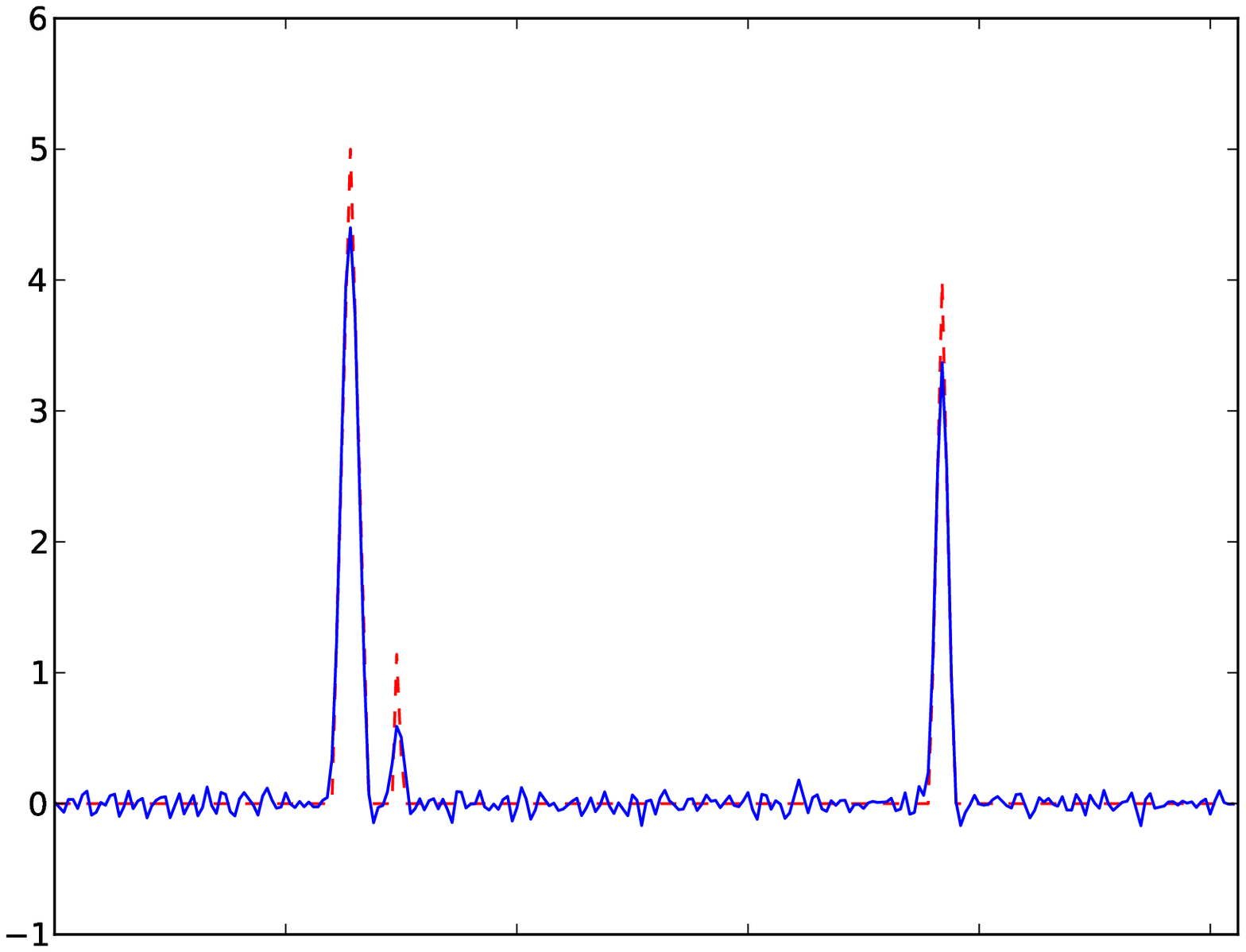}} \\       
  \subfloat{\includegraphics*[width=2.5in,trim=0 30 0 0,clip]{./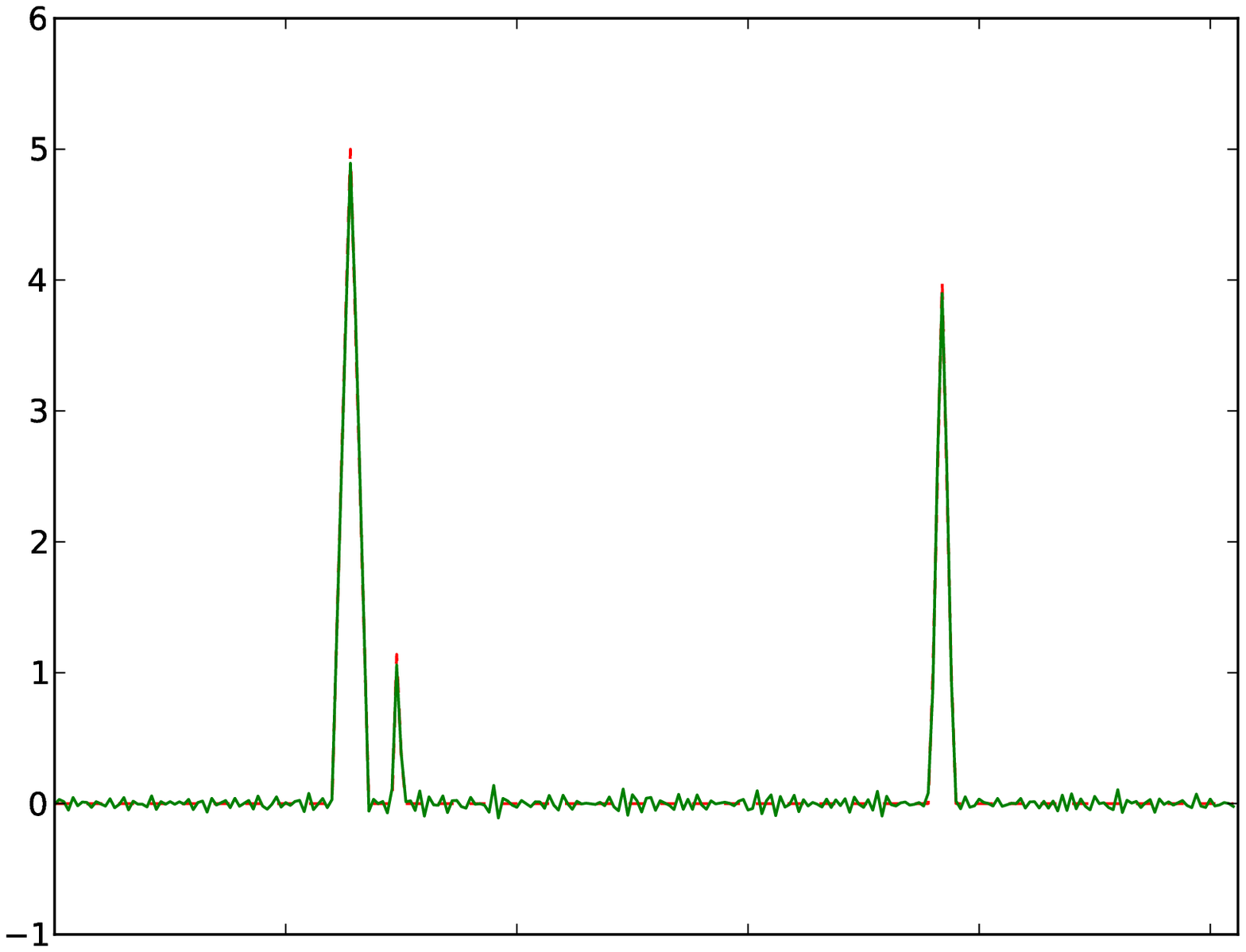}}
  \subfloat{\includegraphics*[width=2.5in,trim=0 30 0 0,clip]{./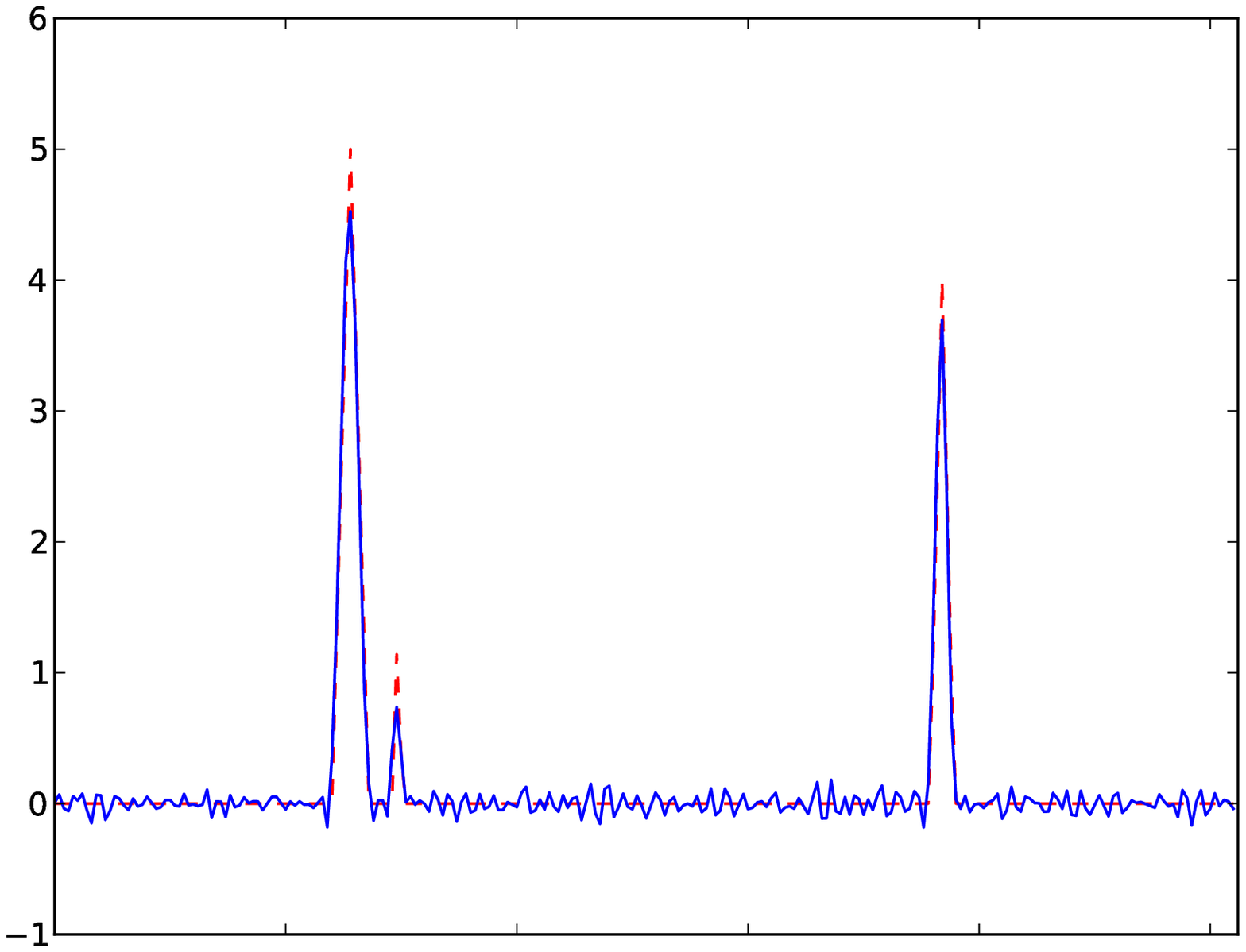}} \\       
  \subfloat[$\hat\theta_n$]{\includegraphics*[width=2.5in,trim=0 30 0 0,clip]{./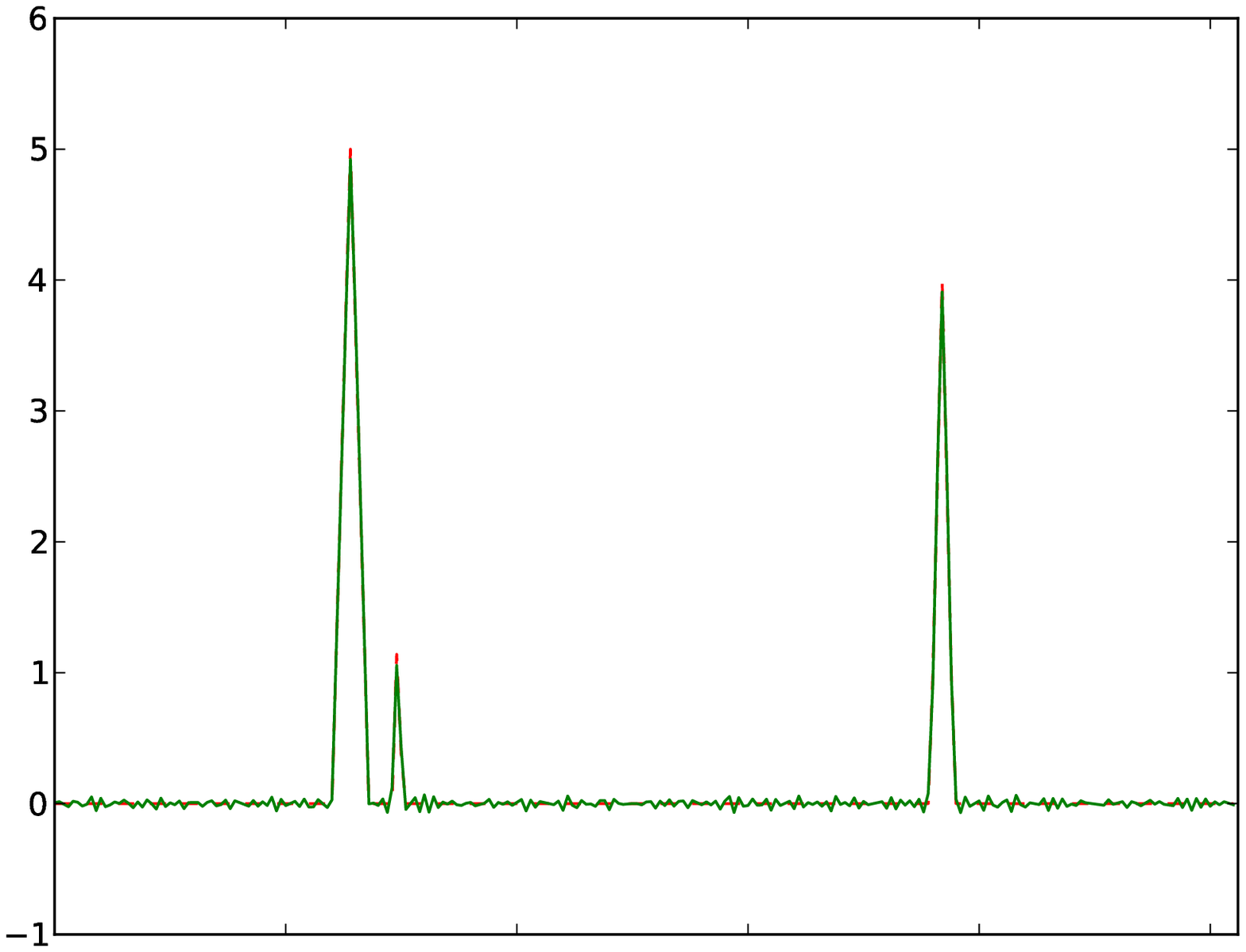}}
  \subfloat[$\hthetaRidge$]{\includegraphics*[width=2.5in,trim=0 30 0 0,clip]{./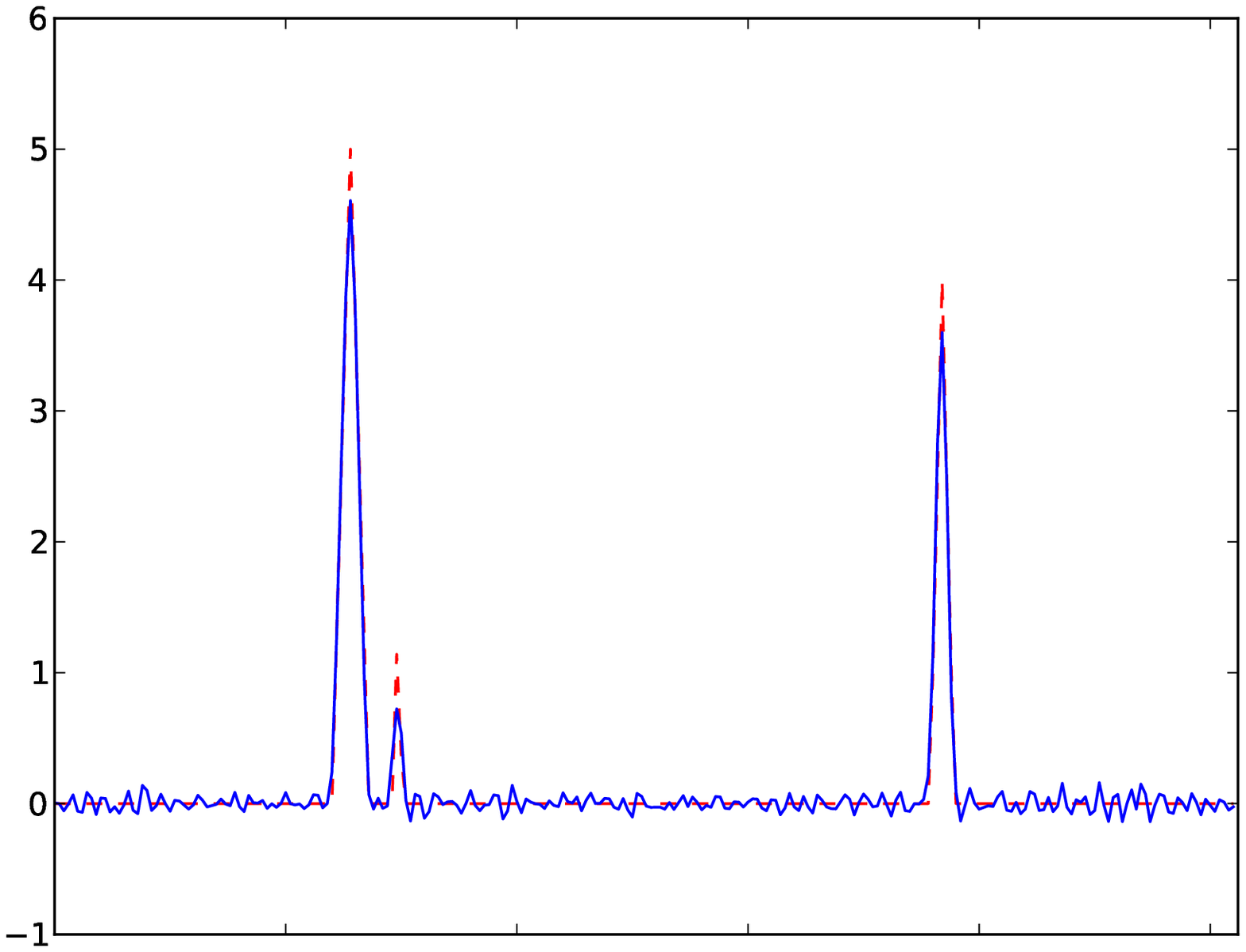}} \\
\caption{Estimation of $\theta^{\mathrm{peaked}}$ by $\hat\theta_n$ (left column) and $\hat \theta_{\text{ridge}}$ 
(right column).  The sample sizes range from
top to bottom, $n = 50, 100, 200, 300$.  Our estimator, $\hat\theta_n$,
estimates the true height of the peaks accurately and quickly.  In particular, the
secondary small peak is definitively identified with the correct shape and height.
There are still some remaining oscillations at $n = 300$, resulting from 
an unavoidable Gibbs effect from using the eigenvectors as a basis.
See Table \ref{tab:RR} 
for $RR$ results for this simulation.}
\label{fig:1dpeaked}
\end{figure}

\begin{table}[!h]
\begin{tabular}{l|cc|cc}
&&&& \\
  & $RR(\hat\theta_n,\thetaS)$ & $RR(\hthetaRidge,\thetaS)$
  & $RR(\hat\theta_n,\theta^{\text{peaked}})$ & $RR(\hthetaRidge,\theta^{\text{peaked}})$ \\
&&&& \\
\hline
$n = 50$              & 0.291              & 0.288       & 0.148              & 0.151 \\
$n = 100$             & 0.210              & 0.223       & 0.116              & 0.171 \\
$n = 200$             & 0.149              & 0.199       & 0.092              & 0.149 \\
$n = 300$             & 0.120              & 0.173       & 0.079              & 0.141   
\end{tabular} 
\caption[Results of $RR$ for $\thetaS$ and $\thetaP$]
{The $RR$ for the two considered simulations.  These are estimated by averaging 100 runs of our
simulations.}
\label{tab:RR}
\end{table}

\section{Discussion}
In this paper, we provide a general method for recovering an unknown signal
given a sequence of noisy observations that are only indirectly of that
signal of interest.  Our estimator, $\hat \theta_n$, has many favorable properties.  It has
computational efficiency in the sense that it can
be updated with a new observation without need to reference the entire
sequence of observations.  Instead, it relies on only a few summary statistics that need
to be maintained
and updated.  Though its computation is predicated on finding the eigenvectors
and eigenvalues of potentially large matrices, the implementation is straightforward
and generalizable to higher dimensional signals such as images.
Additionally, there exist accurate methods for the approximate computation of the eigenvectors
of matrices that could in principle be used to speed up the computation of $\Psi$.

Also, $\hat \theta_n$ is statistically efficient as well. The uniform consistency and oracle inequality
results show that it is making about as good a use of the data as possible.
Likewise, $\hat \theta_n$ has worked very well in our experiments so far, as evidenced 
by the results in Figures
\ref{fig:satelliteExample}, \ref{fig:1dsmooth}, and \ref{fig:1dpeaked}. Our estimator
can recover the unknown signal $\theta$ efficiently, requiring very few observations.
Also, referring to the second row of Figure \ref{fig:satelliteExample}, the collection
of a very poor observation merely doesn't improve the estimate instead of decreasing its quality.
This is in opposition to many currently implemented techniques such as straight averaging,
where low quality observations decrease the quality of the recovery.

\appendix

\section{}
\label{sec:sameSpectrum}
This section gives warrant for
assumption (A3) in Section \ref{sec:methodology}.
Although a  slightly weaker version of assumption (A3) is all that is actually
required (that  only the right eigenvectors need be the same
instead of both left and right eigenvectors) we leave it in its current form
for simplicity of exposition and conditions.

Two real matrices $A,B$ share the 
same eigenvectors if they are simultaneously unitarily diagonalizable;
that is, there exists two diagonal matrices $\Sigma_1,\Sigma_2$ and an unitary  matrix
$\Psi$ such that $A = \Psi\Sigma_1 \Psi^*$ and $B = \Psi\Sigma_2 \Psi^*$.  
Note $A$ and $B$ must of course be unitarily diagonalizable, which implies
by the spectral theorem that $A$ and $B$ are normal; that is $A^{\top}A = AA^{\top}$
and $B^{\top}B = BB^{\top}$.  The following
theorem characterizes simultaneous diagonalizability.

\begin{lemma}
Let $\mathcal{K}$ be a commuting family of normal matrices.  Then $\mathcal{K}$
is also simultaneously unitarily diagonalizable.
\label{lemma:sameEvecsLemma}
\end{lemma}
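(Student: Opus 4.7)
The plan is to build up from a single normal matrix to an arbitrary commuting family in four stages: a single matrix, a pair, a finite family by induction, and finally an arbitrary family via a finest-common-decomposition argument enabled by finite dimensionality. The base case is the spectral theorem for one normal matrix, which yields $K = U\Sigma U^*$ with $U$ unitary and $\Sigma$ diagonal, and decomposes $\mathbb{C}^p$ into mutually orthogonal eigenspaces.

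For two commuting normal matrices $A, B \in \mathcal{K}$, I would decompose $\mathbb{C}^p = \bigoplus_\mu E_\mu$ into the eigenspaces of $A$, observe that $AB = BA$ forces each $E_\mu$ to be $B$-invariant, and then verify that $B$ restricted to $E_\mu$ is itself normal. This last fact is where I would invoke Fuglede's theorem (commuting with a normal matrix implies commuting with its adjoint), which guarantees that $E_\mu$ is also $B^*$-invariant and hence that the restriction of $B$ to $E_\mu$ is normal. Applying the single-matrix spectral theorem to each such restriction and concatenating the resulting orthonormal bases produces a single orthonormal basis that simultaneously diagonalizes $A$ and $B$. A straightforward induction on the number of matrices then extends the result to every finite commuting family.

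To reach an arbitrary (possibly infinite) family $\mathcal{K}$, I would pass to the collection of subspaces of $\mathbb{C}^p$ that are invariant under every $K \in \mathcal{K}$. Finite dimensionality guarantees the existence of nonzero subspaces $V$ that are common-invariant and minimal for this property. On any such $V$, each restriction is normal; any two distinct eigenspaces of some restriction would, by the pairwise argument already established, give a proper common-invariant refinement of $V$, contradicting minimality. Hence each restriction is a scalar on $V$, and any orthonormal basis of $V$ simultaneously diagonalizes every element of $\mathcal{K}$ on $V$. Since the minimal common-invariant subspaces are themselves eigenspaces of individual elements of $\mathcal{K}$, they are mutually orthogonal, and taking the union of bases over them yields the desired simultaneous orthonormal eigenbasis of $\mathbb{C}^p$.

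The main obstacle I anticipate is ensuring that the restriction $B|_{E_\mu}$ inherits normality from $B$; without this, the inductive step collapses. I plan to handle it via Fuglede's theorem, though a short direct argument using that $A^*$ is a polynomial in $A$ (since $A$ is diagonalizable with finite spectrum, via Lagrange interpolation on the eigenvalues) also suffices and keeps the proof self-contained.
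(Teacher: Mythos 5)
Your proof is correct, but it takes a genuinely different route from the paper. The paper's argument is a two-line reduction: it invokes the simultaneous Schur unitary triangularization theorem for commuting families (Horn and Johnson, Theorem 2.3.1) to get a single unitary $\Psi$ making every $K \in \mathcal{K}$ upper triangular, and then observes that a normal upper-triangular matrix must be diagonal. Your argument instead constructs the simultaneous eigenbasis from the ground up: spectral theorem for one matrix, common invariant eigenspaces for a pair, induction for finite families, and minimal common-invariant subspaces for arbitrary families. The trade-off is clear: the paper's proof is far shorter but outsources the real work to a cited theorem, while yours is self-contained and exhibits the simultaneous eigenspace decomposition explicitly, which is arguably closer in spirit to what the paper actually uses downstream (the shared eigenvector matrix $\Psi$ and the component-wise sequence-space model). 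Two small remarks on your details. First, the step you flag as the main obstacle --- that $B|_{E_\mu}$ is normal --- can be handled even more cheaply than Fuglede or Lagrange interpolation: since the eigenspaces of the normal matrix $A$ are mutually orthogonal and each is $B$-invariant, $E_\mu^{\perp}$ (the sum of the other eigenspaces) is also $B$-invariant, so $E_\mu$ reduces $B$ and is automatically $B^*$-invariant. Second, your closing claim that the minimal common-invariant subspaces ``are themselves eigenspaces of individual elements'' is slightly loose --- they are contained in intersections of such eigenspaces --- but the mutual orthogonality you need still follows, since two distinct simultaneous eigenspaces are separated by some $K \in \mathcal{K}$ whose distinct eigenspaces are orthogonal by normality. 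With those touches the argument is complete.
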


\begin{proof}[Proof of Lemma \ref{lemma:sameEvecsLemma}]
By the Schur unitary triangularization theorem \citep[Theorem 2.3.1]{horn1985}
if $\mathcal{K}$ is a commuting family of matrices, then there is a unitary $\Psi$
such that $\Psi K \Psi^*$ is upper triangular for every $K \in \mathcal{K}$.  Hence,
as normality is preserved under unitary congruence and a triangular normal matrix must
be diagonal, the result follows.
\end{proof}

Though all Toeplitz matrices commute asymptotically as the number of
rows and columns increases, not all Toeplitz matrices commute for a fixed size.
Many subsets of the family of Toeplitz matrices satisfy Lemma \ref{lemma:sameEvecsLemma}, 
however.  In particular, all circulant matrices commute \citep[Chapter 3.1]{gray2001}.  This shows
Theorem \ref{thm:sameEvecsMain}.

\section{}
\label{sec:proofs}
We utilize the following notation in several of the below proofs.  We use $\lesssim$ to indicate
`less than or equal to up to a constant independent of $n$.'
Also, it is convenient to think of a complex number $a = a_1 + a_2i$ as an element 
$(a_1,a_2) \in \mathbb{R}^2$.  
In this case, we use $|||a|||^2 = a_1^2 + a_2^2$ as a norm on $\mathbb{R}^2$, 
as the complex modulus is not technically defined
on elements of $\mathbb{R}^2$.  Additionally, 
$Z \sim N(0,I_2)$ is the two dimensional standard normal. Lastly, 
we define $s_{nj} := \Omega_n^2 \ep^2/\Delta_{nj}$.

We begin with a lemma that will be used in the proofs of Theorem \ref{thm:mainResult1} and
Theorem \ref{thm:mainResult2}:
\begin{lemma}
Let $\mu \in \mathbb{R}^2$ be a vector, 
$\Sigma = $ diag$(\sigma_1^2,\sigma_2^2)$ be a diagonal matrix with positive entries, and $c^2$ be 
a real, positive constant.  Then
\begin{equation}
\mathbb{P}( ||| \mu + \Sigma^{1/2} Z |||^2 \leq c^2 ) \leq 
\mathbb{P}( ||| \mu + \sigma_{\text{max}} Z |||^2 \leq c^2 )
\label{eq:useMaxVarLemma}
\end{equation}
if $|||\mu||| > c$ and $\sigma_{\text{max}} = \max\{\sigma_1,\sigma_2\}$.
\label{lemma:useMaxVar}
\end{lemma}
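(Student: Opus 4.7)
The plan is to reduce the 2D inequality to a 1D monotonicity by conditioning on the coordinate associated with $\sigma_{\text{max}}$. Without loss of generality assume $\sigma_2 = \sigma_{\text{max}}$; then the second coordinate $\mu_2 + \sigma_{\text{max}} Z_2$ has the same distribution under $\mu + \Sigma^{1/2} Z$ and under $\mu + \sigma_{\text{max}} Z$. Conditioning on $Z_2 = z_2$ and setting $y_2 := \mu_2 + \sigma_{\text{max}} z_2$, the event $|||\cdot|||^2 \leq c^2$ is vacuous unless $y_2^2 \leq c^2$, and in that case reduces to $(\mu_1 + \sigma Z_1)^2 \leq d^2$ with $d := \sqrt{c^2 - y_2^2}$ and $\sigma \in \{\sigma_1, \sigma_{\text{max}}\}$.

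Thus the task becomes the one-dimensional statement that $f(\sigma) := \mathbb{P}(|\mu_1 + \sigma Z_1| \leq d) = \Phi((d-\mu_1)/\sigma) - \Phi(-(d+\mu_1)/\sigma)$ is non-decreasing in $\sigma$ on $[\sigma_1, \sigma_{\text{max}}]$. Computing $f'(\sigma)$ produces a sign controlled by the comparison between $(|\mu_1| - d)\, e^{-(|\mu_1|-d)^2/(2\sigma^2)}$ and $(|\mu_1|+d)\, e^{-(|\mu_1|+d)^2/(2\sigma^2)}$; this is favorable precisely when the relevant $\sigma$ lies in the increasing branch of $t \mapsto t\, e^{-t^2/(2\sigma^2)}$ at the points $|\mu_1| \pm d$. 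The hypothesis $|||\mu||| > c$, together with $d^2 = c^2 - y_2^2 \leq c^2 < |||\mu|||^2$, is what provides the separation between $|\mu_1|$ and $d$ needed to place us in this regime. Integrating the pointwise bound against $\phi(z_2)\, dz_2$ over $\{z_2 : y_2^2 \leq c^2\}$ then yields the two-dimensional inequality.

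The step I expect to be most delicate is establishing the 1D monotonicity uniformly across all $z_2$ in the integration range: $|\mu_1|$ must be large enough relative to $d(z_2)$ throughout, not just on average, and this is where the hypothesis $|||\mu||| > c$ has to be squeezed most carefully. If the direct differentiation approach is too fragile, a fallback is the coupling $\mu + \sigma_{\text{max}} Z \stackrel{d}{=} \mu + \Sigma^{1/2} Z + V$ with $V$ a centered Gaussian supported on the deficient axis and independent of $Z$, which recasts the right-hand probability as $\mathbb{E}_V[q(V)]$ for $q(v) := \mathbb{P}(\mu + \Sigma^{1/2} Z + v \in B(0,c))$. The function $q$ is log-concave by Prékopa--Leindler, and the condition $|||\mu||| > c$ is what ensures that the symmetric Gaussian average $\mathbb{E}_V[q(V)]$ dominates $q(0)$.
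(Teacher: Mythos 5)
You have correctly isolated the crux of any proof of Lemma \ref{lemma:useMaxVar} --- after conditioning on the $\sigma_{\text{max}}$ coordinate everything reduces to showing $f(\sigma):=\mathbb{P}(|\mu_1+\sigma Z_1|\le d)$ is non-decreasing on $[\sigma_1,\sigma_{\text{max}}]$ --- but that step is where the argument dies, and no choice of wording around the hypothesis $|||\mu|||>c$ can save it. First, $|||\mu|||>c$ gives no control on $|\mu_1|$ at all: if $\mu$ is aligned with the $\sigma_{\text{max}}$ axis then $\mu_1=0$, the conditional mean sits inside $[-d,d]$, and $f$ is strictly \emph{decreasing} in $\sigma$ for every $d>0$. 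Second, even when $|\mu_1|>d$, your own derivative computation shows $f$ is unimodal, increasing only while $\sigma$ is small relative to $|\mu_1|-d$ and decreasing to zero thereafter, and nothing in the hypotheses prevents $\sigma_{\text{max}}$ from lying past the mode. Indeed the lemma as stated is false. Take $\mu=(2,0)$, $c=1$, $\sigma_1=1$, $\sigma_2=\sigma_{\text{max}}=100$. The left side is at least $\mathbb{P}\bigl(|2+Z_1|\le \tfrac{\sqrt{3}}{2}\bigr)\,\mathbb{P}\bigl(|Z_2|\le\tfrac{1}{200}\bigr)\approx (0.126)(0.004)\approx 5\times10^{-4}$, while the right side is at most the area of the unit disk times the maximal density of $N(\mu,10^4 I_2)$, namely $\pi\cdot\frac{1}{2\pi\cdot 10^4}=5\times10^{-5}$; the claimed inequality fails by a factor of ten. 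Your fallback is also unfounded: for the coupling $\mu+\sigma_{\text{max}}Z\stackrel{d}{=}\mu+\Sigma^{1/2}Z+V$, log-concavity of $q$ gives $\sqrt{q(v)q(-v)}\le q(0)$, which is evidence that the symmetric average $\mathbb{E}_V[q(V)]$ is \emph{smaller} than $q(0)$, the opposite of what you need, and the condition $|||\mu|||>c$ plays no role in reversing that.

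For what it is worth, the paper does not prove this lemma either: it explicitly offers only the heuristic that a more spread-out normal puts more mass in a ball not containing its mean, and the counterexample above refutes exactly that heuristic (spreading eventually dilutes the density everywhere, so the ball probability is eventually decreasing in the spread). So your attempt is not failing where the paper succeeds; rather, both founder on a statement that needs an extra quantitative hypothesis tying $\sigma_{\text{max}}$ and the anisotropy to $|||\mu|||-c$. The practical repair is to bypass the lemma entirely: in the proofs of Theorems \ref{thm:mainResult1} and \ref{thm:mainResult2} the lemma is only used to reach the bound $1-\Phi\bigl((|||\mu|||-c)/\sigma_{\text{max}}\bigr)$, and that follows directly by projecting onto $\mu/|||\mu|||$. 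On the event $|||\mu+\Sigma^{1/2}Z|||\le c$ the projection is at most $c$; the projection is distributed as $|||\mu|||+N(0,v)$ with $v=\mu^{\top}\Sigma\mu/|||\mu|||^2\le\sigma_{\text{max}}^2$; and $\Phi\bigl((c-|||\mu|||)/\sqrt{v}\bigr)$ is increasing in $v$ because $c-|||\mu|||<0$. That one-line argument delivers everything the lemma is used for, without the false two-dimensional comparison.
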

Here, we don't give a formal proof but provide intuition. The probability in equation 
\eqref{eq:useMaxVarLemma} 
corresponds to the amount of the mass of an elliptical normal, aligned with the cononical axis,
that resides in a ball of radius $c$ at the origin.
Hence, if $|||\mu||| > c$ (that is, the mean is outside the ball) 
a more spread out the normal results in more mass  inside the ball.  

\begin{proof}[Proof of Theorem \ref{thm:mainResult1}]
For simplicity, write $\hat\be_n := \hat \lamVec(\mathbf{B}_n)$.
Then
\[
\sup_{\theta \in \Theta} R_n(\hat \theta, \theta) = \sup_{\be \in \B} R_n(\hat \be_n, \be),
\]
where $\B := \{\be: ||\be||^2 \leq \parm \} = \Psi^*\Theta$.  Then we wish to show that
\begin{equation}
\limsup_{n \rightarrow \infty} \sup_{\be \in \B} R_n(\hat \be_n, \be) = 0,
\end{equation}
where the subscript $n$ on $R$ has been included to emphasize the dependence on $n$.

We begin by defining the following set
\[
A_j := \{\omega: |B_{nj}(\omega)|^2 > s_{nj}^2 \}
\]
where $\omega$ ranges over the measure space on which the random variable $B_{nj}$ is defined.
The utility of defining $A_j$ is
\begin{equation}
\hat \be_{nj} \mathbf{1}_{A_j} = 
\left(1 - \frac{\Omega_n^2\ep^2}{\Delta_{nj}|B_{nj}|^2} \right) B_{nj} \mathbf{1}_{A_j}
\end{equation}

Additionally, write $B_{nj} = \be_j + Z_{nj}$ as a mean term plus stochastic term,
where $Z_{nj}$ is the $j^{th}$ entry in the complex normal 
$\ep\Delta_n^{-1}\sum_i (D_i^* \Psi^* W_i)$.  Then the following bound 
on the $j^{th}$ term in the loss holds:

\begin{align}
|\hat \be_{nj} -  \be_j|^2 
& =   
\IAj|\hat \be_{nj} -  \be_j|^2 + \IAjc|\hat \be_{nj}-\be_j|^2  \nonumber \\
& =  
\IAj \left| \left( 1 - \frac{s_{nj}^2}{|B_{nj}|^2} \right) B_{nj} - \be_j \right|^2
      + \IAjc|\be_j|^2 \nonumber\\
& = 
\IAj \left| Z_{nj} - \left( \frac{s_{nj}^2(\be_j + Z_{nj})}{|\be_j + Z_{nj}|^2} \right) \right|^2
      + \IAjc|\be_j|^2 \label{eq:mainInequalityUniform} \\
& \leq  
\IAj\left( |Z_{nj}| + \frac{s_{nj}^2}{|\be_j + Z_{nj}|}\right)^2
 + \IAjc|\be_j|^2 \nonumber\\
& \leq  
\IAj\left( |Z_{nj}| +  s_{nj}\right)^2
 + \IAjc|\be_j|^2. \nonumber
\end{align}

To show that the expected value of the first term goes to zero in expectation, observe:

\begin{align*}
\mathbb{E}\IAj\left( |Z_{nj}| + s_{nj}\right)^2
& =
\mathbb{E}|Z_{nj}|^2 + 2 s_{nj}\mathbb{E}|Z_{nj}| + s_{nj}^2 \\
& \leq
\mathbb{E}|Z_{nj}|^2 + 2 s_{nj}\sqrt{\mathbb{E}|Z_{nj}|^2} + s_{nj}^2 \\
& =  
\frac{\ep^2}{\Delta_{nj}} + 2 s_{nj}\sqrt{\frac{\ep^2}{\Delta_{nj}}} + s_{nj}^2 \\
& \leq  
\frac{\ep^2}{\Delta_{nj}}\left(1 + 2\Omega_n + \Omega_n^2 \right).
\end{align*}
As $\Omega_n^2 < C < \infty$ for $n$ large enough for some $C$ by assumption (A5),
\begin{equation}
\mathbb{E}\IAj\left( |Z_{nj}| + s_{nj}\right)^2 = O(1/\Delta_{nj})
\label{eq:firstPart}
\end{equation}
uniformly in $\be$.

For the second term, $\IAjc|\be_j|^2$, we need to show 
\begin{equation}
\limsup_{n \rightarrow \infty} \sup_{\be \in \B} \sum_{j=1}^p \mathbb{P}(A_j^c)|\be_j|^2 = 0.
\label{eq:supExpectation}
\end{equation}

First, we compute the eigenvalue matrix $\Lambda_{nj}$ 
of the covariance matrix of $Z_{nj}$ as a vector in $\mathbb{R}^2$.  By
the properties of complex normals\footnote{Technically, this covariance matrix is off
by a constant, but this is not relevant for our current purposes.} 
\[
Z_{nj} \sim N\left( 
\left(
\begin{array}{c}
0 \\
0
\end{array}
\right),
\left(
\begin{array}{ll}
\ep^2/\Delta_{nj} & \Im C_{jj} \\
\Im C_{jj}& \ep^2/\Delta_{nj} 
\end{array}
\right)
\right)
\]
where $C_{jj}$ is the $j^{th}$ diagonal entry of the matrix
$\ep^2\Delta_n^{-1}\sum_i (D_i^* \Psi^* \overline{\Psi} \overline{D_i})\Delta_n^{-1}$.
Hence, the entries in $\Lambda_{nj}$ are $\lambda_{nj,1}^2 = \ep^2/\Delta_{nj} + \Im C_{jj}$
and $\lambda_{nj,2}^2 = \ep^2/\Delta_{nj} - \Im C_{jj}$, which are both
strictly positive. Also, define $U$ to be the associated eigenvector matrix.

Though it is clear that $\mathbb{P}(A_j^c)|\be_j|^2$ goes to zero pointwise, 
the worst $\be_j$ is arbitarily close to zero.  Hence, to show uniform convergence,
we define a parameter $\tau_{nj}^2$. For each $j$, define
$\B_j := \{\be_j : |||\be_j|||^2 \leq \parm \}$ and split this set into
$\B_j = \B_{jn} \cup \B_{jn}^c$, where 
\[
\B_{jn}^c := \{\be : \tau_{nj}^2 \leq |||\be_j|||^2 \leq \parm\}.
\]
Also, as $|||\cdot|||$ is invariant under orthogonal operations, 
we can rotate everything by the eigenvectors $U$.  Denote rotation by
$U$ by a tilde; that is, $\tilde\be_j := U \be_j$.  Then,

\begin{align*}
\sup_{\be \in \B} \sum_{j=1}^p \mathbb{P}(A_j^c)|\be_j|^2 
& \leq \sum_{j=1}^p \sup_{\be_j \in \B_j} \mathbb{P}(A_j^c) |\be_j|^2 \\
& \leq 
\sum_{j=1}^p \max\left\{\sup_{\be_j \in \B_{nj}} \mathbb{P}_{\be_j}(A_j^c)|\be_j|^2
\sup_{\be_j \in \B_{nj}^c} \mathbb{P}(A_j^c)|\be_j|^2 \right\}\\
& \leq 
\sum_{j=1}^p \max\left\{\tau_{nj}^2,
\sup_{\be_j \in \B_{nj}^c} \mathbb{P}(A_j^c)|\be_j|^2 \right\}\\
& =
\sum_{j=1}^p \max\left\{\tau_{nj}^2,
\sup_{\be_j \in \B_{nj}^c} \mathbb{P}(|||U(\be_j + Z_n )|||^2 \leq s_{nj}^2)|||\tilde \be_j|||^2 \right\}. \\
\end{align*}

Then continuing on with the second term in the max,
and using Lemma \ref{lemma:useMaxVar}  with $|||\tilde\be_j||| > s_{nj}$, 
which happens if $\tau_{nj}^2 > s_{nj}^2$,

\begin{align}
& \sup_{\tilde \be_j \in \B_{nj}^c}\mathbb{P}(|||\tilde \be_j + \Lambda_{nj}^{1/2} Z |||^2 \leq s_{nj}^2)|||\tilde \be_j|||^2\nonumber\\
& \leq
\sup_{\tilde \be_j \in \B_{nj}^c} \mathbb{P}(|||\tilde \be_j + \lambda_{\text{max}} Z |||^2 \leq s_{nj}^2)|||\tilde \be_j|||^2\nonumber\\
& \leq
\sup_{\tilde \be_j \in \B_{jn}^c} \left(1 - \Phi\left(|||\tilde \be_j / \lambda_{\text{max}} ||| - 
s_{nj}/\lambda_{\text{max}}\right)\right)|||\tilde \be_j|||^2  \nonumber \\
& =
\sup_{\tau_{nj}^2 \leq u^2 \leq T^2} \left(1 - \Phi\left(1/\lambda_{\text{max}}(u -s_{nj})\right)\right)u^2  \nonumber \\
& =
\sup_{\frac{\tau_{nj}}{\lambda_{\text{max}}} - s_{nj} \leq t \leq \frac{T}{\lambda_{\text{max}}} - s_{nj}} 
(1 - \Phi( t )) (\lambda_{\text{max}}(t + s_{nj}))^2  \nonumber \\
& =
\lambda_{\text{max}}^2 \sup_{\frac{\tau_{nj}}{\lambda_{\text{max}}} - s_{nj} \leq t \leq \frac{T}{\lambda_{\text{max}}} - s_{nj}} 
(1 - \Phi( t )) (t + s_{nj})^2  \nonumber \\
& \leq
\lambda_{\text{max}}^2 \sup_{0 \leq t \leq \infty} 
(1 - \Phi( t )) (t + 1)^2  \qquad \text{for $n$ large enough}\nonumber \\
& \leq 
\lambda_{\text{max}}^2  \nonumber
\end{align} 
The last inquality follows by Figure \ref{fig:proofPlot} and by noting that $(1 - \Phi( t )) (t + 1)^2$ is
continuous, unimodal, and bounded by 1.
\begin{figure}[!h]
\includegraphics[width=2.5in]{./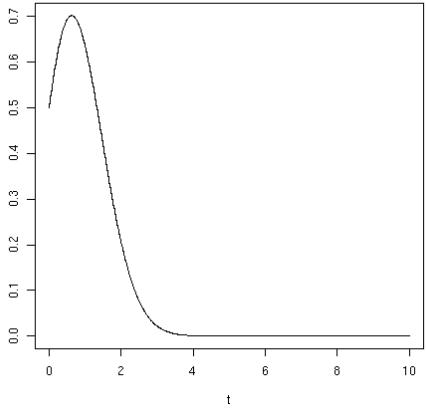}
\caption{Plot of $(1 - \Phi( t )) (t + 1)^2$.}
\label{fig:proofPlot}
\end{figure}

Therefore
\begin{equation}
\sup_{\be \in \B}  \mathbb{P}(A_j^c)|\be_j|^2 = \max\{ \tau_n^2, \lambda_{\text{max}}^2 \}
\end{equation}
Hence, it is sufficient to choose $\tau_{nj}^2 = 2 s_{nj}^2$ and to note that
\[
\lambda_{\text{max}}^2 \asymp s_{nj}^2 \asymp \ep^2/\Delta_{nj}.
\]  
This implies
\begin{equation}
\sup_{\be \in \B}  \mathbb{P}(A_j^c)|\be_j|^2 = O(s_{nj}^2).
\label{eq:supExpectation}
\end{equation}
As we are summing over $j$ in the risk, we conclude that 
\[
\limsup_{n \rightarrow \infty}\sup_{\be \in \B} \, \gamma_n^{-1}
R(\hat \beta_n,\be) < \infty
\]
where 
\[
\gamma_n = \max_j \frac{\ep^2}{\Delta_{nj}}.
\]

\end{proof}

\begin{proof}[Proof of Theorem \ref{thm:mainResult2}]
We use the same notations and conventions as in the proof of Theorem \ref{thm:mainResult1}.
Note that if we define $\sigma_{nj}^2 = \ep^2/\Delta_{nj}$, then the linear oracle risk is 
\begin{equation}
R(\be_*,\be) = \min_{\tilde \be = \lamVec(\mathbf{B}_n)} R(\tilde\be,\be) 
= \sum_{j=1}^p \frac{|\be_j|^2\sigma_{nj}^2}{\sigma_{nj}^2 + |\be_j|^2}
= \sum_{j=1}^p \frac{|\be_j|^2s_{nj}^2}{s_{nj}^2 + \Omega_n^2|\be_j|^2}.
\end{equation}

we bound the $j^{th}$ term in the loss:
\begin{align}
\lefteqn{|\hat\be_j - \be|^2} \nonumber\\ 
& = \IAj \left[ |Z_{nj}|^2 - 
            \frac{\overline{Z_{nj}}s_{nj}^2(\be_j + Z_{nj})}{|\be_j + Z_{nj}|^2}  -
            \frac{Z_{nj} s_{nj}^2\overline{(\be_j + Z_{nj})}}{|\be_j + Z_{nj}|^2}  + 
            \frac{s_{nj}^4}{|\be_j + Z_{nj}|^2}  \right] + \nonumber \\
& \qquad \qquad 
+ \IAjc|\be_j|^2 \nonumber \\
& = \IAj \left[ |Z_{nj}|^2 - 
           \frac{|Z_{nj}|^2s_{nj}^2}{|\be_j + Z_{nj}|^2}  -
           \frac{s_{nj}^2(|Z_{nj}|^2 + \be_j\overline{Z_{nj}} + \overline{\be_j}Z_{nj})}
                {|\be_j + Z_{nj}|^2}  + 
            \frac{s_{nj}^4}{|\be_j + Z_{nj}|^2}  \right] + \nonumber \\
& \qquad \qquad 
+ \IAjc|\be_j|^2 \nonumber \\
& = 
\IAj \left[ |Z_{nj}|^2 - \left(\frac{s_{nj}^2(|\be_j + Z_{nj}|^2 - |\be_j|^2)}
                                          {|\be_j + Z_{nj}|^2} \right) +
               \frac{s_{nj}^4}{|\be_j + Z_{nj}|^2}  \right] +
           \IAjc|\be_j|^2 \nonumber \\
& = 
\IAj \left[ |Z_{nj}|^2 - s_{nj}^2 + \left(\frac{s_{nj}^2|\be_j|^2}
                                          {|\be_j + Z_{nj}|^2} \right) +
               \frac{s_{nj}^4}{|\be_j + Z_{nj}|^2}  \right] +
           \IAjc|\be_j|^2 \nonumber \\
& \leq 
|Z_{nj}|^2 + \IAj\left(\frac{s_{nj}^2|\be_j|^2}{|\be_j + Z_{nj}|^2} \right)+\IAjc|\be_j|^2. \nonumber \\
& =
|Z_{nj}|^2 + 
\IAj\left(\frac{s_{nj}^2|\be_j|^2}{s_{nj}^2 + \Omega_n^2|\be_j|^2} \right)
\left(\frac{s_{nj}^2 + \Omega_n^2|\be_j|^2}{|\be_j + Z_{nj}|^2} \right) 
+ \IAjc|\be_j|^2. \nonumber
\label{eq:main}
\end{align}

By the previous proof, we see that the expected value of the first and third term go to zero
uniformly over $\be \in \B$ at rate $O(1/\Delta_{nj})$; the same rate as the oracle. 
For the second term, notice that 
\[
\IAj\left(\frac{s_{nj}^2 + \Omega_n^2|\be_j|^2}{|\be_j + Z_{nj}|^2} \right)
\leq 
\IAj\left(1 + \frac{\Omega_n^2|\be_j|^2}{|\be_j + Z_{nj}|^2} \right)
\lesssim 
\frac{\IAj|\be_j|^2}{|\be_j + Z_{nj}|^2} 
=: G_{nj}
\]
for $n$ large enough, by assumption (A5). Then our goal is to show that 
\[
\limsup_{n\rightarrow\infty} \sup_{\be\in\B} \mathbb{E}G_{nj} < \infty.
\]

First, due to $G_{nj}$ being rotationally symmetric (once we use $|||\cdot|||$
instead of $|\cdot|$), we
renormalize to transform $Z_{nj}$ into a vector $\tZ$ with independent 
standard normal components
\begin{align*}
G_{nj} & = \IAj\left(\frac{|||\be_j|||^2}{|||\be_j + Z_{nj}|||^2} \right) \\
& = \IAjt\left(\frac{|||U^{\top}\be_j|||^2}{|||U^{\top}\be_j + U^{\top}Z_{nj}|||^2} \right) \\
& = \IAjt\left(\frac{|||\tilde\be_j|||^2}{|||\tilde\be_j + \Lambda_{nj}^{1/2}\tZ|||^2} \right).
\end{align*}
We define $\Lambda_{nj}$ and $U$ in the previous proof
and $\tilde A_{j} := \{||\tilde\be_j + \Lambda_{nj}^{1/2}\tZ||^2 > s_{nj}^2$.

We break bounding $\mathbb{E}G_{nj}$ into cases. \\

\noindent \textbf{Case 1:} $|||\tilde \be_j|||^2 \leq s_{nj}^2$ \\
We see from the definition of $\IAjt$
\begin{equation}
G_{nj} 
\leq 
\IAjt\left(\frac{s_{nj}^2}{|||\tilde \be + \Lambda_{nj}\tZ |||^2} \right)
<
\IAjt\left(\frac{s_{nj}^2}{s_{nj}^2} \right) 
\leq 1
\end{equation}

\noindent \textbf{Case 2:} $|||\tilde \be|||^2 > s_{nj}^2$ \\
Note that by the nonnegativity of $G_{nj}$
\[
\mathbb{E} G_{nj} = \int_0^\infty \mathbb{P}(G_{nj} > \tau)\, d\tau.
\]
As an aside, the random variables considered don't put any positive mass at any points, 
so we don't need to worry about whether the boundaries of integration are included. For $\tau > 0$,
\begin{align*}
\mathbb{P}(G_{nj} > \tau) 
& = 
\mathbb{P}
\left( s_{nj}^2 \leq |||\tilde \be + \Lambda_{nj}^{1/2} \tZ|||^2 < \frac{|||\tilde\be|||^2}{\tau}\right) \\
& = 
\begin{cases}
0  & \tau \geq \frac{|||\tilde \be|||^2}{s_{nj}^2} \\
\mathbb{P}
\left( s_{nj}^2 \leq |||\tilde \be + \Lambda_{nj}^{1/2} \tZ|||^2 < \frac{|||\tilde\be|||^2}{\tau}\right) 
   & \text{o.w.}
\end{cases}
\end{align*}

Therefore, for any $c^2 > 0$,
\begin{align*}
\mathbb{E} G_{nj} 
& = 
\int_0^\infty \mathbb{P}(G_{nj} > \tau) \,d\tau \nonumber\\
& =
\int_0^{c^2}  \mathbb{P}(G_{nj} > \tau) \,d\tau  +  
\int_{c^2}^{\frac{|||\tilde \be|||^2}{s_{nj}^2}} \mathbb{P}(G_{nj} > \tau) \, d\tau \\
& \leq
c^2 + \left(\frac{|||\tilde \be|||^2}{s_{nj}^2}\right) \mathbb{P}(G_{nj} > c^2) \nonumber\\
& \leq
c^2 + \left(\frac{|||\tilde \be|||^2}{s_{nj}^2}\right)
\mathbb{P}
\left( |||\tilde \be + \Lambda_{nj}^{1/2} \tZ|||^2 < \frac{|||\tilde\be|||^2}{c^2}\right)  \nonumber\\
\end{align*}
If $c^2 > 1$, then the mean of
of the random variable $\tilde \be + \Lambda_{nj}^{1/2} \tZ$ will be outside of the circle
centered at zero with radius $||\tilde \be||/c$. Hence, by Lemma \ref{lemma:useMaxVar}
if we define $\lamMax^2  := \max\{\text{diag}(\Lambda_{nj})\}$, then it follows that
\begin{equation}
\mathbb{P}
\left( |||\tilde \be + \Lambda_{nj}^{1/2} \tZ|||^2 < \frac{|||\tilde\be|||^2}{c^2}\right)
\leq
\mathbb{P}
\left( |||\tilde \be + \lamMax\tZ|||^2 < \frac{|||\tilde\be|||^2}{c^2}\right).
\end{equation}
Using this, observe
\begin{align*}
\lefteqn{\left(\frac{|||\tilde \be|||^2}{s_{nj}^2}\right)
\mathbb{P}
\left( |||\tilde \be + \Lambda_{nj}^{1/2} \tZ|||^2 < \frac{|||\tilde\be|||^2}{c^2}\right)} \\
& \leq
\left(\frac{|||\tilde \be|||^2}{s_{nj}^2}\right)
\mathbb{P}
\left( |||\tilde \be/\lamMax + \tZ|||^2 < \frac{|||\tilde\be/\lamMax|||^2}{c^2}\right) \\
& \leq 
\left(\frac{|||\tilde \be|||^2}{s_{nj}^2}\right)
\left(1-\Phi\left( \left(1-\frac{1}{c}\right) |||\tilde \be/\lamMax|||\right)\right) \\
& =
\left(\frac{|||\tilde \be|||^2}{s_{nj}^2}\right)
\left(1-\Phi\left( \left(1-\frac{1}{c}\right) |||\tilde \be/\lamMax|||\right)\right) \\
& =
\left(\frac{(\lamMax t)^2}{s_{nj}^2}\right)
\left(1-\Phi\left( \left(1-\frac{1}{c}\right) t\right)\right) \\
& =
\left(\frac{\lamMax^2}{s_{nj}^2}\right) \left[t^2
\left(1-\Phi\left( \left(1-\frac{1}{c}\right) t\right)\right)\right] \\
\end{align*}
Where we have transformed $t = |||\tilde \be|||/\lamMax$.  Hence, as $s_{nj}^2 \asymp \lamMax^2$
and
\[
\sup_{\frac{s_{nj}}{\lamMax} \leq t \leq \frac{T}{\lamMax}} t^2
\left(1-\Phi\left( \left(1-\frac{1}{c}\right) t\right)\right)
\leq
\sup_{0 \leq t \leq \infty} t^2
\left(1-\Phi\left( \left(1-\frac{1}{c}\right) t\right)\right)
\leq 1
\]
we see that
\[
\left(\frac{|||\tilde \be|||^2}{s_{nj}^2}\right)
\mathbb{P}
\left( |||\tilde \be + \Lambda_{nj}^{1/2} \tZ|||^2 < \frac{|||\tilde\be|||^2}{c^2}\right)
= O(1),
\]
independent of $\be$.  And we conclude that
\[
\mathbb{E} G_{nj} = O(1),
\]
again, independent of $\be$.  This ends the proof.

\end{proof}


\begin{proof}[Proof of Theorem \ref{thm:randomEvalsUniformlyConsistent}]
Observe
\begin{align}
\nonumber \lim_{n\rightarrow \infty} \sup_{\be \in \B} \mathbb{E}_{(D_i),X_n} ||\hat \be - \be||^2 
& = 
\lim_{n\rightarrow \infty}\sup_{\be \in \B}\mathbb{E}_{(D_i)}\mathbb{E}_{X_n|(D_i)}||\hat \be - \be||^2 \\
& \leq 
\lim_{n\rightarrow\infty } \mathbb{E}_{(D_i)} \sup_{\be \in \B} R(\hat\be,\be).
\end{align}

Therefore it suffices to exchange the limit and integral.  To accomplish this we appeal to the
following bound from \eqref{eq:mainInequalityUniform}.
Define $f_n := \sup_{\be \in \B} \mathbb{E}_{X_n|(D_i)} ||\hat\be-\be||^2 $.  Then

\begin{align*}
|f_n| 
& := \sum_{j=1}^p f_j \\
& = \sum_{j=1}^p |\hat \be_j - \be_j|^2 \\
& \leq 
\sup_{\be \in \B} \mathbb{E}_{X_n|(D_i)} \sum_{j=1}^p\left[
\IAj\left( |Z_{nj}| + s_{nj} \right)^2
+ \IAjc|\be_j|^2 \right] \\
& \leq  \sum_{j=1}^p\left(
\frac{\ep^2}{\Delta_{nj}} + 
2s_{nj}\sqrt{\frac{\ep^2}{\Delta_{nj}}} + 
s_{nj}^2 + \parm \right)\\
& \leq  \sum_{j=1}^p \left(
\frac{\ep^2}{\Delta_{nj}}
\left(  
1 + 2\Omega_n^2 + (\Omega_n^2)^2 
\right)+ \parm \right)\\
& =  \sum_{j=1}^p \left(
\frac{\ep^2}{\Delta_{nj}}
(\Omega_n^2+1)^2 + \parm \right)
& =: \sum_{j=1}^p g_j
& =: g_n.\\
\end{align*}

Therefore, if we can exchange the limit and integral, then by the previous two proofs we
can conclude that the limit is zero.  We appeal to the following.  We say
a set of random variables $\{X_t:t\in \mathcal{T}\}$ is {\it uniformly integrable} if
\[
\lim_{x \rightarrow \infty} \sup_{t \in \mathcal{T}} \mathbb{E} |X_t| \mathbf{1}_{|X_t| > x} = 0.
\]
It holds that if $X_t \rightarrow X$ with probability one and $\{X_t:t\in\mathcal{T}\}$
is uniformly integrable, then $\mathbb{E}X_t \rightarrow \mathbb{E}X$.  Hence, we
wish to show that $f_n$ is uniformly integrable.  It holds that if each term in
the sum over $j$ is uniformly integrable, then $f_n$ is uniformly integrable as well.

Note that
\begin{align*}
\mathbb{E} |f_j| \mathbf{1}_{|f_j| > x}
& = 
x\mathbb{P}(f_j > x) + \int_x^\infty \mathbb{P}(f_j > y) dy \\
& \leq x\mathbb{P}(g_j > x) + \int_x^\infty \mathbb{P}(g_j > y) dy
\end{align*}
For large $x$, $x > T^2$ and for large $n$, $\Omega_n \asymp 1$.  Therefore, we only
need deal with the term $\epsilon^2/\Delta_{nj}$.

Using assumption (B4), continuing the above with relevant terms, and noticing that
$\sup_n f_n$ occurs at $n=1$, it follows that for $x$ large enough
\begin{align*}
x\mathbb{P}\left(\frac{1}{|D_{1j}|^2} > x\right) + \int_x^\infty \left(\frac{1}{|D_{1j}|^2} > y\right)dy 
& = 
x \left( \frac{1}{x^{\rho}} \right) + \int_x^\infty \left(\frac{1}{y^{\rho}} \right) dy \\
& = 
\left( \frac{1}{x^{\rho-1}} \right) + \int_x^\infty \left(\frac{1}{y^{\rho}} \right) dy \\
& \rightarrow 0.
\end{align*}
This allows for the exchange of integration end hence shows the desired result.

\end{proof}

\begin{proof}[Proof of Proposition \ref{prop:hatRn}]
We can expand (\ref{eq:risk}) for any $\lamVec(\mathbf{B}_n) \in \mathcal{E}$ as
\begin{equation}
R(\lamVec) := R_{\be}(\lamVec(\mathbf{B}_n)) = 
\sum_{j=1}^p \left[ (\lambda_j - 1)^2 |\be_j|^2 + \frac{\ep^2\lambda_j^2}{\Delta_{nj}} \right].
\label{eq:riskBiasVar}
\end{equation}
To form an estimator of $R$, 
we notice that $\mathbb{E}_{\be_j}(|B_{nj}|^2 - \ep^2/\Delta_{nj}) = |\be_j|^2$.
Hence,
\begin{equation}
\hat R(\lamVec) := 
\sum_{j=1}^p \left[ (\lambda_j - 1)^2\left(|B_{nj}|^2 - \sigmaSqj \right)  + 
\frac{\ep^2\lambda_j^2}{\Delta_{nj}} \right]
\label{eq:riskEstimate}
\end{equation}
is an unbiased estimate of $R(\lamVec)$.  We can make a substitution 
\begin{equation*}
\hat \psi_j := (|B_{nj}|^2 - \ep^2/\Delta_{nj})/|B_{nj}|^2,
\end{equation*}
which produces
\begin{equation}
\hat{R}(\lamVec) = 
\sum_{j=1}^p \left[ (\lambda_j - \hat\psi_j)^2|B_{nj}|^2\right] + 
\ep^2 \sum_{j=1}^p \left(\frac{\hat\psi_j}{\Delta_{nj}}\right).
\label{eq:riskEstimate2}
\end{equation}
Finally, note that the second term in $\hat{R}$ doesn't depend on $\lamVec$,
so it can be ignored for minimization purposes.  Define
\begin{equation}
\hat R_n(\lamVec) := \sum_{j=1}^p (\lambda_j - \hat\psi_j)^2|B_{nj}|^2
\end{equation}
which is proportional to $\hat R(\lamVec)$.  This is our objective function
for formulating estimators.

However, there are some natural restrictions. 
First, define $\hypSq := [0,1]^p$.   If we consider a transformed 
version of (\ref{eq:riskBiasVar})
by making the substitution $\psi_j := |\be_j|^2/(|\be_j|^2 + \Delta_{nj})$, then
\begin{equation}
R(\lamVec)  = 
\sum_{j=1}^p \left[ (\lambda_j - \psi_j)^2 \left(|\be_j|^2 + \sigmaSqj\right)
+  \ep^2 \left(\frac{\psi_j}{\Delta_{nj}}\right) \right].
\label{eq:riskBiasVar2}
\end{equation}
By inspection, the minimizer
of (\ref{eq:riskBiasVar2}) falls in $\hypSq$ as $\psi_j \in [0,1]$ for each $j$.
Hence, we cannot get a lower risk by
considering any more general sets and thus confine our attention to $\lamVec \in \hypSq$.\\
\end{proof}

\begin{proof}[Proof of Theorem \ref{thm:spectraThenAverage}]
Direct computation shows that
\[
R_1 = \min_{\lambda} \sum_{j=1}^p (1 - \lambda_j)^2 |B_j|^2 + \ep^2 \sum_{j=1}^p \frac{\lambda_j^2}{\Delta_{nj}}
\]
and
\[
R_2 = \min_{\lambda} \sum_{j=1}^p (1 - \lambda_j)^2 |B_j|^2 + \frac{\ep^2}{n} \sum_{j=1}^p \frac{\lambda_j^2}{|D_n|_j^2}.
\]
This implies that
\[
R_1 =  \sum_{j=1}^p \frac{\frac{\ep^2}{\Delta_{nj}}|\be_j|^2}{|\be_j|^2 + \frac{\ep^2}{\Delta_{nj}}}
= \sum_{j=1}^p \frac{|\be_j|^2}{\frac{\Delta_{nj}}{\ep^2}|\be_j|^2 + 1}
\]
and
\[
R_2 = \sum_{j=1}^p \frac{\frac{\ep^2}{n|D_n|_j^2}|\be_j|^2}{|\be_j|^2 + \frac{\ep^2}{n|D_n|_j^2}} 
= \sum_{j=1}^p \frac{|\be_j|^2}{\frac{n|D_n|_j^2}{\ep^2}|\be_j|^2 + 1}.
\]
Hence, the results reduces to comparing $\Delta_{nj}$ to $n |D_n|_j^2$.  Note
\[
|D_n|^2 = D_n^*D_n = \frac{1}{n^2} \sum_{i,q} D_i^* D_q
\]
therefore
\[
n |D_n|_j^2 = \frac{1}{n} \sum_{i,q} D_{ij}^*D_{qj}. 
\]
Observe
\begin{align*}
n |D_n|_j^2 - \Delta_{nj} = & \frac{1}{n} \sum_{i,q} D_{ij}^*D_{qj} - \sum_{i=1}^n |D_{ij}|^2 \\
= &  \left(\frac{1}{n} -1\right) \Delta_{nj} + \sum_{i\neq q} D_{ij}^*D_{qj} \\
\leq &  \frac{1}{n}\left( -(n - 1) \Delta_{nj} + \sum_{i\neq q} D_{ij}^*D_{qj}\right) \\
\lesssim &  -\left(n - 1\right) \sum_{i=1}^n |D_{ij}|^2 + \sum_{i\neq q} D_{ij}^*D_{qj} \\
\leq &  -\left(n - 1\right) \sum_{i=1}^n |D_{ij}|^2 + \sum_{i\neq q} (|D_{ij}|^2 + |D_{qj}|^2)/2 \\
= &  -\left(n - 1\right) \sum_{i=1}^n |D_{ij}|^2 + (n-1)\sum_{i=1}^n |D_{ij}|^2 = 0
\end{align*}
where the last inequality follows as $|D_{ij}| |D_{iq}| \leq (|D_{ij}|^2 + |D_{qj}|^2)/2$ by the arithmetic
geometric inequality.
 
\end{proof}

\bibliographystyle{imsart-nameyear}
\bibliography{references}
\end{document}